\newif\ifdraft\draftfalse % draft = comments
\newif\ifanon\anonfalse    % anon = light double-blind reviewing
\newif\iffull\fullfalse   % full = includes things that were cut from
\newif\ifdiff\difffalse % Show diffs, to be included only as auxiliary material
\newif\iflongrefs\longrefsfalse % Long references (e.g. for journal)
\newif\ifbackref\backreffalse % backref option for hyperref
\newif\ifsooner\soonerfalse
\newif\iflater\laterfalse
\newif\ifieee\ieeetrue
\newif\ifcamera\cameratrue % Camera-ready version
\newif\ifappendix\appendixfalse % Appendix
\newif\ifallcites\allcitesfalse
\newif\ifneedspace\needspacetrue
\newif\ifhighlightnewtext\highlightnewtexttrue % Highlight new text for adding new contribution (shepherding)
\newif\ifhighlightnewreviews\highlightnewreviewsfalse % Highlight new text in response to CSF reviews
\newif\ifaggressivecomments\aggressivecommentsfalse % Make comments bold and red
\DeclareMathAlphabet{\mathit}{\encodingdefault}{\familydefault}{m}{it}
\else\copyrightyear{2017}\fi %% If different from \acmYear
\def\@copyrightpermission{\ifcamera\\\\\\\fi This work is licensed under a \href{https://creativecommons.org/licenses/by/4.0/}{Creative Commons Attribution 4.0 International License}}
\newcommand\citepos[1]{\citeauthor{#1}'s\ \cite{#1}}
\definecolor{dkblue}{rgb}{0,0.1,0.5}
\definecolor{dkgreen}{rgb}{0,0.3,0}
\definecolor{dkred}{rgb}{0.6,0,0}
\definecolor{dkpurple}{rgb}{0.7,0,1.0}
\definecolor{purple}{rgb}{0.9,0,1.0}
\definecolor{olive}{rgb}{0.4, 0.4, 0.0}
\definecolor{teal}{rgb}{0.0,0.4,0.4}
\definecolor{azure}{rgb}{0.0, 0.5, 1.0}
\definecolor{gray}{rgb}{0.5, 0.5, 0.5}
\definecolor{dkgray}{rgb}{0.3, 0.3, 0.3}
\definecolor{cbgreen}{RGB}{000,158,115}
\definecolor{cbblue}{RGB}{000,114,178}
\definecolor{cbteal}{RGB}{091,142,253}
\definecolor{cbyellow}{RGB}{240,228,066}
\definecolor{cborange}{RGB}{230,140,000}
\definecolor{cbred}{RGB}{213,094,000}
\definecolor{cbpurple}{RGB}{204,121,167}
\newcommand*{\colSel}[1]{{\color{cbblue} #1}}
\newcommand*{\colU}[1]{{\color{cbred} #1}}
\newcommand*{\colFlexnew}[1]{{\color{cbpurple} #1}}
\newcommand*{\colAll}[1]{{\color{black} #1}}
\def\Snospace~{\S{}}
\def\Nnospace~{}
\patchcmd{\hyper@makecurrent}{%
    \ifx\Hy@param\Hy@chapterstring
        \let\Hy@param\Hy@chapapp
    \fi
}{%
    \iftoggle{inappendix}{%true-branch
        % list the names of all sectioning counters here
        \@checkappendixparam{chapter}%
        \@checkappendixparam{section}%
        \@checkappendixparam{subsection}%
        \@checkappendixparam{subsubsection}%
        \@checkappendixparam{paragraph}%
        \@checkappendixparam{subparagraph}%
    }{}%
}{}{\errmessage{failed to patch}}
\newcommand*{\@checkappendixparam}[1]{%
    \def\@checkappendixparamtmp{#1}%
    \ifx\Hy@param\@checkappendixparamtmp
        \let\Hy@param\Hy@appendixstring
    \fi
}
\apptocmd{\appendix}{\toggletrue{inappendix}}{}{\errmessage{failed to patch}}
\newcommand{\comm}[3]{\ifdraft{{\color{#1}\ifaggressivecomments\bfseries \color{red}\fi[#2: #3]}}\fi}
\newcommand{\ch}[1]{\comm{teal}{CH}{#1}}
\newcommand{\rb}[1]{\comm{orange}{RB}{#1}}
\newcommand{\jb}[1]{\comm{olive}{JB}{#1}}
\newcommand*{\EG}{e.g.,\xspace}
\newcommand*{\IE}{i.e.,\xspace}
\newcommand*{\SLH}{SLH\xspace}
\newcommand*{\iSLH}{i\SLH}
\newcommand*{\vSLH}{v\SLH}
\newcommand*{\SelectiveSLH}{Selective \SLH}
\newcommand*{\SelectivevSLH}{Selective v\SLH}
\newcommand*{\SelSLH}{SSLH\xspace} % CH: too vague, please avoid or just spell out where really needed
\newcommand*{\SeliSLH}{SiSLH\xspace}
\newcommand*{\SelvSLH}{SvSLH\xspace}
\newcommand*{\UltimateSLH}{Ultimate \SLH}
\newcommand*{\USLH}{USLH\xspace}
\newcommand*{\FlexibleSLH}{Flexible \SLH}
\newcommand*{\FlexSLH}{FSLH\xspace}
\newcommand*{\FlexiSLH}{FiSLH\xspace}
\newcommand*{\FlexvSLH}{FvSLH\xspace}
\newcommand*{\FsFlexvSLH}{\texorpdfstring{FvSLH$^\forall$}{FvSLH∀}\xspace} % Avoids hyperref warnings about disallowed tokens, and makes PDF document outline more readable
\newcommand*{\mathFsFlexvSLH}{FvSLH^{\forall}}
\newcommand*{\SourceLang}{\textsc{AWhile}\xspace}
\newcommand*{\CCT}{CCT\xspace}
\newcommand*{\ObsEq}{observational equivalence\xspace}
\newcommand*{\BoolTrue}{\mathbb{T}}
\newcommand*{\BoolFalse}{\mathbb{F}}
\newcommand*{\LabelSecret}{\BoolFalse}
\newcommand*{\LabelPublic}{\BoolTrue}
\newcommand*{\PubVars}{\textit{P}}
\newcommand*{\PubArrs}{\textit{PA}}
\newcommand*{\mathpc}{\textit{pc}}
\newcommand*{\cexpr}[1]{\mathit{#1}}
\newcommand*{\cvar}[1]{\texttt{#1}}
\newcommand*{\annotated}[1]{\overline{#1}}
\newcommand*{\cskip}{\texttt{skip}}
\newcommand*{\casgn}[2]{#1 ~ \texttt{:=} ~ #2}
\newcommand*{\cseq}[2]{#1 \texttt{;} ~ #2}
\newcommand*{\caread}[3]{#1 \leftarrow #2 \texttt{[} #3 \texttt{]}}
\newcommand*{\cawrite}[3]{#1 \texttt{[} #2 \texttt{]} \leftarrow #3}
\newcommand*{\acseq}[4]{#1 \texttt{;}_{@( #3 , #4 )} ~ #2}
\newcommand*{\acaread}[5]{#1_{@ #4} \leftarrow #2 \texttt{[} #3_{@ #5}\texttt{]}}
\newcommand*{\acawrite}[4]{#1 \texttt{[} #2_{@ #4} \texttt{]} \leftarrow #3}
\newcommand*{\acbranch}[2]{\texttt{branch} ~ #1 ~ #2}
\newcommand*{\cifs}[3]{\texttt{if} ~ #1 ~ \texttt{then} ~ #2 ~ \texttt{else} ~ #3}
\newcommand*{\cifsnoelse}[2]{\texttt{if} ~ #1 ~ \texttt{then} ~ #2}
\newcommand*{\cwhiles}[2]{\texttt{while} ~ #1 ~ \texttt{do} ~ #2}
\newcommand*{\acifs}[4]{\texttt{if} ~ #1_{@ #4} ~ \texttt{then} ~ #2 ~ \texttt{else} ~ #3}
\newcommand*{\acwhiles}[5]{\texttt{while} ~ #1_{@ #3} ~ \texttt{do} ~ #2_{@ (#4, #5)}}
\newcommand*{\ccond}[3]{#1 ~ \texttt{?} ~ #2 ~ \texttt{:} ~ #3}
\newcommand*{\ccondtight}[3]{#1 \texttt{?} #2 \texttt{:} #3}
\newcommand*{\wtifcAux}[4]{#1; #2 \vdash_{#3} #4}
\newcommand*{\wtifc}[2]{\wtifcAux{\PubVars}{\PubArrs}{#1}{#2}}
\newcommand*{\wtct}[1]{\PubVars; \PubArrs \vdash #1}
\newcommand*{\wlacom}[6]{#2 , #3 \leadsto #5, #6 \vdash_{#4} #1}
\newcommand*{\pcofacom}[2]{\textit{pc-after}\, #1 \, #2}
\newcommand*{\branchfree}[1]{\textit{branch-free}\, #1}
\newcommand*{\transl}[1]{\llparenthesis ~ #1 ~ \rrparenthesis}
\newcommand*{\translAux}[3]{\transl{#1}^{\mathit{#2}}_{#3}}
\newcommand*{\transls}[2]{\translAux{#1}{#2}{\PubVars}}
\newcommand*{\translFsFlexvSLH}[1]{\translAux{#1}{\mathFsFlexvSLH}{}}
\newcommand*{\flowtrack}[4]{\langle\!\langle #1 \rangle\!\rangle^{#2,#3}_{#4}}
\newcommand*{\eval}[2]{\llbracket #1 \rrbracket_{#2}}
\newcommand*{\seqstate}[3]{\langle #1,#2,#3\rangle}
\newcommand*{\specstate}[4]{\langle #1,#2,#3,#4\rangle}
\newcommand*{\SeqEval}[7]{\seqstate{#1}{#2}{#3} \seqstep{#7} \seqstate{#4}{#5}{#6}}
\newcommand*{\SpecEval}[1]{
  \def\SpecEvalTemp{{#1}}
  \SpecEvalAux
}
\newcommand*{\SpecEvalAux}[9]{\specstate{\SpecEvalTemp}{#1}{#2}{#3} \specstep{#8}{#9} \specstate{#4}{#5}{#6}{#7}}
\newcommand*{\IdealEval}[1]{
  \def\IdealEvalTemp{{#1}}
  \IdealEvalAux
}
\newcommand*{\IdealEvalAux}[9]{\idealstep{\specstate{\IdealEvalTemp}{#1}{#2}{#3}}{#8}{#9}{\specstate{#4}{#5}{#6}{#7}}}
\newcommand*{\fsidealstate}[7]{\langle #1, #2, #3, #4, #5, #6, #7 \rangle}
\newcommand*{\step}{\textit{step}}
\newcommand*{\branch}[1]{\textit{branch}\,#1}
\newcommand*{\force}{\textit{force}}
\newcommand*{\OARead}[2]{\textit{read}\,#1\,#2}
\newcommand*{\OAWrite}[2]{\textit{write}\,#1\,#2}
\newcommand*{\DLoad}[2]{\textit{load}\,#1\,#2}
\newcommand*{\DStore}[2]{\textit{store}\,#1\,#2}
\newcommand*{\ScalarState}{\rho}
\newcommand*{\ArrayState}{\mu}
\newcommand*{\SemSpecFlag}{b}
\newcommand*{\Obss}{\mathcal{O}}
\newcommand*{\Dirs}{\mathcal{D}}
\newcommand*{\subst}[3]{[#1 \mapsto #2] #3}
\newcommand*{\lookup}[2]{#1 ( #2 )}
\newcommand*{\UsedVars}[1]{\mathit{VARS} ( #1 )}
\newcommand*{\natofbool}[1]{\llbracket #1 \rrbracket_{\mathbb{N}}}
\newcommand*{\seqstep}[1]{\xrightarrow{#1}}
\newcommand*{\seqmulti}[1]{\xrightarrow{#1}{\hspace{-0.5em}}{}^*}
\newcommand*{\specstep}[2]{\xrightarrow[#2]{#1}\hspace{-0.5em}{}_s}
\newcommand*{\specmulti}[2]{\xrightarrow[#2]{#1}{\hspace{-0.5em}}{}_s^*}
\newcommand*{\idealsteparrow}[2]{\xrightarrow[#2]{#1}{\hspace{-0.5em}}{}_i}
\newcommand*{\idealmultiarrow}[2]{\xrightarrow[#2]{#1}{\hspace{-0.5em}}{}_i^*}
\newcommand{\specobseq}{\approx_\mathit{s}}
\newcommand{\seqobseq}{\approx}
\newcommand*{\idealstep}[4]{#1 \xrightarrow[#3]{#2}\hspace{-0.5em}{}_i #4}
\newcommand*{\idealmulti}[4]{#1 \xrightarrow[#3]{#2}\hspace{-0.5em}{}_i^* #4}
\newcommand*{\seqmultis}[3]{#1 \seqmulti{#2} #3}
\newcommand*{\labelarith}[2]{#1(#2)} % CH: dropped _{\mathbb{N}} (too pedantic)
\newcommand*{\labelbool}[2]{#1(#2)} % CH: dropped _{\mathbb{B}}  (too pedantic)
\newcommand*{\ArithLabel}[1]{\PubVars(#1)}
\newcommand*{\BoolLabel}[1]{\PubVars(#1)}
\newcommand*{\ValueCheck}[2]{\mathit{VC}(#1, #2)}
\renewcommand*{\eqref}[1]{(\ref{#1})}
\newcommand{\newreviews}[1]{\ifhighlightnewreviews{\color{dkblue}#1}\else#1\fi}
\newcommand*{\highlight}[1]{{\color{cbred} #1}}
\newcommand*{\colHighlight}[1]{{\color{cbred} #1}}
\theoremstyle{definition}
\newtheorem{example}{Example}
\newtheorem{definition}{Definition}
\newtheorem{theorem}{Theorem}
\newtheorem{lemma}{Lemma}
\newtheorem{listing}{Listing}
\ifieee\pagestyle{plain}\else\pagestyle{standardpagestyle}\fi
\newcommand{\titleString}{FSLH: Flexible Mechanized Speculative Load Hardening}
\title{\titleString}
\title{\huge\bf{\titleString}
% \\[0.5ex] \LARGE \subtitleString
\ifneedspace\ifieee\ifanon\vspace{-1.5em}\else\vspace{-0.5em}\fi\fi\else\ifanon\vspace{2em}\fi\fi}
\title{\titleString}
\author{}
\author{
  Jonathan Baumann\textsuperscript{1,2} \quad % 0009-0007-2597-840X
  Roberto Blanco\textsuperscript{1,3} \quad % 0000-0002-1017-0391
  L\'eon Ducruet\textsuperscript{1,4} \quad
  Sebastian Harwig\textsuperscript{1,5} \quad % 0009-0006-9929-2120
  C\u{a}t\u{a}lin Hri\cb{t}cu\textsuperscript{1} \quad % 0000-0001-8919-8081
\\[0.5em]
{\small
  \textsuperscript{1}MPI-SP,\ifcamera\else{} Bochum,\fi{} Germany\quad
  \textsuperscript{2}ENS Paris-Saclay, France\quad
  \textsuperscript{3}TU/e, Netherlands\quad
  \textsuperscript{4}ENS Lyon, France\quad
  \textsuperscript{5}Ruhr University Bochum, Germany}\\[0em]
}
\author{AUTHOR1}
\affiliation{
  \ifcamera\institution{AFF1}\city{CITY}\country{COUNTRY}
  \else\institution{AFF1}\fi}
\email{EMAIL}
\renewcommand{\@shortauthors}{SHORTAUTHORS}
\begin{document}

%% Paper note
%% The \thanks command may be used to create a "paper note" ---
%% similar to a title note or an author note, but not explicitly
%% associated with a particular element.  It will appear immediately
%% above the permission/copyright statement.
%\thanks{with paper note}                %% \thanks is optional
                                        %% can be repeated if necessary
                                        %% contents suppressed with 'anonymous'

\ifieee\maketitle\fi

%% Abstract
%% Note: \begin{abstract}...\end{abstract} environment must come
%% before \maketitle command
%%%%%%%%%%%%%%%%%%%%%%%%%%%%%%%%%%%%%%%%%%%%%%%%%%%%%%%%%%%%%%%%%%%%%%%%%%%%%%%%

\begin{abstract}
The Spectre speculative side-channel attacks pose formidable threats for security.
Research has shown that code following the cryptographic constant-time discipline can be efficiently protected against Spectre v1 using a selective variant of Speculative Load Hardening (SLH).
SLH was, however, not strong enough for protecting non-cryptographic code, leading to the introduction of Ultimate SLH, which provides protection for arbitrary programs, but has too large overhead for general use, since it conservatively assumes that all data is secret.
In this paper we introduce a flexible SLH notion that achieves the best of both worlds by generalizing both Selective and Ultimate SLH.
We give a suitable security definition for such transformations protecting arbitrary programs: any transformed program running with speculation should not leak more than what the source program leaks sequentially.
We formally prove using the Rocq prover that two flexible SLH variants enforce this relative security guarantee.
As easy corollaries we also obtain that, in our setting, Ultimate SLH enforces our relative security notion, and two selective SLH variants enforce speculative constant-time security.

\end{abstract}

\ifieee
\renewcommand\IEEEkeywordsname{Keywords}
\begin{IEEEkeywords}
side-channel attacks, speculative execution, Spectre, secure compilation,
speculative load hardening, speculative constant time, relative security,
formal verification, Rocq, Coq
\end{IEEEkeywords}
\fi

\section{Introduction}
\label{sec:intro}

% \paragraph{Context and Gap:}
% See my Amazon proposal and Jonathan's internship topic
%
% Selective SLH is fast but only protects cryptographic constant time programs
%
% Ultimate SLH provides protection for arbitrary programs (relative security)
% but it's slow, since it indiscriminately assumes everything is secret.
%
%
Speculative side-channel attacks such as Spectre pose formidable threats for the
security of computer systems~\cite{CanellaBSLBOPEG19, McIlroySTTV19}.
%
% While side-channels are better studied in the context of cryptographic code [5,
% 8], speculative execution attacks can be equally devastating for operating
% systems, hypervisors, browsers, cloud computing, etc.
%
For instance, in typical Spectre v1 attacks~\cite{KocherHFGGHHLM019},
misspeculated array bounds checks cause out-of-bounds memory accesses to
inadvertently load and reveal secrets via timing variations.
SLH~\cite{Carruth18} is a software countermeasure against such attacks
that dynamically tracks whether execution is in a
mispredicted branch using a misspeculation flag register.

It is, however, challenging to build software protections
% that eliminate entire classes of speculative side-channel attacks and that
that are both efficient and
that provide formal end-to-end security guarantees
against precisely specified, speculative side-channel attacker
models~\cite{CauligiDMBS22}. % CH: This one is not speculative BartheBGHLPT20
Cryptography researchers are leading the way in this space, with defenses such
as {\em selective} SLH efficiently achieving speculative constant-time
guarantees against Spectre v1 for cryptographic code with typical overheads
under 1\%~\cite{ShivakumarBBCCGOSSY23, ShivakumarBGLOPST23}.
% and with prototype compilers formally enforcing and preserving speculative
% constant time~\cite{OlmosBBGL24}.
%
This work is, however, specialized to only cryptographic code and
often also to domain-specific languages for cryptography,
such as Jasmin~\cite{AlmeidaBBBGLOPS17}.

It is more difficult to properly protect arbitrary programs written in
general-purpose languages, which in particular do not obey the cryptographic
constant-time discipline~\cite{AlmeidaBBDE16,CauligiSJBWRGBJ19}.
SLH was not strong enough for protecting such non-cryptographic
code~\cite{PatrignaniG21}, leading to the introduction of Ultimate
SLH~\cite{PatrignaniG21, ZhangBCSY23}, which uses the misspeculation flag
to mask not only the values loaded from memory, but also
all branch conditions, all memory addresses,
the operands of all non-constant time operations, etc.
While this should in principle be strong enough to achieve a relative security
notion~\cite{CauligiDMBS22, ZhangBCSY23, PatrignaniG21},
it also brings $\sim$150\% overhead on the SPEC benchmarks~\cite{ZhangBCSY23},
which seems unacceptable for many practical scenarios.

In this paper we introduce FSLH, a flexible SLH notion that achieves the best of
both worlds by generalizing both Selective and Ultimate SLH.
Like Selective SLH, FSLH keeps track of which program inputs are secret and
which ones not and only protects those memory operations that could potentially
leak secret inputs when ran speculatively.
Like Ultimate SLH, FSLH also provides protection for non-cryptographic code that
does not respect the constant-time discipline, but does this by only masking
those observable expressions % \jb{maybe instead: operations?} -- CH: are branch conditions operations?
(\EG branch conditions)
% and memory addresses -- CH/JB: confusing since we don't do this in this paper
that are potentially influenced by secrets.

\paragraph{Contributions}
\begin{itemize}[leftmargin=*,nosep,label=$\blacktriangleright$]
\item We introduce FSLH, a Flexible SLH notion generalizing both
  Selective SLH and Ultimate SLH.
  We investigate this idea in the context of a simple while language
  with arrays by studying two variants: (1) {\em Flexible value SLH} (FvSLH),
  which uses the misspeculation flag as a mask to erase the
  {\em value} of misspeculated array loads~\cite{ShivakumarBBCCGOSSY23, ShivakumarBGLOPST23}; and (2)
  {\em Flexible index SLH} (FiSLH), which masks {\em array indices}.
% ~\cite[Appendix A]{ZhangBCSY23}. -- too confusing, different paper versions have different appendices
%
\item We prove formally in the Rocq prover\footnote{The
  \href{https://rocq-prover.org/}{Rocq interactive theorem prover} was previously
  known as Coq.} that FvSLH and FiSLH can each be instantiated to recover both the
  corresponding Selective SLH variant and Ultimate SLH.
  The connection to Selective SLH guarantees that code following the constant-time
  discipline pays no extra performance penalty under Flexible SLH compared to Selective SLH.
  The connection to Ultimate SLH states that, in our simple setting, if all data is labeled as secret
  then the program receives the same protection as with Ultimate SLH.

\item We give a suitable relative security definition for transformations
  protecting arbitrary programs, like FSLH and Ultimate SLH: any
  transformed program running with speculation should not leak more than what
  the source program leaks sequentially (so without speculation).

\item We prove in Rocq that our two flexible SLH variants, FvSLH and FiSLH,
  enforce this relative security notion.

\item As a corollary we obtain in Rocq that, in our setting, Ultimate SLH also enforces
  relative security, which is the first machine-checked security proof for
  Ultimate SLH~\cite{PatrignaniG21,ZhangBCSY23}.

% as far as we know, the first thorough security
% proof for Ultimate SLH~\cite{ZhangBCSY23}.\ch{It's actually more complicated,
%   since \citet{PatrignaniG21} also had some ``proofs'', including for
%   completely wrong versions, so their ``proofs'' don't seem worth much.}
% We actually obtain two proofs for this corollary, but whatever?

\item Other easy corollaries we obtain in Rocq are that Selective vSLH and
  Selective iSLH enforce speculative constant-time security, whereas previously
  only Selective vSLH was proved secure and only on
  paper~\cite{ShivakumarBBCCGOSSY23}.
  For Selective iSLH we give a counterexample showing that masking the
  indices of array stores of secret values is required for security.

\item For simplicity, the two FSLH variants above make use of a
  standard information flow control (IFC) type system~\cite{VolpanoIS96}
  with a fixed assignment of variables as public or secret and that rejects some
  source programs as ill-typed.  We show that this is not a
  conceptual limitation by devising \FsFlexvSLH, which instead of the type
  system uses a flow-sensitive IFC analysis that accepts arbitrary source
  programs~\cite{HuntS06}. We prove in Rocq that \FsFlexvSLH enforces relative security {\em
    for all} source programs, the same as Ultimate SLH.
\end{itemize}

\paragraph{Outline} \autoref{sec:background} presents required background,
after which \autoref{sec:key-ideas} introduces our key ideas on FiSLH.
\autoref{sec:defs} defines our simple imperative language and its sequential and
speculative semantics.
\autoref{sec:formal-results} describes the proofs of our formal results for FiSLH.
\autoref{sec:FvSLH} outlines the analogous results for FvSLH,
while \autoref{sec:fsfvslh} presents the more interesting proofs for \FsFlexvSLH.
\autoref{sec:related-work} describes related work and 
\autoref{sec:future-work} concludes with future work.
\ifappendix
The appendices include additional technical details.
\else
An extended version of the paper includes appendices with additional technical
details and is available at \url{https://arxiv.org/abs/2502.03203}
\fi

\paragraph{Artifact} All the results of this paper have been fully formalized in
the Rocq proof assistant.
% The core definitions and auxiliary lemmas for sequential and speculative semantics
% take $\sim$800 lines of code, and the main development has $\sim$2000 lines of code.
The Rocq development leading to our main theorems has $\sim$4300 lines of code
and is available at \url{https://github.com/secure-compilation/fslh-rocq}

\section{Background}
\label{sec:background}

\subsection{Speculative Execution Attacks}
\label{sec:spectre}

Modern processors implement a variety of hardware mechanisms, such as caches, to
get software to run faster.
% CH: pedagogical, but not needed here
% Among these, cache hierarchies are essential for performance.
% programs to have access to large (but slow) memories, while being
% able to retrieve frequently used data from fast (but small) caches.
Caches are part of the internal
microarchitectural state of the CPU: they improve performance, but are
not exposed to the programmer through the instruction set architecture (ISA) and
thus have no bearing on the result of any computations.
% CH: moved the text here, but it already seems a too long detour
% From the point of view of performance, the aim of the processor is to
% always be doing useful work; caches in particular seek to minimize the
% time spent waiting for costly memory accesses.
%
%% These caches and their behavior are not part of the ISA that
%% specifies the computational behavior of a CPU.
Yet their effects on performance can be detected by measuring the execution time
of a program, which may leak information about its data.
% \ch{again via timing; let's please more explicitly focus only on timing}
Timing side-channel attacks exploiting
microarchitectural leakage are a common concern for cryptographic code
and other secret-manipulating programs.
%
% \rb{I think some discussion of microarch state is relevant because it
% connects the discussion below to reality. Could foreshadow CCT.}

%% State that has no bearing on the results of computations

%% Exploit locality

\emph{Speculative execution} is another technique designed to increase
performance by executing instructions
that are likely, even if not certain, to be needed by a program. For
example, the condition of a branch instruction could depend on slow memory
accesses or the results of instructions that have not finished executing.
Instead of waiting, a processor can use a branch predictor to
guess an outcome for the condition and immediately start executing
instructions from the chosen branch. When the result of the condition
is later known, the processor commits the effects of the speculative
path if the guess was correct, or reverts them and starts executing
the correct path if the guess was wrong.
This rollback ensures that misspeculation has no effect
on % the programmer-visible architectural
ISA state, but does not undo changes to the microarchitectural state, like caches.

This opens the door to \emph{speculative leaks} via timing side
channels that would not exist if there was no misspeculation. Even
more, % because elements like branch predictors are shared between
% processes, a collocated \ch{It's true that poisoning caches
% is easier when one is collocated, but it's definitely not a prerequisite.}
an attacker can train the predictors to misspeculate (directly if collocated, or
even indirectly by causing the program's own code to be cleverly
invoked~\cite{SchwarzSLMG19}), which is the essence of Spectre attacks.
A Spectre v1 attack~\cite{KocherHFGGHHLM019}
(aka Spectre-PHT~\cite{CanellaBSLBOPEG19})
% \rb{which one should we use throughout?} CH: v1, see abstract and intro
targets the branch predictor
% \IE the pattern history table, -- CH: whatever! zero relevance
to steer a target program down certain speculative paths that
create timing leakage.
% which can then be retrieved by inspecting
% its effects on microarchitectural state. -- CH: please stop assuming collocation
%
This is the classic speculative execution attack:

%% Spectre v1 attack is a type of side channel attack that exploits

%% Microarchitectural state\ch{relevant for us?} and observability

\begin{listing}[Spectre v1 gadget]
\label{ex:v1std}

\begin{align*}
&~\cifsnoelse{\cvar{i < a$_1$\_size}}
  {\\&\quad~\cseq
   {\caread{\cvar{j}}{\cvar{a}_1}{\cvar{i}}}
   {\\&\quad~ \caread{\cvar{x}}{\cvar{a}_2}{\cvar{j}}&}
  }
\end{align*}
\end{listing}

% \ch{The $\cvar{J * CACHE\_LINE\_SIZE}$ part above is completely artificial.  No
%   developer writes that in their code, except if they are trying their best to
%   get attacked and leak their own secrets. I would go with b[j] or if j then ...}

% \ch{Much less important, but why are we using all caps for some of the variable
%   names like I or J or even ARR_SIZE?} CH: fixed

This code indexes a public array $\cvar{a}_1$ over an untrusted input
$\cvar{i}$. A conditional validates that the input is in the right
range, and only in this case accesses $\cvar{a}_1$ at that position. It
then uses the result to index a read from a second array $\cvar{a}_2$.
The problem is that prior to invoking this code, an attacker can train
the branch predictor to guess that the \texttt{then} branch will be
taken. The attacker can then run the program with an out-of-bounds input $\cvar{i}$
that actually points at secret data, and
although the \texttt{then} branch should not be taken, the processor can
start running the instructions on the \texttt{then} branch before the
actual value of the branch condition is resolved. The first array
access loads out-of-bounds secret data into $\cvar{j}$, and
the second access loads from $\cvar{a}_2$ at an address that depends
on\jb{not sure if we should leave this, or change it to "at index j"}\ch{address seems fine to me here}
that $\cvar{j}$.
While the processor will course-correct as soon as it detects its
misspeculation, the access to $\cvar{a}_2\cvar{[j]}$ leaves a footprint in the
cache, so the attacker can measure timing to infer the value of $\cvar{j}$,
which contains speculatively loaded secret data.
% analyze it to infer the
% % potentially secret
% value of $\cvar{j}$.
% \ch{Again assuming a collocated attacker that can more easily mount cache
%   attacks, but collocation is not strictly speaking needed for Spectre attacks.}

\subsection{Speculative Load Hardening}

Speculative Load Hardening~\cite{Carruth18} (\SLH) is a family of software-based
countermeasures against Spectre v1 attacks.
% originally proposed and implemented by LLVM.
% It seeks to prevent secret data from leaking to architectural state during
% speculative execution.\ch{Didn't get this ``to architectural state'' part.}
% CH: anyway, this doesn't seem to add much (just confusion)
\SLH is a program transformation that combines two ingredients:

\begin{enumerate}

\item a mechanism to keep track of misspeculation, and

\item a means of preventing speculative leaks.

\end{enumerate}

The first is shared by all \SLH variants. The core transformation
from \autoref{fig:aslh-template} shows how
to maintain a \emph{misspeculation flag} tracking whether
any branch was mispredicted along the current control path at any given time.
This is possible because all modern architectures have support for
implementing control flow conditions using branchless and unpredicted
instructions, \EG conditional moves in x86 assembly.
%% \ch{Relate this to Figure 1,
%%   by explaining that we assume that b?e1:e2 is implemented this way.}

We can easily implement such a flag in a language that uses those
instructions to provide a constant-time conditional, like the
$\cexpr{be} \texttt{ ? } \cexpr{e_1} \texttt{ : } \cexpr{e_2}$ operator
in \autoref{fig:aslh-template}, which evaluates $\cexpr{e_1}$ if
$\cexpr{be}$ is true, and $\cexpr{e_2}$ if it is false.
%% exposes these
%% instructions to the programmer.\ch{No, we don't assume that cmove is exposed,
%%   just that it's used to implement b?e1:e2}
%
We assume the variable $\cvar{b}$ is reserved for the flag and unused by
programs, and we ensure that its value is 1 if the processor is
misspeculating, and 0 if it is not.
% (and therefore its initial value). -- CH: breaking the flow
%
The transformation applies recursively to the structure of
commands. Ignoring for now the helpers that parameterize the
scheme ($\llbracket \cdot \rrbracket_{\mathbb{B}}$,
$\llbracket \cdot \rrbracket_{\mathit{rd}}$ and
$\llbracket \cdot \rrbracket_{\mathit{wr}}$), the interesting cases
are the two branching commands: conditionals and loops.

In both cases there are two paths: one where the branch condition is
true % (resp. the \texttt{then} branch and the body of the loop)
and one where it is false.
% (resp. the \texttt{else} branch and the command following the loop).
%
%% \begin{enumerate}
%% \item
% In the first case,
For instance,
if the branch condition is true, the command $\cexpr{c_1}$ in the \text{then}
branch (resp.  in the loop body)
works assuming that $\llbracket \cexpr{be} \rrbracket_{\mathbb{B}}$ is true. We use a
constant-time conditional
%% \ch{good, but then no need to introduce
%%   confusion earlier about exposing cmoves and whatnot. An we should also
%%   explain the syntax of this in the figure: b?e1:e2.}
to evaluate the same condition and
determine if the processor is misspeculating. If it is not, the
condition will indeed evaluate to true, and we will leave the
speculation flag unchanged (\IE no new misspeculation has
occurred). If it is misspeculating, the condition will evaluate to false, and we will
% use branchless logic to -- CH: we already branched though
set the flag to 1. This allows the program to
detect misspeculation and protect itself accordingly.

The second ingredient of the transformation is using the misspeculation
flag to prevent speculative leaks.
\autoref{fig:aslh-template} provides
a generic template for index SLH,\footnote{We choose the name {\em index} SLH to
  distinguish it from SLH variants that protect lower-level {\em
    addresses}~\cite{ZhangBCSY23, Carruth18}, instead of high-level {\em array indices}.}
where the indices of array
loads and stores, as well as the branch conditions can be sanitized in
different ways, resulting in different variants, each
with different trade-offs between their degree of protection against
speculative side channels and their performance cost.

Let's first consider a basic \iSLH transformation, which
leaves branch conditions unchanged, \IE
$\llbracket \cexpr{be} \rrbracket_{\mathbb{B}} \doteq \cexpr{be}$, and
protects the indices of all array accesses based on the speculation flag.
If $\cvar{b}$ contains the speculation flag and $\cvar{i}$ is the
index we intend to access,
%% that could be
%% invalid due to misspeculation, as in \autoref{ex:v1std},
we can use the constant-time conditional expression
$\ccond{\cvar{b==1}}{\cvar{0}}{\cvar{i}}$ to protect it: when the
processor is misspeculating (as indicated by the set flag $\cvar{b==1}$) the index is
zeroed and the access protected;
%% and prevents its observation by the attacker
when the processor does not misspeculate, the access is left unchanged.
That is, $\llbracket \cexpr{i} \rrbracket_{\mathit{rd}}^{\cvar{X}}
\doteq \llbracket \cexpr{i} \rrbracket_{\mathit{wr}}^{\cexpr{e}}
\doteq \ccond{\cvar{b==1}}{\cvar{0}}{\cexpr{i}}$.
%
%% In \SaSLH, indices is masked when reading data to a public variable
%% and when writing to an address that depends on secret data (in both
%% cases to prevent secrets from leaking to an attacker).
%
%% The original \SLH mitigation protects the addresses of loads,\ch{I'm quite
%%   skeptical about calling aSLH ``the original \SLH''. On what is this claim
%%   based, which to me seems surprising?}
%% \IE the indices of array reads in the example, and other variations have since
%% been proposed in the literature.

\begin{figure}
\centering
\begin{align*}
\transl{\cskip} &\doteq \cskip \\
\transl{\casgn{\cvar{x}}{\cexpr{e}}} &\doteq \casgn{\cvar{x}}{\cexpr{e}} \\
\transl{\cseq{\cexpr{c_1}}{\cexpr{c_2}}} &\doteq \cseq{\transl{\cexpr{c_1}}}{\transl{\cexpr{c_2}}} \\
%% \transl{\cifs{\cexpr{be}}{\cexpr{c_1}}{\cexpr{c_2}}} &\doteq \cif{\llbracket \cexpr{be} \rrbracket_{\mathbb{B}}}{\cseq{\casgn{\cvar{b}}{\ccond{\llbracket \cexpr{be} \rrbracket_{\mathbb{B}}}{\cvar{b}}{\cvar{1}}}}{\transl{\cexpr{c_1}}}}{\cseq{\casgn{\cvar{b}}{\ccond{\llbracket \cexpr{be} \rrbracket_{\mathbb{B}}}{\cvar{1}}{\cvar{b}}}}{\transl{\cexpr{c_2}}}} \\
\transl{\cifs{\cexpr{be}}{\cexpr{c_1}}{\cexpr{c_2}}} &\doteq \texttt{if }{\llbracket \cexpr{be} \rrbracket_{\mathbb{B}}} \\
\phantom{\transl{\cskip}} &\phantom{\doteq}\hspace{0.5em} \texttt{then }{\cseq{\casgn{\cvar{b}}{\ccondtight{\llbracket \cexpr{be} \rrbracket_{\mathbb{B}}}{\cvar{b}}{\cvar{1}}}}{\transl{\cexpr{c_1}}}} \\
\phantom{\transl{\cskip}} &\phantom{\doteq}\hspace{0.5em} \texttt{else }{\cseq{\casgn{\cvar{b}}{\ccondtight{\llbracket \cexpr{be} \rrbracket_{\mathbb{B}}}{\cvar{1}}{\cvar{b}}}}{\transl{\cexpr{c_2}}}} \\
%% \transl{\cwhiles{\cexpr{be}}{\cexpr{c}}} &\doteq \cseq{\cwhile{\llbracket \cexpr{be} \rrbracket_{\mathbb{B}}}{\cseq{\casgn{\cvar{b}}{\ccond{\llbracket \cexpr{be} \rrbracket_{\mathbb{B}}}{\cvar{b}}{\cvar{1}}}}{\transl{\cexpr{c}}}}}{\casgn{\cvar{b}}{\ccond{\llbracket \cexpr{be} \rrbracket_{\mathbb{B}}}{\cvar{1}}{\cvar{b}}}} \\
\transl{\cwhiles{\cexpr{be}}{\cexpr{c}}} &\doteq \texttt{while }{\llbracket \cexpr{be} \rrbracket_{\mathbb{B}}}\texttt{ do} \\
\phantom{\transl{\cskip}} &\phantom{\doteq}\hspace{1.5em} {\cseq{\casgn{\cvar{b}}{\ccondtight{\llbracket \cexpr{be} \rrbracket_{\mathbb{B}}}{\cvar{b}}{\cvar{1}}}}{\transl{\cexpr{c}}}} \\
\phantom{\transl{\cskip}} &\phantom{\doteq}\hspace{0.5em} {\casgn{\cvar{b}}{\ccondtight{\llbracket \cexpr{be} \rrbracket_{\mathbb{B}}}{\cvar{1}}{\cvar{b}}}} \\
\transl{\caread{\cvar{X}}{\cvar{a}}{\cexpr{i}}} &\doteq \caread{\cvar{X}}{\cvar{a}}{\llbracket \cexpr{i} \rrbracket_{\mathit{rd}}^{\cvar{X}}} \\
\transl{\cawrite{\cvar{a}}{\cexpr{i}}{\cexpr{e}}} &\doteq \cawrite{\cvar{a}}{\llbracket \cexpr{i} \rrbracket_{\mathit{wr}}^{\cexpr{e}}}{\cexpr{e}}
\end{align*}
\vspace{-1.5em}
\caption{Master recipe for index \SLH}
\label{fig:aslh-template}
\end{figure}

\begin{figure*}[!b]
\subcaptionbox{Selective \iSLH \label{fig:saslh-instance}}[0.38\linewidth]{
\begin{align*}
\llbracket \cexpr{be} \rrbracket_{\mathbb{B}}
 &\doteq \colSel{
  \phantom{\left\{ \right.}
  \begin{matrix*}[l]
  \phantom{\ccond{\cvar{b==1}}{\cvar{0}}{\cexpr{i}}}
  \\
  \cexpr{be}
  \end{matrix*}
  %% \phantom{\right.}
  }
\\
\llbracket \cexpr{i} \rrbracket_{\mathit{rd}}^{\cvar{X}}
 &\doteq \colSel{
  \left\{
  \begin{matrix*}[l]
  {\colAll{\ccond{\cvar{b==1}}{\cvar{0}}{\cexpr{i}}}} &
  {\mathit{if} ~ \PubVars(\cvar{X})}
  \\
  {\cexpr{i}} &
  {\mathit{otherwise}}
  \end{matrix*}
  \right.
  }
\\
\llbracket \cexpr{i} \rrbracket_{\mathit{wr}}^{\cexpr{e}}
 &\doteq \colSel{
  \left\{
  \begin{matrix*}[l]
  {\colAll{\ccond{\cvar{b==1}}{\cvar{0}}{\cexpr{i}}}} &
  {\mathit{if} ~ \lnot\labelarith{\PubVars}{\cexpr{e}}}
  %% {\mathit{if} ~ \labelarith{\PubVars}{\cexpr{e}} : \LabelPublic}
  \\
  {\cexpr{i}} &
  {\mathit{otherwise}}
  \end{matrix*}
  \right.
  }
\end{align*}
}
\subcaptionbox{Flexible \iSLH \label{fig:faslh-instance}}[0.42\linewidth]{
\begin{align*}
%% \llbracket \cexpr{be} \rrbracket_{\mathbb{B}}
 &\colFlexnew{
 %% &\doteq \colFlexnew{
  \left\{
  \begin{matrix*}[l]
  {\colU{\cvar{b==0 \&\& } \cexpr{be}}} &
  {\mathit{if} ~ \lnot\labelbool{\PubVars}{\cexpr{be}}}
  %% {\mathit{if} ~ \labelbool{\PubVars}{\cexpr{be}} : \LabelPublic}
  \\
  {\colSel{\cexpr{be}}} &
  {\mathit{otherwise}}
  \end{matrix*}
  \right.
  }
\\
%% \llbracket \cexpr{be} \rrbracket_{\mathit{rd}}
 &\colSel{
 %% &\doteq \colSel{
  \left\{
  \begin{matrix*}[l]
  {\colAll{\ccond{\cvar{b==1}}{\cvar{0}}{\cexpr{i}}}} &
  {\mathit{if} ~ \PubVars(\cvar{X}) \mathrel{\colFlexnew{\vee}} \colFlexnew{\lnot\labelarith{\PubVars}{\cexpr{i}}}}
  \\
  {\cexpr{i}} &
  {\mathit{otherwise}}
  \end{matrix*}
  \right.
  }
\\
%% \llbracket \cexpr{be} \rrbracket_{\mathit{wr}}
 &\colSel{
 %% &\doteq \colSel{
  \left\{
  \begin{matrix*}[l]
  {\colAll{\ccond{\cvar{b==1}}{\cvar{0}}{\cexpr{i}}}} &
  {\mathit{if} ~ \lnot\labelarith{\PubVars}{\cexpr{e}} \mathrel{\colFlexnew{\vee}} \colFlexnew{\lnot\labelarith{\PubVars}{\cexpr{i}}}}
  %% {\mathit{if} ~ \labelarith{\PubVars}{\cexpr{e}} : \LabelPublic \mathrel{\colFlexnew{\wedge}} \colFlexnew{\labelarith{\PubVars}{\cexpr{i}} : \LabelPublic}}
  \\
  {\cexpr{i}} &
  {\mathit{otherwise}}
  \end{matrix*}
  \right.
  }
\end{align*}
}
\subcaptionbox{\UltimateSLH \label{fig:uslh-instance}}[0.15\linewidth]{
\begin{align*}
%% \llbracket \cexpr{be} \rrbracket_{\mathbb{B}}
 &\phantom{\left\{ \right.}
  \begin{matrix*}[l]
  \colU{\cvar{b==0 \&\& } \cexpr{be}}
  \\
  \phantom{\labelbool{\PubVars}{\cexpr{be}} : \LabelPublic}
  \end{matrix*}
  %% \phantom{\right.}
 %% &\doteq \colU{\cvar{b==0 \&\& } \cexpr{be}}
\\
%% \llbracket \cexpr{be} \rrbracket_{\mathit{rd}}
 &\phantom{\left\{ \right.}
  \begin{matrix*}[l]
  \ccond{\cvar{b==1}}{\cvar{0}}{\cexpr{i}}
  \\
  \phantom{\PubVars(\cvar{X}) : \LabelPublic}
  \end{matrix*}
  %% \phantom{\right.}
 %% &\doteq \ccond{\cvar{b==1}}{\cvar{0}}{\cexpr{i}}
\\
%% \llbracket \cexpr{be} \rrbracket_{\mathit{wr}}
 &\phantom{\left\{ \right.}
  \begin{matrix*}[l]
  \ccond{\cvar{b==1}}{\cvar{0}}{\cexpr{i}}
  \\
  \phantom{\labelarith{\PubVars}{\cexpr{e}}}
  \end{matrix*}
  %% \phantom{\right.}
 %% &\doteq \ccond{\cvar{b==1}}{\cvar{0}}{\cexpr{i}}
\end{align*}
}
% \ch{Can we use darker shades of yellow and blue?
%   They are too light to comfortably read for me, especially when not bold.}
\caption{Overview of \iSLH variants instantiating recipe from \autoref{fig:aslh-template}}
\label{fig:aslh-variants}
\end{figure*}

Applying \iSLH to \autoref{ex:v1std} we obtain the following code:

\begin{listing}[Spectre v1 gadget protected with \iSLH]
\label{ex:v1hardened}
% \ld{Changed the array names : a -> $a_1$; b -> $a_2$. b being already used for
% the speculation bit feels confusing} CH: good catch
%% \rb{Absolutely right!}

\begin{align*}
  &\cifs{\cvar{i < a$_1$\_size}}{\\
  &\quad\casgn{\cvar{b}}{\ccond{(\cvar{i < a$_1$\_size})}{\cvar{b}}{\cvar{1}}}\texttt{;}\\
  &\quad\caread{\cvar{j}}{\cvar{a}_1}{\ccond{\cvar{b==1}}{\cvar{0}}{\cvar{i}}}\texttt{;}\\
  &\quad\caread{\cvar{x}}{\cvar{a}_2}{\ccond{\cvar{b==1}}{\cvar{0}}{\cvar{j}}}\\&
  }{\\
    &\quad\casgn{\cvar{b}}{\ccond{\cvar{(i < a$_1$\_size)}}{\cvar{1}}{\cvar{b}}}&
  }
\end{align*}

\end{listing}

A Spectre v1 attacker can \emph{still} force the execution
of the \texttt{then} branch when $\cvar{i}$ is out of bounds, but this
will be detected and $\cvar{b}$ will be updated to reflect
it. Then the first array access will become $\cvar{a}_1\cvar{[0]}$ and will no
longer be out of bounds, and the attacker will be unable to exploit it to load secrets.
%% for gain

\subsection{Cryptographic Code Security}
\label{sec:specct}

Cryptographic implementations are common targets of timing side-channel attacks
aimed at extracting keys, even remotely over the Internet~\cite{BrumleyB05}.
To mitigate these dangers, cryptographic
engineers have developed secure programming disciplines, notably
cryptographic constant time (\CCT), now considered standard practice in the
field~\cite{AlmeidaBBDE16, CauligiSJBWRGBJ19, DanielBR23, BartheBCLP20}.
The \CCT discipline protects secrets from timing side-channel attacks
arising from secret-dependent branches and memory accesses.
For this, it imposes two requirements on the programmer:

\begin{enumerate}

\item
all program inputs must be identified as public or secret,
% and possibly tagged with confidentiality labels, and
% CH: no clue what this was trying to add
\item
control flow and memory addresses
that depend on secret inputs are not allowed.\footnote{
  Also variable-time arithmetic operations are not allowed on secrets,
  but these are not relevant for the simple setting of this paper.}

\end{enumerate}

%% It is possible to transform a program that is not \CCT into one that
%% is, while incurring some (likely significant) performance
%% cost.

% \rb{Foreshadow our use of IFC type systems later.}
Researchers have designed security analyses ensuring that these \CCT conditions
are met for real-world code~\cite{AlmeidaBBDE16, DanielBR23, BartheBCLP20}.
%
% A program that adheres to the \CCT policy can
%% under certain
%% conditions\rb{which?}\ch{none? no clue what conditions you're talking about,
%%   (but then I also don't understand why you bring up ``microarchitectural
%%   state'' below, see next comment). Is this about when CCT implies
%%   timing security? \url{https://dblp.org/rec/journals/jar/BartheBC0P20.html?view=bibtex}.
%%   This is an interesting connection between constant time and the real word,
%%   but conceptually different that your next theorem.}
%% be proved to be secure against an attacker that
%% can make \emph{observations} based on the microarchitectural state of
%% the machine.\ch{How about something like ``timing observations.'' (period). I'm
%%   not sure why you keep bringing up ``microarchitectural state''.}
% be proved secure against timing side channels.
Even if the attacker can directly observe the branches that
are taken and the memory addresses that are accessed,
%% . In spite of
%% these observational capabilities,
a \CCT program does not leak any secret information to the attacker. Prior
work has shown that this implies security against timing side-channel attacks
exploiting data and instruction caches~\cite{BartheBCLP20}.

%% Formally, \CCT security is defined as follows:
% \ch{Please explain this definition.
%   % including the funny notation with the arrow-star that leads
%   % nowhere.\ch{Couldn't we find a better notation?} -- CH: tried
%   Also explain intuitively that $O$ includes all branches and memory addresses
%   accesses the program accesses.}
% \ch{Should be consistent throughout the paper
%   between $O$ and $\Obss$, and $D$ and $\Dirs$}
\jb{The definition feels a little disconnected from the previous text.}

\begin{definition}[Cryptographic Constant-Time Security]
\label{def:cct-sec}
% \rb{ Name/introduce the property here, purely high level or simplified,
% NI? SCT builds on it in a way.} CH: no clue why this would be helpful, so whatever
A program $c$ satisfies {\em \CCT security} w.r.t. public variables $P$
if for any public-equivalent states ($s_1 \sim_P s_2$), when
the program executes to completion on both states (\IE a small-step
semantics $\seqmulti{}$ fully evaluates the program from each state),
then both runs produce the same sequence of branching decisions and
memory accesses observable by an attacker ($\Obss_1 = \Obss_2$).
\[
\begin{array}{ll}
 s_1 \sim_P s_2 & \mathrel{\land}
  \langle c, s_1 \rangle \seqmulti{\Obss_1} \langle \cskip, \cdot \rangle\\
  & \mathrel{\land} \langle c, s_2 \rangle \seqmulti{\Obss_2} \langle \cskip, \cdot \rangle
  \Rightarrow \Obss_1 = \Obss_2
\end{array}
\]
%% \[
%% \begin{array}{l}
%%   s_1 \sim_P s_2 \mathrel{\land}
%%   \langle c, s_1 \rangle \seqmulti{\Obss_1} \langle \cskip, \cdot \rangle
%%   \mathrel{\land} \langle c, s_2 \rangle \seqmulti{\Obss_2} \langle \cskip, \cdot \rangle
%%   \Rightarrow \\ \Obss_1 = \Obss_2
%% \end{array}
%% \]
%
%% \[ s_1 \sim_P s_2 \mathrel{\land}
%%   \langle c, s_1 \rangle \seqmulti{\Obss_1} \langle \cskip, \cdot \rangle
%%   \mathrel{\land} \langle c, s_2 \rangle \seqmulti{\Obss_2} \langle \cskip, \cdot \rangle
%%   \Rightarrow \Obss_1 = \Obss_2 \]
\end{definition}

% \rb{Actually the above is for the big-step semantics: fix}\ch{+1, also Def 2;
%   could fake it by going to skip, which would be consistent
%   with what 3.1 calls ``full executions''.}\ch{Tried and it looks okay,
%     so if we're willing to give up a line here (and one below too) it seems fine?}
%   \rb{Looks fine, yep!}

In the speculative world, an attacker can invalidate these
guarantees. By influencing the microarchitectural state of the
hardware, it can force a program to speculatively execute control
paths that are impossible in the standard, sequential semantics, and
leak secret data through the timing side channels CCT is supposed to
close~\cite{CauligiDGTSRB20,BartheCGKLOPRS21,VassenaDGCKJTS21,ShivakumarBBCCGOSSY23}.
To reason about these threats, we need a more informative semantics
that abstracts the speculative capabilities of an
attacker~\cite{ShivakumarBBCCGOSSY23}.
This \emph{speculative semantics} ($\specmulti{}{}$ below) adds rules that steer
execution along wrong branches and out-of-bounds memory accesses,
triggered by \emph{directives} $\Dirs$ issued by the attacker.
The speculative semantics also adds a boolean flag $b$ to the state to indicate
whether the processor is misspeculating.
We define speculative constant-time (SCT) security below in terms of
a notion of speculative observational equivalence:
%
% We can also define
% % \ch{why? again to define relative security, no?
% % \rb{Yes, this is to get shorter statements, should we rephrase
% % relative security using these forms here?}}
% the speculative counterpart to speculative \ObsEq,
% where we execute a pair of initial states, which are now given the
% same sequence of directives. This allows us to tighten the
% correspondence between the two runs, which can now be required to
% produce identical observation traces.
%
\begin{definition}[Speculative \ObsEq]
\label{def-specobs}
Two speculative configurations $\langle c_1, s_1, b_1 \rangle$
and $\langle c_2, s_2, b_2 \rangle$ are observationally equivalent
w.r.t. the speculative semantics, written
$\langle c_1, s_1, b_1 \rangle \specobseq \langle c_2, s_2, b_2\rangle $ iff
% $c_1 = c_2$ and -- CH: not needed?
\[
  \forall \Dirs \Obss_1 \Obss_2.~
    \langle c_1, s_1, b_1 \rangle \specmulti{\Obss_1}{\Dirs} \cdot \mathrel{\land}
    \langle c_2, s_2, b_2\rangle \specmulti{\Obss_2}{\Dirs} \cdot \Rightarrow
    \Obss_1 = \Obss_2
\]

\end{definition}

Compared to SCT definitions from the literature~\cite{CauligiDGTSRB20,
  DanielBR21,VassenaDGCKJTS21}, we are looking at all execution prefixes that do
not necessarily finish in a final state (the $\cskip$ in \autoref{def:cct-sec}).
This choice gives us stronger guarantees in the definitions below, \newreviews{in
  particular about the finite prefixes of nonterminating and stuck executions
  (which were ignored in \autoref{def:cct-sec})}.
% for instance making our SCT definition termination-sensitive. -- CH: reviewer B complained
%
We can directly compare observations $\Obss_1$ and $\Obss_2$ for equality here,
since the two speculative executions take identical directions $\Dirs$, which in
our setting implies that $|\Obss_1| = |\Dirs| = |\Obss_2|$.

\begin{definition}[Speculative Constant-Time Security]
\label{def:sct-sec}
% \rb{Traces used informally, the earlier the better, -- CH: already used in prev def
% any mention of relative security?}\ch{This one is not relative, so not here.}
%
A program $c$ satisfies SCT security if for any public-equivalent initial states
$s_1 \sim_P s_2$ and starting before misspeculation ($\BoolFalse$)
\[\langle c, s_1, \BoolFalse \rangle \specobseq \langle c, s_2, \BoolFalse \rangle\]

% the corresponding pair
% of runs produces the same observations under the speculative semantics and
% identical attacker directives $\Dirs$:
% \[
% \begin{array}{ll}
% s_1 \sim_P s_2 & \mathrel{\land} \langle c, s_1, \BoolFalse \rangle \specmulti{\Obss_1}{\Dirs} \langle \cskip, \cdot, \cdot \rangle\\
%  & \mathrel{\land} \langle c, s_2, \BoolFalse \rangle \specmulti{\Obss_2}{\Dirs} \langle \cskip, \cdot, \cdot \rangle \Rightarrow \Obss_1 = \Obss_2
% \end{array}
% \]
\end{definition}

% \rb{Example for this part? Reuse something from 2.1?}\ch{2xYes if we can do this quickly.
%   For a start, the code in Listing 1 is CCT, yet speculatively it leaks secrets!
%   Our SpecCT chapter also has some examples by Sebastian.}
%% \rb{
%% Need to say something about the general form of the semantics here,
%% and then make it more concrete later?}

\begin{example}
It is easy to see that \autoref{ex:v1std} is \CCT secure if the variables
are initially public. Even so, speculatively it leaks
secrets, as discussed in \autoref{sec:spectre}.
Formally, this program is not SCT secure, since for an out-of-bounds $\cvar{i}$,
the attacker can use a first directive in $\Dirs$ to force the program to
misspeculate on the \texttt{then} branch, and then a second directive to ask for
a secret in memory instead of the out-of-bounds value of $\cvar{a}_1[\cvar{i}]$.
The secret is loaded in $\cvar{j}$, and then made visible to the attacker through
the address of the second load operation, which is observed to have %JB: probably better to keep "address" here
different values in $\Obss_1$ and $\Obss_2$.
On the other hand, after \iSLH protection (\autoref{ex:v1hardened}), this program is SCT
secure. In fact, any CCT\footnote{In general, when we write ``CCT'' unqualified
  we always mean the static CCT discipline, {\em not} the weaker semantic ``CCT
  security'' of \autoref{def:cct-sec}.}
program hardened with \iSLH satisfies SCT security, and as
we will see in the next subsection, a selective variant of \iSLH
is enough to enforce this.
\end{example}

% \ch{Before even going selective, aSLH from 2.3 already enforces Speculative
%   Constant-Time Security for CCT source programs (stated as conjecture below,
%   since I don't think we have a proof, but informally it's obvious given Thm 1).
% %
%   Selective (a)SLH is just a
%   performance optimization, so let's start with aSLH and CCT security first.}

% \begin{conjecture}
% \label{thm:aslh-ct}
% aSLH enforces SCT security.
% \end{conjecture}

%% \ch{Up to here this was all background. Could potentially move that
%%   to separate section, so that it's clearer what are {\em our} key ideas (below).}

\section{Key Ideas}
\label{sec:key-ideas}

\subsection{Selective \iSLH}

% \rb{Removing "address" from the subsection title for now to make it consistent
% with FlexSLH below (but our SaSLH variant is the focus here)}
%  -- CH: Went for a better way to achieve consistency: both have the a now
%% \ch{Then the goal of what follows in this subsection
%%   should actually be to introduce {\em our} new
%%   Selective aSLH variant using Figure 2.a. Enough background and detours! Time
%%   to finally get to our key ideas and contributions! So let's stick to aSLH in
%%   this whole Key Ideas section.}

\SelectiveSLH
% (\SelSLH) -- too vague; no need for shorthand it
is a more efficient variant of \SLH that ``only
masks values speculatively loaded into publicly-typed variables''
\cite{ShivakumarBBCCGOSSY23},
exploiting the security levels used by the \CCT discipline to minimize the
number of masking operations while protecting secrets from
speculative attackers.
The original proposal uses a CCT type system to only mask the {\em values of
  loads into public variables} (\IE the \SelvSLH variant we discuss later in
  \autoref{sec:FvSLH}).
Yet the same optimization is also applicable to \iSLH, leading
to \emph{Selective index \SLH (\SeliSLH)}, which 
we show in \autoref{fig:saslh-instance} and which,
as far as we know, has not been studied formally before.
The masking of loads in $\llbracket \cdot \rrbracket_{\mathit{rd}}^\cdot$
uses a public map $\PubVars$ of variables to their boolean security levels
(public $\LabelPublic$ or secret $\LabelSecret$)
and functions $\labelarith{\PubVars}{\cexpr{e}}$
and $\labelbool{\PubVars}{\cexpr{b}}$ computing the security levels of
arithmetic and boolean expressions.
%% that compute the security levels of complex
%% expressions as the highest level among their constituents.

One crucial observation we make in this paper is that for achieving security for
\SeliSLH, it does {\em not} suffice to only mask the indices of loads, as shown
by the following counterexample we discovered during the formal proofs:

% In giving a formal definition to this countermeasure, we make the
% crucial observation that it does \emph{not} suffice to mask addresses
% on loads:

  %% In fact, for Selective aSLH we are the first to even give a formal definition,
  %% which is nontrivial for memory stores.

\begin{listing}[Leakage through unprotected stores]
  \label{ex:aslh-vs-vslh}
\begin{align*}
  &\cifsnoelse{\cvar{i < secrets\_size}}{\\
  &\quad\cawrite{\cvar{secrets}}{\cvar{i}}{\cvar{key}}\texttt{;}\\
  &\quad\caread{\cvar{x}}{\cvar{a}}{\cvar{0}}\texttt{;}\\
    &\quad\cifsnoelse{\cvar{x}}{ ... }
  }
  \end{align*}
\end{listing}

Once again, for an invalid $\cvar{i}$, the attacker can force the
program to execute the \texttt{then} branch. The first
instruction stores a secret value, $\cvar{key}$, normally into a secret array
$\cvar{secrets}$, but with the out-of-bounds index this
actually stores the secret at position $\cvar{0}$ of a public array $\cvar{a}$.
A subsequent and
seemingly innocuous in-bounds load from $\cvar{a[0]}$, where the secret
$\cvar{key}$ was already exfiltrated, then leaks its value to the attacker
via a subsequent timing observation ($\cifsnoelse{\cvar{x}}{ ... }$).
Since the $\cvar{a[0]}$ load is from (the in-bounds) index $\cvar{0}$, even
though speculation is detected, the \iSLH masking of this load will be a no-op
and thus alone will not prevent this counterexample.

The solution is to define $\llbracket \cdot \rrbracket_{\mathit{wr}}$ in
\autoref{fig:saslh-instance} to mask the indices of array stores when writing
secret values.
This extra masking causes the store operation in the example above
to write the $\cvar{key}$ in position $\cvar{0}$ of
$\cvar{secrets}$, leaving $\cvar{a}$ unchanged and thus preventing the attack.
% er from loading the secret key in the public variable $\cvar{x}$.
%
With this addition we were able to formally prove speculative constant-time security:

\begin{theorem}
\label{thm:saslh-ct}
\SeliSLH enforces SCT security for CCT programs.
\end{theorem}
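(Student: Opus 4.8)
The plan is to prove a stronger invariant by simulation between two speculative executions of the transformed program $\transls{c}{}$ started from public-equivalent states, and then to extract $\Obss_1 = \Obss_2$. Since the CCT type system gives us a fixed labeling $\PubVars$ and rejects no relevant program, I would first set up the right relational invariant on pairs of speculative configurations $\langle \transls{c}{}, s_1, b\rangle$ and $\langle \transls{c}{}, s_2, b\rangle$: the two configurations run the same residual (transformed) command, carry the same misspeculation flag $b$, their scalar/array states agree on everything labeled public in $\PubVars$, and — crucially — the reserved flag variable $\cvar{b}$ holds the same value in both and correctly reflects $b$ (i.e. $\cvar{b} = 1 \iff b = \BoolTrue$). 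One should also record that the observation sequences produced so far are equal. The proof is then a lock-step induction on the length of the speculative execution (equivalently, on the common directive stream $\Dirs$), showing the invariant is preserved by every pair of matching steps and that every step emits the same observation on both sides.

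The step cases split along the structure of the source command and the semantics of $\specstep{}{}$. For \cskip, \casgn{\cvar{x}}{\cexpr{e}}, and sequencing there is no observation and the only thing to check is that public agreement is preserved — for assignments this uses the standard IFC soundness fact that if $\cexpr{e}$ is public-typed then $\eval{\cexpr{e}}{s_1} = \eval{\cexpr{e}}{s_2}$, and if it is secret then the updated variable is secret-typed so public agreement is unaffected (here I'd invoke the well-typedness assumption, \wtct{c}). For conditionals and loops, the emitted observation is the branch condition $\llbracket \cexpr{be}\rrbracket_{\mathbb{B}} = \cexpr{be}$ (Selective \iSLH leaves conditions unmasked): CCT forbids secret-dependent control flow, so $\cexpr{be}$ is public-typed, hence evaluates identically on both sides; the attacker's directive (\step\ vs.\ \force) is the same on both sides, so both either take the real branch or both misspeculate, and the guarded assignment to $\cvar{b}$ in \autoref{fig:aslh-template} updates the flag variable identically and consistently with the semantic flag $b$. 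For a load $\caread{\cvar{X}}{\cvar{a}}{\cexpr{i}}$, the observed address is the masked index; if $\cvar{X}$ is public then the index is $\ccond{\cvar{b==1}}{\cvar{0}}{\cexpr{i}}$, which is either $0$ on both sides (when $b = \BoolTrue$, by the flag invariant) or $\cexpr{i}$, which is public-typed whenever the load into a public variable is well-typed — either way the observations match; and whatever value lands in $\cvar{X}$ need only be tracked if $\cvar{X}$ is public, in which case the non-speculative read is in-bounds by typing (so no secret is loaded) and the speculative read is from index $0$. For a store $\cawrite{\cvar{a}}{\cexpr{i}}{\cexpr{e}}$, the observed address is $\llbracket\cexpr{i}\rrbracket_{\mathit{wr}}^{\cexpr{e}}$, which is masked precisely when $\cexpr{e}$ is secret-typed; this is the case driven by \autoref{ex:aslh-vs-vslh}, and here I'd argue that when $\cexpr{e}$ is secret the masked index is $0$ or a public index, and the written location — whether in the public array region (forced speculatively) or its intended spot — receives a secret value into a secret-typed slot, so public agreement is maintained; when $\cexpr{e}$ is public, $\cexpr{i}$ must itself be public-typed by well-typedness of the store (writing a public value to array $\cvar{a}$ at index $\cexpr{i}$), so the unmasked index agrees and the written value agrees.

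Having maintained the invariant along every finite prefix, \autoref{def-specobs} and \autoref{def:sct-sec} follow: any two executions from $s_1 \sim_P s_2$ with $b = \BoolFalse$ under a common $\Dirs$ stay related, so their observation traces coincide, giving $\langle c, s_1, \BoolFalse\rangle \specobseq \langle c, s_2, \BoolFalse\rangle$. The main obstacle I anticipate is the store case: getting the invariant strong enough that an out-of-bounds speculative store of a secret into what is syntactically a public array does not break public-equivalence. This is exactly where Selective \iSLH's masking of secret-valued stores is essential, and it forces the relational invariant on array states to be stated carefully — comparing the two array states on the public-labeled arrays *as seen through the masking*, rather than naively — so that the flag-consistency part of the invariant can be used to collapse both speculative indices to $0$. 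A secondary subtlety is keeping the flag variable $\cvar{b}$ and the semantic flag $b$ perfectly synchronized across nested branches and across rollback/force directives; this is routine but pervasive, and it is what makes all the masking conditionals evaluate the same way on both sides.
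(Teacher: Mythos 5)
Your proof is essentially sound, but it takes a genuinely different route from the paper. You give a direct, self-contained two-run simulation on the speculative semantics of the hardened program: a lock-step relational invariant (same residual command, same semantic flag, agreement on $\PubVars$-public state, and synchronization of the reserved variable $\cvar{b}$ with the semantic flag) maintained per directive, from which equality of observations follows. The paper instead obtains \autoref{thm:saslh-ct} as a two-line corollary of the \FlexiSLH relative-security theorem (\autoref{thm:faslh-rs}): by \autoref{thm:faslh-aslh}, \SeliSLH and \FlexiSLH produce syntactically identical code on \CCT programs, and \CCT typing implies that sequential executions from public-equivalent states yield equal observations, which discharges the sequential-equivalence premise of relative security; the heavy lifting (your invariant maintenance) is done once inside the \FlexiSLH proof, factored through an ideal semantics via backwards compiler correctness (\autoref{lem:bcc}), ideal-step noninterference (\autoref{lem:faslh-ideal-noninterference}), and the misspeculation unwinding lemma (\autoref{lem:gilles-lemma}). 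What each buys: your direct argument is more elementary and avoids the ideal-semantics detour for this one theorem, but it duplicates work that the paper's decomposition shares across \SeliSLH, \USLH, and \FlexiSLH, and it does not extend to the non-\CCT cases, where a plain two-run speculative simulation is no longer available and the sequential-equivalence premise becomes essential.

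Two points in your write-up should be tightened. First, in the secret-valued store case the argument is not that the secret lands ``in a secret-typed slot'' wherever it goes --- if a forced store could place it in a public array, public-equivalence would break exactly as in \autoref{ex:aslh-vs-vslh}; the correct observation is that masking the index to $0$ during misspeculation makes the access in-bounds (given non-empty arrays), so \textsc{Spec\_Write\_Force} cannot fire at all and the write stays in the intended array, which \CCT typing forces to be secret; consequently the \emph{plain} public-equivalence of array states is preserved and no ``equivalence up to masking'' is needed. Second, your statement omits the side conditions that the formal theorem carries and that your own argument uses: all arrays non-empty (so index $0$ is valid) and $\cvar{b}$ reserved, unused by the program, and initialized to $0$ in both states.
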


\subsection{Defining Relative Security for \UltimateSLH}

Most programs do not follow the \CCT discipline though,
% or even have a clear notion of secret,
%\ch{warning:we require the latter for FSLH, so need to be careful here}
% standard aSLH gives us SCT but not for not CT programs, and this is enough motivation
so lightweight SLH defenses like the two \iSLH variants
above are not enough to enforce security for arbitrary programs.

%% \ch{Gilles counterexample already here to justify this claim
%%   and properly motivate this subsection}

\begin{listing}[Leakage through sequentially unreachable code]\footnote{\ifanon\else
  This simple counterexample was shared with us by Gilles Barthe.\fi{} \newreviews{It
  is easy to construct variants of this counterexample where one cannot statically
  determine that the branch condition evaluates to false.}}
\label{ex:gilles-cex}
\[
  \cifsnoelse{\cvar{false}}{
  \cifsnoelse{\cvar{secret = 0}}{
  ...}}
\]
\end{listing}

The code in the outer \texttt{then} branch will never be
sequentially executed, but a speculative attacker can force the
execution of this code. The inner branch condition reveals
information about the $\cvar{secret}$ variable, and no amount of
masking of memory accesses can prevent this.
Since it branches on $\cvar{secret}$, this code does not respect the CCT discipline,
but this is the kind code we still want to protect from Spectre v1 attackers.
To achieve this,
% in the absence of knowledge about which program data are sensitive, -- CH: see warning
all sources of speculative leakage observations must be
% treated as (potentially) sensitive themselves, and thus -- CH: see warning
masked, including branch conditions,
not only the indices of loads and stores.

\UltimateSLH (\USLH)~\cite{PatrignaniG21, ZhangBCSY23}
is a \SLH variant that offers protection to
arbitrary programs by {\em exhaustively} masking these three sources of
leakage, as shown in \autoref{fig:uslh-instance}. Like
vanilla \iSLH, it masks all indices of memory loads and stores.
Additionally, it also protects branch conditions by masking them with
the misspeculation flag. This causes branches to default to the false
case during misspeculation, preventing them from leaking information
about the original branch conditions. This prevents the
leakage in \autoref{ex:gilles-cex}.

Yet what is a suitable formal security definition for transformations protecting
arbitrary programs, like \USLH?
Because the arbitrary source program does not follow the CCT discipline it could
leak some data sequentially, and the transformations we look at here do not try
to prevent these sequential leaks.
Instead, defenses like \USLH enforce that the hardened program does not
leak any {\em more} information speculatively than the source program leaks
sequentially.
That is, if a sequential attacker against the source program cannot tell apart
two input states, the transformation ensures that a speculative attacker
cannot tell them apart either.
To formalize this relative security notion, we first also define a notion
of observationally equivalent configurations in the sequential semantics:

% Security results based on \autoref{def:sct-sec}
% and \autoref{def:rs-sec} revolve around two
% executions from a pair of initial states producing the same
% observations. In the sequential semantics a multi-step execution can
% stop at any point, either by terminating (\IE a full execution), by
% getting stuck or simply pausing in the middle of an unfinished run. We
% can express this by having either one of the traces of observations be
% a prefix of the other.
\begin{definition}[Sequential \ObsEq]
\label{def:seq-obseq}
Two source configurations $\langle c_1, s_1 \rangle$
and $\langle c_2, s_2 \rangle$ are observationally equivalent
in the sequential semantics, written $\langle c_1, s_1 \rangle \approx \langle c_2, s_2 \rangle$, iff
\[
  \forall \Obss_1 \Obss_2.~
    \seqmultis{\langle c_1, s_1 \rangle}{\Obss_1}{\cdot} \mathrel{\land}
    \seqmultis{\langle c_2, s_2 \rangle}{\Obss_2}{\cdot} \Rightarrow
    \Obss_1 \lessgtr \Obss_2
\]

\end{definition}

Looking at all execution prefixes above can result in observation sequences of
different lengths, so we require that one sequence is a prefix of the other
($\Obss_1 \lessgtr \Obss_2$).

\begin{definition}[Relative Security]
\label{def:rs-sec}
A program transformation $\transl{\cdot}$ satisfies relative security
if for all programs $c$ run from two arbitrary initial states $s_1$ and $s_2$ we have
\begin{align*}
\langle c, s_1 \rangle \seqobseq \langle c, s_2 \rangle \Rightarrow
& \langle \transl{c}, s_1, \BoolFalse \rangle \specobseq \langle \transl{c}, s_2, \BoolFalse \rangle
%% \forall & p, s_1, s_2, D, O_1, O_2. \\
        % & ( \forall \Obss_1 \Obss_2.~ \langle c, s_1 \rangle \seqmulti{\Obss_1} \cdot \mathrel{\land} \langle c, s_2 \rangle \seqmulti{\Obss_2} \cdot \Rightarrow \Obss_1 \lessgtr \Obss_2 ) \Rightarrow \\
        % & \forall \Dirs \Obss_1 \Obss_2.~ \langle \transl{c}, s_1, \BoolFalse \rangle \specmulti{\Obss_1}{\Dirs} \cdot \mathrel{\land} \langle \transl{c}, s_2, \BoolFalse \rangle \specmulti{\Obss_2}{\Dirs} \cdot \Rightarrow \\
        % & \Obss_1 = \Obss_2
\end{align*}
\end{definition}

% One thing to note is that
Looking at all execution prefixes in the sequential observational equivalence
premise of this definition is not a choice, but is forced
on us by the fact that a program that sequentially loops forever or ends with an
error (\IE gets stuck before reaching $\cskip$) can be forced by attacker
directions to successfully terminate speculatively.
So if we were to only look at successfully terminating sequential executions
(like in \autoref{def:cct-sec}) in the premise, we would simply not have any
information about the sequentially nonterminating or erroneous executions, and
we would thus not be able to prove (even a big-step version of)
the conclusion when such programs only successfully terminate speculatively.

% Notice the relation between the sequential observation traces in the
% premise: one trace must always be a prefix of the other. Unlike
% in \autoref{def:cct-sec} the sequential runs are not necessarily full
% executions, and the could stop or get stuck early. Requiring the
% sequential observations to be equal is not enough to achieve relative
% security.\ch{the actual idea missing: looping and stuck programs}
%% \ch{Should properly explain why we need $\lessgtr$ in the premise.
%%   With equals the premise would be to weak to be able to achieve
%%   relative security.}

% \rb{
% Cover the bare minimum of notation to make this kind of presentation
% readable, or simplify further}\ch{+1; we claim this definition as a contribution,
%   so it needs to be properly explained}
% \rb{Notation seems enough now}

%% That is, the protected program does not leak more information
%% speculatively than the original program leaks sequentially.

Beyond these details,
the key idea is to relate the security of the
transformed program executed speculatively to the security of the {\em source}
program executed sequentially. This provides stronger guarantees than
previous definitions~\cite{PatrignaniG21, ZhangBCSY23}, which as discussed in
\autoref{sec:related-work}, \newreviews{deem secure a transformation
that introduces sequential leaks, instead of removing speculative ones}.
\newreviews{Our definition overcomes these limitations} and suitably captures
the security guarantees offered by \USLH, and later \FlexSLH.

%% The model of observations does not change w.r.t. CCT

%% \rb{Specifying and proving the security guarantees of \USLH.}

\begin{theorem}
\label{thm:uslh-rs}
\USLH enforces relative security for all programs.
\end{theorem}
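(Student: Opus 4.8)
The plan is to prove Theorem~\ref{thm:uslh-rs} as a corollary of the relative-security theorem we establish for \FlexiSLH\ in \autoref{sec:formal-results}, using the fact that \USLH\ (\autoref{fig:uslh-instance}) is the special case of \FlexiSLH\ (\autoref{fig:faslh-instance}) in which the security labeling $\PubVars$ is the constant map $\PubVars_\bot$ declaring every scalar variable and every array secret. First I would check that under $\PubVars_\bot$ the three masking helpers of \FlexiSLH\ collapse to those of \USLH: since $\PubVars_\bot(\cvar{X})$ is never $\LabelPublic$ and every expression mentioning a variable has label $\LabelSecret$, the three side conditions guarding masking in \autoref{fig:faslh-instance} --- $\PubVars(\cvar{X}) \vee \lnot\labelarith{\PubVars}{\cexpr{i}}$ for loads, $\lnot\labelarith{\PubVars}{\cexpr{e}} \vee \lnot\labelarith{\PubVars}{\cexpr{i}}$ for stores, and $\lnot\labelbool{\PubVars}{\cexpr{be}}$ for branch conditions --- are all satisfied, so \FlexiSLH\ masks every load index, every store index, and every branch condition, exactly as \USLH\ does. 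The only residual gap is at variable-free subexpressions (a literal array index, a constant branch guard), where \FlexiSLH\ leaves the expression bare while \USLH\ still wraps it in a misspeculation-flag mask: masking a constant index is semantically inert, and masking a constant guard $\cexpr{be}$ as $\cvar{b==0 \&\& }\cexpr{be}$ changes behaviour only while misspeculating, which is precisely when one wants the branch not to reveal $\cexpr{be}$.

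Given this, the corollary closes quickly. The \FlexiSLH\ relative-security theorem certifies every source program well-typed under the ambient IFC type system for the chosen $\PubVars$; but with $\PubVars_\bot$ there is no public variable or array at all, hence no secret-to-public flow to forbid, and the typing judgement holds for \emph{every} command --- this is exactly the sense in which ``all data labeled secret'' recovers protection for arbitrary programs. Applying the \FlexiSLH\ theorem at $\PubVars_\bot$ to an arbitrary program $c$ then gives, for all states $s_1, s_2$, that $\langle c, s_1\rangle \seqobseq \langle c, s_2\rangle$ implies that the \FlexiSLH-at-$\PubVars_\bot$ transform of $c$, run speculatively from $s_1$ and from $s_2$ with misspeculation flag $\BoolFalse$, yields $\specobseq$ configurations. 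Composing this with a simulation lemma relating the \USLH\ and \FlexiSLH-at-$\PubVars_\bot$ transforms of $c$ --- which produce the same directive sequences and the same observations everywhere except, at branches on constant guards reached under misspeculation, where both emit a state-independent value --- transfers the conclusion to \USLH, giving relative security of \USLH\ in the sense of Definition~\ref{def:rs-sec} for all programs.

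I expect this reconciliation of \USLH\ with the all-secret instance of \FlexiSLH\ to be the only real work, and it is confined to the constant / variable-free subexpression cases. The clean route is a structural-induction lemma over the speculative semantics showing that the extra $\cvar{b==1}$-masks around constant indices emit no observation and change neither the accessed address nor the branch taken; the constant-guard case needs slightly more care, since \USLH\ turns $\cvar{true}$ into $\cvar{b==0}$ and these differ while misspeculating --- but in that state both transforms have already raised the misspeculation flag on entering the forced branch, so nothing secret leaks afterwards, mirroring the behaviour on \autoref{ex:gilles-cex}. Should that lemma prove fiddly, the alternative I would take is to state the \FlexiSLH\ theorem one notch more permissively, so that redundant masking of constant subexpressions is allowed and \USLH\ becomes a literal instance, making Theorem~\ref{thm:uslh-rs} immediate; a routine final check is that $\PubVars_\bot$ really typechecks all programs including array writes, which holds because with no public state the ``no write-down'' and implicit-flow rules of the IFC type system are vacuous.
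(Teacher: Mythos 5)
Your proposal follows essentially the same route as the paper: instantiate the \FlexiSLH{} relative-security result (\autoref{thm:faslh-rs}) at the all-secret labeling $(\lambda \_.\, \LabelSecret)$, observe that the IFC typing judgment and the public-equivalence premises become vacuous there, and transfer the conclusion to \USLH{} via the correspondence between the two transformations. This is exactly the paper's three-step corollary, which invokes \autoref{thm:faslh-uslh} -- the statement that at the all-secret labeling the \FlexiSLH{} translation is \emph{syntactically equal} to the \USLH{} translation (proved by inspection and mechanized in Rocq) -- and then applies \autoref{thm:faslh-rs}.

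The one place you diverge is the ``residual gap'' you perceive at variable-free subexpressions: assuming literals are labeled public, you conclude that \FlexiSLH{} at the all-secret labeling leaves constant guards and constant indices unmasked while \USLH{} masks them, and you therefore interpose a simulation lemma (or propose restating the \FlexiSLH{} theorem more permissively). Relative to the paper this detour is unnecessary: \autoref{thm:faslh-uslh} is an exact equality of translations, so \USLH{} is a literal instance of \FlexiSLH{} and \autoref{thm:uslh-rs} follows immediately, with no bridging simulation needed. Your instinct is a reasonable robustness check -- if the expression-labeling function did assign public to variable-free expressions, the coincidence would only hold up to redundant, semantically inert masks and some reconciliation of the kind you sketch would be required -- but in the paper's development the definitions are such that the equality is literal, so the extra lemma is additional work rather than a missing ingredient. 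Your remaining observations (every program typechecks under the all-secret labeling; any two states are public-equivalent under it) coincide with the two auxiliary facts the paper's proof establishes before applying \autoref{thm:faslh-rs}; just note that the formal statement also carries the standard side conditions inherited from that theorem (the flag variable $\cvar{b}$ unused by the program and initialized to $0$, and all arrays non-empty), which your corollary should thread through as well.
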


%% Thm1/Thm2?

\iflater
\rb{
While providing comprehensive protections, \USLH is costlier
than \SelSLH (obviously). We can show this with an example, and
compare with \FlexSLH later, or use a single figure and color the
masks that are inserted in \USLH but are absent in \FlexSLH. Then lead
to presenting the relative security property satisfied by \USLH.}
\ch{Okay to explain this, but please don't do it from scratch. The intro already
  said that it has around 150\% overhead on SPEC; so please build on that.}
\rb{
Can do a bit of this if there is space, right now it does not seem
critical to do here, maybe in the discussion?}\ch{What discussion?
  (wasn't planning a discussion section)}
  \rb{Oops, I was thinking about mentioning performance in future work
      (LLVM implementations, etc.)}
\fi

\subsection{The Best of Both Worlds: Flexible \iSLH}
\label{sec:FiSLH}

\SeliSLH and \USLH can be seen as falling on two
rather different points on the spectrum of Spectre v1 protections:

\begin{itemize}[leftmargin=*,nosep,label=$\blacktriangleright$]
\item \SeliSLH exploits developer information identifying which inputs
  are secret and which ones are not to implement an efficient countermeasure
  that guarantees no leakage of secret inputs whatsoever
  under the very strong requirement that the source program follows the \CCT discipline.

\item \USLH pays a much higher efficiency price to protect arbitrary programs
  and obtain a stronger relative security guarantee that includes programs that
  are not \CCT.
\end{itemize}

In this paper we propose \emph{Flexible index SLH (\FlexiSLH)}, a new hybrid design
shown in \autoref{fig:faslh-instance} that combines the desirable
features of both \SeliSLH and \USLH:

\begin{itemize}[leftmargin=*,nosep,label=$\blacktriangleright$]

\item Like \SeliSLH, \FlexiSLH also exploits information from
  the developer identifying which program inputs are secret and which
  are not, and only offers speculative protections to data
  that could depend on secret inputs.
  Unlike \SeliSLH though, \FlexiSLH does not require the strict CCT discipline.

\item Like \USLH, \FlexiSLH aims to offer relative security to arbitrary
  programs, which in our simplest formalization are only well-typed with respect
  to a standard IFC type system tracking explicit and
  implicit flows~\cite{VolpanoIS96} (see
  \autoref{sec:type-system}).\footnote{We lift this limitation in
      \autoref{sec:fsfvslh-key-ideas} and \autoref{sec:acom-and-analysis},
      by replacing this IFC type system with a
      flow-sensitive IFC analysis accepting all programs~\cite{HuntS06}.}
  Unlike \USLH though, \FlexiSLH uses the information about which values may
  have been influenced by secret inputs and which ones not to only selectively
  apply SLH protections, as detailed below.
%
% Under the IFC discipline, public values really are public, but secret values are
% not a concern if they already leak sequentially.\ch{Unclear; this is actually about
%   speculation!  Too early for this anyway? It's not about the IFC discipline,
%   but about the relative security property}
%
% The same two variants are possible, depending on whether addresses or values are
% masked.\ch{This whole section should only be talking about aSLH though}
%
% It masks not only loads and stores, but also branch
% conditions, although it does so only when in the cases detailed below.
% they involve secret data.

\end{itemize}

%% \ch{Again, please use Figure 2 here as well!}

As shown in \autoref{fig:faslh-instance}, the treatment of branch
conditions is a hybrid between \USLH (whose unique contributions are
given in \colU{red}) and \SeliSLH (in \colSel{blue}), with some new
logic (in \colFlexnew{purple}).
When the branch condition is a secret expression, it behaves like \USLH, using
the misspeculation flag to mask the branch condition.
When the branch condition is public, it behaves like \SeliSLH, and leaves the
branch condition unchanged.

For array loads, while \USLH always masks the indices, \SeliSLH only masks
indices for loads that sequentially involve public data.
This is, however, not enough for relative security of non-\CCT programs: since
speculative loads leak the index, \FlexiSLH also has to mask secret indices
to prevent the following speculative leak (similar to \autoref{ex:gilles-cex} above).

\begin{listing}[Leakage through sequentially unreachable load]
\label{ex:gilles-cex-load}
\[
  \cifsnoelse{\cvar{false}}{
  \caread{\cvar{xsecret}}{\cvar{a}}{\cvar{isecret}}}
\]
\end{listing}

This is similar for array stores, while \USLH masks indices for all stores,
\SeliSLH only masks indices for stores if the expression being stored is secret.
This is again not enough for relative security of non-\CCT programs: since
stores leak the index, \FlexiSLH also masks secret indices of stores.

\begin{listing}[Leakage through sequentially unreachable store]
\label{ex:gilles-cex-store}
\[
  \cifsnoelse{\cvar{false}}{
  \cawrite{\cvar{a}}{\cvar{isecret}}{\cvar{epublic}}}
\]
\end{listing}

We have proved in Rocq that the masking done by \FlexiSLH is enough to enforce
the following variant of \autoref{def:rs-sec}:

\begin{theorem}
\label{thm:faslh-rs}
\FlexiSLH enforces relative security for all IFC-well-typed programs and for all
public-equivalent initial states.
\end{theorem}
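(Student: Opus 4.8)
The plan is to prove relative security by exhibiting a suitable cross-execution simulation between two speculative runs of $\transls{c}{}$ and relating it back to a sequential simulation between two runs of $c$. Concretely, I would fix an IFC-well-typed program $c$, two initial states $s_1, s_2$, and assume the sequential premise $\langle c, s_1 \rangle \seqobseq \langle c, s_2 \rangle$; then, given an arbitrary directive sequence $\Dirs$ and the two speculative runs $\langle \transls{c}{}, s_i, \BoolFalse \rangle \specmulti{\Obss_i}{\Dirs} \cdot$, I must show $\Obss_1 = \Obss_2$. The core is a single-step invariant, proved by induction on the speculative execution, that simultaneously tracks: (a) a relation between the two speculative states that agrees on all variables/array contents that are \emph{public} according to the IFC typing of the program point reached (the public-equivalence is preserved because secrets only flow, via the type system, into secret locations); (b) agreement of the misspeculation flag $\cvar{b}$ across the two runs (both start at $\BoolFalse$ and the flag updates inserted by the transform depend only on the masked branch conditions, which by (c) below are observation-equivalent); and (c) a correspondence between the speculative execution of $\transls{c}{}$ and a pair of \emph{sequential} executions of $c$ that the speculative run is "shadowing" — this is the standard "speculation is sequential execution on a forged state" idea, where out-of-bounds / forced-branch directives are reflected as ordinary sequential steps on a state whose contents at the forged indices are arbitrary but, crucially, treated as \emph{secret}.

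The key steps, in order, are: (1) State the step-indexed invariant precisely, including the "backstop" that whenever $\cvar{b} = \BoolTrue$ (misspeculating) every index actually used in a load/store is $\cvar{0}$ and every branch condition actually taken is the false branch — this is exactly what the $\colFlexnew{\vee\ \lnot\labelarith{\PubVars}{\cexpr{i}}}$ clauses and the $\colU{\cvar{b==0}\ \&\&\ \cexpr{be}}$ clause of \autoref{fig:faslh-instance} buy us, and it is what Listings~\ref{ex:gilles-cex-load} and~\ref{ex:gilles-cex-store} show is necessary. (2) Prove a preservation-of-typing lemma for the transform and the semantics: the IFC type system's invariant (secret data resides only in secret-typed locations) survives each transformed step, including forged speculative steps, because the only new data a forged step can inject sits behind a masked index/condition and hence is either $\cvar{0}$/false (public, harmless) or lands in a secret location. (3) Case-split the inductive step on the command being executed and the directive issued. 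For \texttt{skip}/\texttt{assign}/\texttt{seq} the argument is routine from the IFC invariant (assignments to public variables use public expressions by well-typedness, so both runs produce equal values and hence equal observations). For loads and stores, when the accessed location is public the two runs read/write equal values (IFC invariant) and emit equal index observations (the masked index is a function of $\cvar{b}$, which agrees, and of $\cexpr{i}$, which is public hence equal); when the location is secret the masking forces index $\cvar{0}$ under misspeculation and, non-speculatively, the sequential premise already guarantees the source runs agree on this observation or that one is a prefix of the other. For conditionals and loops, the masked condition $\colFlexnew{\cvar{b==0}\ \&\&\ \cexpr{be}}$ (secret case) evaluates to the same boolean in both runs whenever $\cvar{b}$ agrees; in the public case the unmasked condition agrees by IFC-equality; either way the branch-taken observation matches and $\cvar{b}$ is updated identically. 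The \force/\branch/out-of-bounds directives are handled by folding them into the "shadow sequential execution" bookkeeping of part (c) of the invariant. (4) Conclude: the invariant gives $\Obss_1 = \Obss_2$ pointwise along the execution, and since identical directives force $|\Obss_1| = |\Dirs| = |\Obss_2|$, we get full equality, which is \autoref{def-specobs} and hence $\specobseq$.

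The main obstacle I expect is step~(1)–(3) for the \emph{secret} load/store and secret-branch cases when the sequential premise only gives $\Obss_1 \lessgtr \Obss_2$ (prefix) rather than equality — i.e., reconciling the "prefixes of nonterminating/stuck sequential executions" subtlety (emphasized after \autoref{def:rs-sec}) with the need for exact observation equality speculatively. The resolution is that speculative execution with a fixed $\Dirs$ never gets stuck and always produces exactly $|\Dirs|$ observations, so the shadow sequential executions we use in the invariant must themselves be taken long enough (or padded with the forged steps) that the prefix relation upgrades to equality over the window we care about; making this precise — that the speculative run's observations are determined by the \emph{common prefix} of the two shadow sequential runs together with the masked (hence run-independent-under-misspeculation) data — is the delicate part of the bookkeeping. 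A secondary obstacle is getting the IFC-preservation lemma (step~2) to interact cleanly with the transform's structural recursion, since the transform threads $\cvar{b}$-updates through branch bodies; this is routine but tedious, and is exactly the kind of thing that is cleanest to discharge in Rocq, as the authors do.
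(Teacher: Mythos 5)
Your architecture---a single lockstep simulation between the two speculative runs of the hardened program, maintaining public-equivalence and agreement of \cvar{b}, with the sequential premise threaded in via ``shadow'' source runs---is a different decomposition from the paper, which factors the proof through an ideal semantics: \autoref{lem:bcc} turns each speculative run of the hardened program into an ideal run of the source, and \autoref{lem:faslh-ideal-rs} proves relative security of the ideal semantics by splitting the directive list at the first $\force$, using the sequential premise only on the $\step$-prefix, \autoref{lem:faslh-ideal-noninterference} at the pivot, and the unwinding lemma \autoref{lem:gilles-lemma} for the misspeculated suffix. Your inlined version could be made to work, but as written one step fails: in case (3) you claim that for a secret branch the masked condition $\cvar{b==0 \&\& }\cexpr{be}$ ``evaluates to the same boolean in both runs whenever $\cvar{b}$ agrees.'' That holds only under misspeculation ($\cvar{b}=1$, where it is forced to false). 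Before the first $\force$, $\cvar{b}=0$ and the masked condition is just $\cexpr{be}$, a secret expression; public-equivalence gives no agreement, and the IFC type system of \autoref{fig:ifc-type-system} deliberately admits secret branches, so the $\branch{\cdot}$ observations can differ. The only thing ruling this out is the sequential premise $\seqobseq$---exactly as you already invoke it for secret loads and stores in the non-speculative case. Relative security is not noninterference: \emph{every} secret-dependent observation emitted before the first misprediction, branches included, must be discharged against the source's sequential behavior; the paper makes this structural by having the ideal run coincide with the sequential run until the first $\force$, after which the premise is never used again.

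A second, repairable, problem is your bookkeeping (c): reflecting forced directives ``as ordinary sequential steps on a forged state'' is not well-defined for forced branches, since the speculative step emits $\branch{b'}$ for the actual value $b'$ of the condition yet continues with the branch $c_{\neg b'}$, which no sequential step from any state does. It is also unnecessary: after the first $\force$, your own case analysis uses only public-equivalence, flag agreement and the masking (this is precisely the content of \autoref{lem:gilles-lemma}, which for \FlexiSLH does preserve array public-equivalence even while misspeculating), so the shadow-sequential correspondence should simply be dropped at that point rather than maintained. With these two repairs---routing all pre-misspeculation observations through the sequential premise, and abandoning (c) after the first $\force$---your direct simulation becomes essentially an inlined version of the paper's BCC-plus-ideal-semantics proof; the paper's factoring costs the definition of the ideal semantics but buys reuse, since the same lemmas yield the \SeliSLH and \USLH results as corollaries.
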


The extra premise that the initial states are public-equivalent intuitively
captures that public inputs are allowed to leak, as we do not consider differences
in observations resulting from different public inputs.
%captures the requirement that the developer has properly identified the secret
%inputs they want to try to protect, since all other inputs will {\em not} be
%protected by \FlexaSLH.
%\jb{I find the phrasing here to be somewhat confusing}
%
Perhaps a bit counterintuitively, because of the semantic equivalence premise
from \autoref{def:rs-sec}, an input labeled as secret is protected speculatively
by \FlexiSLH only if it does not leak sequentially.
This is the most one can achieve for non-\CCT programs, without heavier
defenses enforcing \CCT~\cite{CauligiSJBWRGBJ19}.

% \ch{We seem to be in an inconsistent state: up to now we never talked about PA,
%   while below we do. What should we do about this?}
% \rb{
% We can drop PA from here, the translations don't use it and it doesn't
% show up until we get to the IFC type system and the ideal semantics
% --- I believe I had removed PA at some point and then re-added it for
% some reason}

Finally, we proved in Rocq that \FlexiSLH is a generalization of \SeliSLH
and \USLH. The former means that CCT code pays no extra performance penalty
under \FlexiSLH compared to \SeliSLH:

\begin{theorem}[Connection between \FlexiSLH and \SeliSLH]
\label{thm:faslh-aslh}
%For any \CCT program $c$ and public variable map\jb{maybe instead: (security) labeling?} $\PubVars$ we have
For any \CCT program $c$ and labeling $\PubVars$ assigning security levels to variables, we have
% and another of public arrays $\PubArrs$,
\[
\translAux{\cexpr{c}}{\FlexiSLH}{\PubVars} = \translAux{\cexpr{c}}{\SeliSLH}{\PubVars}
\]
\end{theorem}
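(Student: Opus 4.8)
The plan is a straightforward structural induction that exploits the fact that both $\translAux{\cexpr{c}}{\FlexiSLH}{\PubVars}$ and $\translAux{\cexpr{c}}{\SeliSLH}{\PubVars}$ are instances of the single master recipe of \autoref{fig:aslh-template}, differing only in the three helpers $\llbracket\cdot\rrbracket_{\mathbb{B}}$, $\llbracket\cdot\rrbracket_{\mathit{rd}}^{\cdot}$ and $\llbracket\cdot\rrbracket_{\mathit{wr}}^{\cdot}$. First I would isolate a small helper lemma comparing \autoref{fig:saslh-instance} and \autoref{fig:faslh-instance} directly: whenever $\labelbool{\PubVars}{\cexpr{be}} = \LabelPublic$ the two definitions of $\llbracket\cexpr{be}\rrbracket_{\mathbb{B}}$ agree (both return $\cexpr{be}$, since the $\FlexiSLH$ clause guarded by $\lnot\labelbool{\PubVars}{\cexpr{be}}$ is not taken); whenever $\labelarith{\PubVars}{\cexpr{i}} = \LabelPublic$ the two definitions of $\llbracket\cexpr{i}\rrbracket_{\mathit{rd}}^{\cvar{X}}$ agree for every target $\cvar{X}$, because the $\FlexiSLH$ side condition $\PubVars(\cvar{X}) \vee \lnot\labelarith{\PubVars}{\cexpr{i}}$ collapses to exactly the $\SeliSLH$ side condition $\PubVars(\cvar{X})$; and likewise $\llbracket\cexpr{i}\rrbracket_{\mathit{wr}}^{\cexpr{e}}$ agree whenever $\labelarith{\PubVars}{\cexpr{i}} = \LabelPublic$, the disjunction collapsing to $\lnot\labelarith{\PubVars}{\cexpr{e}}$. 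This sublemma is immediate by unfolding the two figures.

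Then I would proceed by induction on $\cexpr{c}$, inverting the CCT typing judgment $\wtct{\cexpr{c}}$ at each node. The cases $\cskip$, $\casgn{\cvar{x}}{\cexpr{e}}$ and $\cseq{\cexpr{c_1}}{\cexpr{c_2}}$ are trivial or follow directly from the induction hypotheses, since the recipe invokes no helper there. For $\cifs{\cexpr{be}}{\cexpr{c_1}}{\cexpr{c_2}}$ and $\cwhiles{\cexpr{be}}{\cexpr{c}}$, inverting the CCT rule yields that the guard is public, $\labelbool{\PubVars}{\cexpr{be}} = \LabelPublic$, so by the helper lemma the two instances of $\llbracket\cexpr{be}\rrbracket_{\mathbb{B}}$ coincide; the recursive occurrences match by the induction hypotheses and the surrounding syntax is identical, so the whole translated commands are equal. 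For $\caread{\cvar{X}}{\cvar{a}}{\cexpr{i}}$ and $\cawrite{\cvar{a}}{\cexpr{i}}{\cexpr{e}}$, inverting the CCT rule gives that the index is public, $\labelarith{\PubVars}{\cexpr{i}} = \LabelPublic$, and the corresponding part of the helper lemma closes the case.

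The only point that requires care — and hence the (mild) main obstacle — is lining up the side conditions: one must check that the CCT type system really does force every branch guard and every array index to carry a public security level with respect to the very same $\labelbool{\PubVars}{\cdot}$ / $\labelarith{\PubVars}{\cdot}$ used by the transformations, so that inversion at each node delivers precisely the premise the helper lemma consumes, and that the $\FlexiSLH$ side conditions are literally the $\SeliSLH$ ones weakened by an extra $\lnot\labelarith{\PubVars}{\cexpr{i}}$ (resp.\ $\lnot\labelbool{\PubVars}{\cexpr{be}}$) disjunct that the CCT rules rule out. There is no semantic content here whatsoever — no reasoning about executions, directives, or the misspeculation flag — so once the definitions are unfolded the equality is purely syntactic and the induction goes through uneventfully; in Rocq this is a short induction with each branch discharged by \texttt{reflexivity} after applying the helper lemma.
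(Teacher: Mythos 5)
Your proposal is correct and matches the paper's argument: the paper's own proof is simply ``by inspection of \autoref{fig:saslh-instance} and \autoref{fig:faslh-instance}'', i.e.\ exactly the observation that for a \CCT program every branch condition and array index is public, so the extra $\lnot\labelbool{\PubVars}{\cexpr{be}}$ and $\lnot\labelarith{\PubVars}{\cexpr{i}}$ disjuncts in \FlexiSLH never fire and the two instantiations of the master recipe coincide syntactically. Your structural induction with inversion of the \CCT typing at each node is just the fully spelled-out (Rocq-style) version of that same inspection, so there is nothing to add.
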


\begin{proof}[Proof intuition]
By inspection of \autoref{fig:saslh-instance}
and \autoref{fig:faslh-instance}.
\end{proof}

On the other hand, if all data is labeled as secret, then the program receives
exactly the same protection as with USLH:

\begin{theorem}[Connection between \FlexiSLH and \USLH]
\label{thm:faslh-uslh}
If all variables (both scalars and arrays) are labeled as
secret (\IE we use the labeling $(\lambda \_. \LabelSecret)$), %JB: we use labeling later, but "public variable map" in the previous theorem
then for any program $c$ we have
\[
\translAux{\cexpr{c}}{\FlexiSLH}{(\lambda \_. \LabelSecret)} % ; (\lambda \_. \LabelSecret)
= \translAux{\cexpr{c}}{\USLH}{}
\]
\end{theorem}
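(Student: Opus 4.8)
The plan is to proceed by structural induction on the source command $c$. The key observation is that the recursive recipe $\transl{\cdot}$ of \autoref{fig:aslh-template} is \emph{shared} by \FlexiSLH and \USLH: the two transformations differ only in how they instantiate the three helpers $\llbracket\cdot\rrbracket_{\mathbb{B}}$, $\llbracket\cdot\rrbracket_{\mathit{rd}}^{\cdot}$ and $\llbracket\cdot\rrbracket_{\mathit{wr}}^{\cdot}$. Hence for each command constructor the induction hypothesis together with one unfolding step reduces the goal to an equality between the \FlexiSLH and \USLH instances of these helpers, applied to the relevant sub-expressions. The cases for $\cskip$, assignment, and sequencing are immediate (no helper is used, or only the IH is needed), and the conditional, loop, load, and store cases each reduce to exactly one of the three helper equalities. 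So the whole theorem collapses to: under the labeling $\PubVars = (\lambda\_.\LabelSecret)$, the \FlexiSLH helpers of \autoref{fig:faslh-instance} equal the \USLH helpers of \autoref{fig:uslh-instance}, pointwise on all expressions.

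The crux is therefore a single lemma about labels: under the constant labeling $(\lambda\_.\LabelSecret)$, every arithmetic and boolean expression receives the label $\LabelSecret$, i.e. $\lnot\,\labelarith{(\lambda\_.\LabelSecret)}{\cexpr{e}}$ and $\lnot\,\labelbool{(\lambda\_.\LabelSecret)}{\cexpr{be}}$ for all $\cexpr{e}$ and $\cexpr{be}$ (and likewise $\lnot\,\PubVars(\cvar{X})$ for every name). First I would prove this by a routine induction on the syntax of expressions, using that the labeling functions propagate $\LabelSecret$ through every expression constructor. Given this lemma, each guard in \autoref{fig:faslh-instance} resolves the same way: the branch-condition helper always takes its first case $\cvar{b==0 \&\& }\,\cexpr{be}$ (because $\lnot\,\labelbool{\PubVars}{\cexpr{be}}$ holds), and both the load- and store-index helpers always take their first case $\ccond{\cvar{b==1}}{\cvar{0}}{\cexpr{i}}$ (because $\lnot\,\labelarith{\PubVars}{\cexpr{i}}$ alone already makes the disjunctive guard true, regardless of $\PubVars(\cvar{X})$ or $\labelarith{\PubVars}{\cexpr{e}}$). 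These are precisely the (unconditional) \USLH instances, so the helpers agree pointwise and the induction goes through — essentially the same ``by inspection'' argument used for \autoref{thm:faslh-aslh}, with the extra labeling lemma bridging the remaining side conditions.

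The main obstacle is entirely contained in that labeling lemma, and more specifically in its base cases: one must check that the labeling functions, as actually defined in the Rocq development, do assign $\LabelSecret$ to \emph{every} expression under the all-secret labeling — in particular to sub-expressions built only from literal constants, so that, e.g., a load or store at a constant index is still masked, matching \USLH's unconditional masking. This is the one place where the precise definition of $\labelarith{\cdot}{\cdot}$ and $\labelbool{\cdot}{\cdot}$ matters; everything else is bookkeeping over the shared recipe. A second, trivial thing to verify is that the two transformations insert the same fresh names (the flag $\cvar{b}$ and the constant $\cvar{0}$), which is immediate from the figures.
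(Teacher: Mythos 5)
Your plan is essentially the paper's own proof: the paper establishes this theorem ``by inspection of \autoref{fig:faslh-instance} and \autoref{fig:uslh-instance}'', which is exactly the pointwise comparison of the three helper instantiations that your structural induction over the shared recipe reduces the goal to, with the all-secret labeling $(\lambda\_.\LabelSecret)$ forcing every guard in the \FlexiSLH helpers into its masking case, i.e.\ the unconditional \USLH case. Your caveat about expressions built only from constants is the one spot where that inspection has any real content (it hinges on how $\labelarith{\cdot}{\cdot}$ and $\labelbool{\cdot}{\cdot}$ treat literals, which the mechanization's definitions must and do make compatible with the stated equality), but checking that base case is a refinement of, not a departure from, the paper's argument.
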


\begin{proof}[Proof intuition]
By inspection of \autoref{fig:faslh-instance}
and \autoref{fig:uslh-instance}.
\end{proof}

\iflater
\rb{Examples: w.r.t USLH, show how if something is secret everything becomes full of masks, but if some things are public then checks disappear. A notion of overprotection}
\fi
% The theorem in this part is almost the same, focus on what comes from where

% Marking inputs as secret (exactly the same, maybe show for SelSLH + theorem
% there too)\ch{Started talking about this, but we didn't yet introduce $\PubVars$
%   and $\PubArrs$ yet, did we?}\rb{Now P is mentioned above, and PA by analogy in Th.4 should work, it doesn't show up before}

% % CH: Starting to wonder if this distinction between
% %     vars and arrs is overly pedantic for the paper; it is helpful for starting
% %     autoref\{thm:faslh-uslh}, so maybe fine

% % Keeping track of how secrets leak... \CCT vs. IFC (no details here) -- CH: what's above now already okay

% % No strong \CCT requirement. Right now we use IFC, but it could be something else, notably with an eye towards allowing all programs, -- CH: added
% % Maybe mention testing and flow-sensitive IFC analysis -- CH: referenced/added 

% If IFC, public means public, and secret means maybe but if there is a sequential leak, then it is OK\ch{+1, right this one is important}\rb{Added}

\subsection{\FsFlexvSLH: Flexible vSLH for All Programs}
\label{sec:fsfvslh-key-ideas}

For simplicity, our FiSLH variant above, and also our simple FvSLH variant of
\autoref{sec:FvSLH}, make use of an IFC type system~\cite{VolpanoIS96}
with a fixed assignment of variables as public or secret.
We show that this is not a conceptual limitation by devising \FsFlexvSLH, which
instead uses a flow-sensitive IFC analysis that accepts arbitrary source programs~\cite{HuntS06}
(\autoref{fig:flow-tracking} in \autoref{sec:acom-and-analysis}).

The IFC analysis statically computes whether the secrets in the
initial state can influence each variable used during the execution or not,
which results in a new variable assignment for the final state and an
annotated program in which program expressions like branch conditions are given
IFC levels.
The analysis is a sound static over-approximation of the semantics, joining the
IFC information computed over the two branches of conditionals and computing a
fixed-point for while loops.

% \ch{First whack at explaining the key ideas of the IFC analysis and
%   transformation here. In the same way the previous subsection references a
%   figure that will be fully explained later, we can bring in a figure
%   that helps us illustrate the key ideas. How about a figure of the IFC
%   analysis? That's the most interesting part, right?}
% \jb{We could move that figure here, I'm not convinced how much it would help}
% CH: added a forward reference above, without bringing the figure here

Once the analysis annotates expressions with IFC
levels, \FsFlexvSLH uses these annotations to make masking decisions, but
otherwise it works the same as our FvSLH variant of \autoref{sec:FvSLH}.

We prove in Rocq that \FsFlexvSLH enforces relative security {\em for all}
source programs, the same as for Ultimate SLH:
\begin{theorem}
\label{thm:fsfvslh-rs}
\FsFlexvSLH enforces relative security for all source programs
and for all public-equivalent initial states.
\end{theorem}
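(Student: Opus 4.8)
The plan is to reuse the proof architecture of the fixed-labeling \FvSLH{} result (\autoref{sec:FvSLH}), replacing the flow-insensitive IFC type system by the flow-sensitive analysis of \autoref{sec:fsfvslh-key-ideas}. I would route the argument through an \emph{ideal speculative semantics} on annotated source programs: a variant of the speculative semantics $\specmulti{}{}$ that threads the statically computed IFC annotations and uses the misspeculation flag to mask, directly at the semantic level, exactly the observations that \FsFlexvSLH{} masks in the compiled code --- secret-annotated branch guards, and the indices and loaded values touching secret-annotated data. Two lemmas then carry the proof. (i)~\emph{Correctness}: for every directive stream $\Dirs$, the speculative run of $\translFsFlexvSLH{c}$ from $(s,\BoolFalse)$ and the ideal run of the annotated $c$ from $(s,\BoolFalse)$ emit the same observation sequence. (ii)~\emph{Relative security of the ideal semantics}: if $\langle c,s_1\rangle \seqobseq \langle c,s_2\rangle$ and $s_1 \sim_P s_2$, then for every $\Dirs$ the ideal runs of the annotated $c$ from $(s_1,\BoolFalse)$ and $(s_2,\BoolFalse)$ emit equal observations. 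The theorem is then immediate: given the sequential obs-equivalence premise, public-equivalence, and any $\Dirs$, lemma (i) transports the two speculative runs of $\translFsFlexvSLH{c}$ to ideal runs with the same observations, and lemma (ii) equates those.

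Lemma (i) is the \FvSLH{} correctness argument of \autoref{sec:FvSLH} re-indexed by the annotations: a lockstep simulation between speculative configurations of $\translFsFlexvSLH{c}$ and ideal configurations of the annotated $c$, proved by induction on the execution, maintaining the invariants that the dedicated flag variable $\cvar{b}$ in the compiled state mirrors the ideal misspeculation flag and that the conditional expressions inserted by \FsFlexvSLH{} compute precisely the masked values prescribed by the ideal rules. The only genuinely new element is that masking now consults the per-program-point annotations instead of a global labeling, but since \FsFlexvSLH{} uses those annotations verbatim this case analysis is routine.

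The core of the proof, and the step I expect to be the main obstacle, is lemma (ii), which rests on \emph{soundness of the flow-sensitive analysis} lifted to the ideal semantics. I would first prove the purely sequential soundness statement: if the analysis annotates $c$ from an input labeling $\Gamma$ to an output labeling $\Gamma'$, then two sequential runs from $\Gamma$-low-equivalent states agree on control flow and on every expression the analysis marks public, and end $\Gamma'$-low-equivalent --- the delicate cases being the conditional, where $\Gamma'$ joins the two branch labelings and, when the guard is secret, promotes everything assigned in either branch to secret, and the loop, whose fixed-point labeling must be shown sound for an arbitrary iteration count. I would then strengthen this to the ideal semantics under attacker directives. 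The key observations are: both ideal runs make identical control-flow choices (public guards agree by low-equivalence; secret guards under a set flag are masked to $\BoolFalse$ and agree trivially; forced branches agree because the directive is shared), so the analysis labeling --- sound for \emph{both} directions of every branch and for every iteration count --- stays an over-approximation along the common speculative path; misspeculated writes are discarded on rollback, so whenever the flag is $\BoolFalse$ the ideal state coincides with the sequential one reached by the same control flow; and every observation emitted while the flag is set is either masked or computed from public-annotated data, hence equal in the two runs. The sequential premise $\langle c,s_1\rangle \seqobseq \langle c,s_2\rangle$ is consumed exactly in the remaining case --- an \emph{un}masked secret-annotated guard or index, which can only arise while the flag is $\BoolFalse$, i.e.\ on the sequential track --- where it forces the two runs to agree on that value, since otherwise the corresponding sequential executions would already diverge on their observations. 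Assembling these facts yields equal ideal observation sequences, and hence the theorem. Relative to the fixed-labeling \FvSLH{} proof, this version additionally requires the flow-sensitive analysis soundness lemma, but in exchange it drops the well-typedness hypothesis entirely: the analysis accepts every source program, so relative security holds for all of them.
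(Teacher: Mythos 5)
Your proposal is correct and follows essentially the same route as the paper: a backwards compiler correctness lemma relating speculative runs of $\translFsFlexvSLH{\annotated{\cexpr{c}}}$ to an annotation-driven ideal semantics (\autoref{lem:bcc-fs}), composed with a relative-security lemma for that ideal semantics (\autoref{lem:fs-fsvslh-ideal-rs}), whose proof splits into a pre-misspeculation phase handled by the sequential premise and a post-$\force$ phase handled by a noninterference/unwinding argument based on soundness of the flow-sensitive analysis. Two remarks. First, what you call ``analysis soundness lifted to the ideal semantics'' is exactly the technical point the paper isolates: since the ideal semantics rewrites the command and the labelings at each step, the analysis result itself is not an invariant, so the paper introduces a separate well-labeledness judgment, proves the analysis produces well-labeled programs (\autoref{lem:static-tracking-well-labeled}), and proves the ideal step preserves well-labeledness (\autoref{lem:fs-ideal-preserves-wl}); some such preservation device is needed to make your ``stays an over-approximation along the speculative path'' precise. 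Second, your justification that ``misspeculated writes are discarded on rollback'' is inaccurate for this paper's semantics, which is a forward semantics with no rollback: once the flag is set to $\BoolTrue$ it never reverts, and the correct (and sufficient) observation is simply that while the flag is still $\BoolFalse$ no misspeculation has yet occurred, so the ideal run coincides with the sequential one; this does not change the structure of your argument, but the rollback phrasing should be dropped.
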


Proving \FsFlexvSLH relatively secure is more challenging than for
FiSLH and FvSLH. All three proofs are inspired by
\citet{ShivakumarBBCCGOSSY23} and decompose the effort into a compiler
correctness proof with respect to an ideal semantics and a separate relative
security proof of the ideal semantics.
The ideal semantics we devised for \FsFlexvSLH is more interesting though, as it is
instrumented to dynamically and soundly track IFC on each execution step~\cite{SabelfeldR09}.
To show the relative security of the ideal semantics, we further introduce
a well-labeledness relation for annotated commands.
We prove that our static IFC analysis produces well-labeled commands and
that our dynamically-IFC-tracking ideal semantics preserves well-labeledness.
This interesting proof is detailed in \autoref{sec:fsfvslh}.

\section{Definitions}
\label{sec:defs}

\subsection{Language Syntax and Sequential Semantics}
\label{sec:defs-syntax-seq-semantics}

Throughout the paper, we use a simple while language
with arrays we call \SourceLang. Its syntax is shown
in \autoref{fig:syntax}. A first layer comprises pure arithmetic
expressions (\texttt{aexp}) on natural numbers (which we also write
$\cexpr{ae}$, or $\cexpr{ie}$ if they compute array indices), and
boolean expressions (\texttt{bexp}) (also written $\cexpr{be}$). The
second layer of commands (\texttt{com}) includes standard imperative
constructs, as well as array accesses.

\begin{figure}
  \centering
  \[
  \begin{aligned}
    \cexpr{e} \in\texttt{aexp} ::=& ~ \cexpr{n}\in\mathbb{N} && \text{number}\\
    |& ~ \texttt{X} \in \mathcal{V}&&\text{scalar variable}\\
    |& ~ \mathrm{op}_{\mathbb{N}}(\cexpr{e}, \ldots, \cexpr{e})&&\text{arithmetic operator}\\
    |& ~ \ccond{\cexpr{b}}{\cexpr{e}}{\cexpr{e}} &&\text{constant-time conditional}\\
%    \\
    \cexpr{b}\in\texttt{bexp} ::=& ~ \BoolTrue ~|~\BoolFalse && \text{boolean}\\
    |&~ \mathrm{cmp}(\cexpr{e}, \cexpr{e})&&\text{arithmetic comparison}\\
    |&~ \mathrm{op}_{\mathbb{B}}(\cexpr{b},\ldots, \cexpr{b})&&\text{boolean operator}\\
%    \\
    \cexpr{c}\in\texttt{com} ::=&~ \cskip &&\text{do nothing}\\
    |&~\casgn{\cvar{X}}{\cexpr{e}}&&\text{assignment}\\
    |&~\cseq{\cexpr{c}}{\cexpr{c}}&&\text{sequence}\\
    |&~\cifs{\cexpr{b}}{\cexpr{c}}{\cexpr{c}}&&\text{conditional}\\
    |&~\cwhiles{\cexpr{b}}{\cexpr{c}}&&\text{loop}\\
    |&~\caread{\cvar{X}}{\cvar{a}}{\cexpr{e}}&&\text{read from array}\\%\text{load from array }\cvar{a}\text{ at index }\cexpr{ie}\\
    |&~\cawrite{\cvar{a}}{\cexpr{e}}{\cexpr{e}}&&\text{write to array}%&&\text{store to array }\cvar{a}\text{ at index }\cexpr{ie}
  \end{aligned}
  \]
  %\ch{We are not at all consistent in the paper whether if and while have an
    %``end'' at the end or not.  I would be for removing all ``end''s, unless
    %there's a very good reason for them, but I doubt.}
  % JB: since we don't do any parsing, we don't need to care about ambiguity, and almost all ``end''s were already omitted for space reasons.
  % I have removed the remaining ones.
  \vspace{-1em}
  \caption{Language syntax}
  \label{fig:syntax}
\end{figure}

Program states are divided into scalar variables
$\cvar{X}, \cvar{Y}, \ldots \in \mathcal{V}$ and arrays
$\cvar{a}, \cvar{b}, \ldots \in \mathcal{A}$.
The scalar state $\ScalarState$ assigns
values to scalar variables, and is used to evaluate
arithmetic and boolean expressions involving scalar variables in the
usual way, written $\llbracket \cdot \rrbracket_{\rho}$. To update the
value of a variable, we write
$\subst{\cvar{X}}{v}{\rho}$.
The second half of the program state is the array state $\ArrayState$,
%% \ld{Why quote it? If it's the way we call it, why not use $\backslash$emph?}
%% \rb{I wanted to avoid too close a correspondence with low level
%% but I think it's fine to remove the quotes}
which assigns a fixed size $| \cvar{a} |_{\mu}$ to each
array, and defines a lookup function $\eval{\cvar{a}[i]}{\mu}$ to
fetch the value of an array at a given valid index, \IE in the range
$[0, | \cvar{a} |_{\mu})$. To update the value of an array at a valid
index, we write $\subst{\cvar{a}[i]}{v}{\mu}$.
The contents of an array cannot be used directly in arithmetic or
boolean expressions, and can only be transferred to and from scalar
variables through read and write commands.

%% \rb{
%% Consistent notation: $n$ vs $v$ for (natural) values, $i$ vs $ie$ for
%% natural indices vs index expressions?}\ch{Forcing consistency can
%%   make things worse; like forcing someone to call all variables x}

We define a standard small-step operational semantics for \SourceLang
(for details, see \autoref{sec:spec-semantics}%
\ifappendix and also \autoref{sec:appendix-sequential-semantics}\fi).
%% using the
%% step relation shown in \autoref{fig:seq-semantics}.
States are triples
composed of the next command to execute, and the scalar and array
states. We write
$\seqstate{\cexpr{c}}{\rho}{\mu} \seqstep{o} \seqstate{\cexpr{c'}}{\rho'}{\mu'}$,
%% \ld{for a step in those semantics},
or $\seqstate{\cexpr{c}}{\rho}{\mu} \seqstep{o} \cdot$ if we do not
care about the state after the step.
%% An assignment command updates the
%% mapping of a variable to a new value, written
%% $\subst{\cvar{X}}{v}{\rho}$, and an array write updates the mapping of
%% a valid index of an array to a new value, written
%% $\subst{\cvar{a}[i]}{v}{\mu}$. In both cases, those values are the
%% result of evaluating an arithmetic expression.
%
A step produces an optional \emph{observation}
$o \in \mathit{Option}(\mathit{Obs})$ that represents 
information gained by an attacker via side channels.
%the passive
%capabilities of an attacker to obtain information about the program via
%side channels. % JB: observations don't represent capabilities, the represent information - and cache timing attacks aren't really passive, are they?
We use the standard model for sequential side-channel
attackers, which observe the control flow of a program
($\branch{b}$ for a branch condition that evaluates to a boolean $b$)
and the accessed memory addresses ($\OARead{\cvar{a}}{i}$ and %JB: keep addresses, since this includes which array
$\OAWrite{\cvar{a}}{i}$ resp. for array reads and writes at
$\cvar{a[i]}$). If a step does not produce an observation, we write
$\bullet$ instead of a concrete event.
% \rb{Note that memory observations leak the ``address'' of the access but
%     do not directly show the value read or written.}
%     CH: made this standard fact obvious in the phrasing above
%
%% This is the standard threat model for sequential side channel
%% attackers.
%
Multi-step execution is the reflexive and transitive closure of the
step relation, written $\seqmulti{\Obss}$, where
$\Obss \in \mathit{List}(\mathit{Obs})$ is a trace of events.
% observe that -- CH: nowhere to look for it
In our formalization, silent steps leave no mark in the trace.
Full executions are multi-step executions of the form
$\seqstate{\cexpr{c}}{\rho}{\mu} \seqmulti{\Obss} \seqstate{\cskip}{\rho'}{\mu'}$.

Evaluating arithmetic and boolean
expressions does not produce any observable event, even though there
is a strong resemblance between the conditional expression
$\ccond{\cexpr{be}}{\cexpr{e_1}}{\cexpr{e_2}}$ and the conditional
command $\cifs{\cexpr{be}}{\cexpr{c_1}}{\cexpr{c_2}}$.
% \ld{This command is not syntactically correct though... Maybe something like $\cifs{\cexpr{be}}{\casgn{\cvar{x}}{\cexpr{e_1}}}{\casgn{\cvar{x}}{\cexpr{e_2}}}$?}
Internally, this imposes the use of branchless and unpredicted logic to implement
$\ccond{\cexpr{be}}{\cexpr{e_1}}{\cexpr{e_2}}$, which is critical to the
correctness of our \SLH countermeasures.
% in \autoref{sec:formal-results}. -- CH: also in Key ideas

%% We define a simple while language with arrays in \autoref{fig:syntax}. We write $\cexpr{ae}$ for arithmetic expressions and $\cexpr{ie}$ in case they
%% are used as indices. We also write $\cexpr{be}\in\mathcal{B}$ for boolean expressions, $\cvar{X},\cvar{Y}\in\mathcal{X}$ for registers and $\cvar{a}\in\mathcal{A}$ for arrays.
%% The size of an array $\cvar{a}$ is denoted by $\size{\cvar{a}}$ and does not change during the execution of a program.
%% Most notably, this language supports unpredicted conditional expressions, meaning that there will never be
%% speculation on these arithmetic expressions, as opposed to conditional commands. This feature is essential for each of
%% the \SLH variants to work correctly.

\begin{example}
\label{ex:v1std-formal}

Let us revisit \autoref{ex:v1std} using this formal model. There are
only two ways to fully evaluate that code under the sequential
semantics. Without loss of generality, suppose $\mu$ contains the
4-element array $\cvar{a}_1 = [0; 7; 1; 2]$ and another large array
$\cvar{a}_2$, say of size 1000. Suppose also that
$\lookup{\rho}{\cvar{a$_1$\_size}} = 4$.
% cache lines are of size 8, and $\rho$ stores this value in
% $\cvar{CACHE\_LINE\_SIZE}$,\ch{this seems completely irrelevant here}
%
Then, depending on the prospective index on $\cvar{a}_1$ given by
$\cvar{i}$:

%% of size at least $29$

%% cache lines
%% are of size 4, $\mu$ contains the $4$-element array $\cvar{a} = [0;
%% 7; 1; 2]$ and another array $\cvar{b}$ of size at least $29$ (the
%% largest $0$-based index in $\cvar{a}$ is $7$, times the cache line
%% size for accesses on $\cvar{b}$), and $\rho$ correctly reflects the
%% sizes of $\cvar{a}$ and the cache lines, so
%% $\eval{\cvar{A$_1$\_SIZE}}{\rho} = 4$ and
%% $\eval{\cvar{CACHE\_LINE\_SIZE}}{\rho} = 4$. Then:

\begin{enumerate}
\item
If the access is valid, \EG $\lookup{\rho}{\cvar{i}} = 1$, the
conditional produces an observation $\branch{\BoolTrue}$, and
afterwards the \texttt{then} branch executes, emitting two % additional
observations:
% \ld{ (one for each read operation) }
$\OARead{\cvar{a}_1}{1}$ and $\OARead{\cvar{a}_2}{7}$ (the
largest element in $\cvar{a}_1$ is 7, so all accesses on $\cvar{a}_2$
based on the contents of $\cvar{a}_1$ will be valid).
\item
If the access is not valid, \EG $\lookup{\rho}{\cvar{i}} = 4$, the
conditional produces an observation $\branch{\BoolFalse}$, and the
program terminates without any additional observations.
\end{enumerate}

We can see that under these conditions, the code is sequentially
secure, \IE out of bounds accesses cannot occur.
If we have a pair of states $\ScalarState_1, \ScalarState_2$ that
agree on the attacker-supplied value of $\cvar{i}$ (and the program
data, as described above), then the premise
of \autoref{def:rs-sec} is also satisfied.
\end{example}

\subsection{Speculative Semantics}
\label{sec:spec-semantics}

%JB: if necessary, we can probably remove or shorten this transition
The observations produced by executions in the sequential
semantics allow us to reason about leakage that follows necessarily
from the intended control flow and memory access patterns of the
program. This is enough for properties like CCT in the standard
leakage model, where the attacker is passive.
%% and cannot indirectly
%% influence program execution.
%
However, a speculative attacker is able to actively influence the
program by causing it to veer outside the paths prescribed by the
sequential semantics,
%% through manipulation of microarchitectural elements.
% like the branch predictor in the case of a Spectre v1 attacker
and these transient paths empower the attacker
%% These active capabilities have transient but observable effects on
%% microarchitectural state
to make additional observations about the state of the program.
To model these more powerful attackers, we use a speculative
semantics \cite{ShivakumarBBCCGOSSY23} that reflects these active
capabilities, shown in \autoref{fig:spec-semantics}. (This is a strict
extension of the sequential semantics, which we recover by removing
the additions highlighted in \colHighlight{red}.)

% RB: If we need space, we can use the figure below and remove the
% sequential semantics or relocate it to an appendix, one could also
% highlight what changes between Spec_If and Spec_If_Force, and so on

\begin{figure}
  \centering
  \[
  \inferrule*[left=Spec\_Asgn]
  {v = \eval{\cexpr{ae}}{\rho}}
  {\SpecEval{\casgn{\cvar{X}}{\cexpr{ae}}}{\cexpr{\rho}}{\cexpr{\mu}}{\highlight{b}}{\cskip}
   {\subst{\cvar{X}}{v}{\rho}}
   {\cexpr{\mu}}{\highlight{b}}{\bullet}{\highlight{\bullet}}}
  \]\[
  \inferrule*[left=Spec\_Seq\_Step]
  {\SpecEval{\cexpr{c_1}}{\rho}{\mu}{\highlight{b}}{\cexpr{c_1'}}{\rho'}{\mu'}{\highlight{b'}}{o}{\highlight{d}}}
  {\SpecEval{\cseq{\cexpr{c_1}}{\cexpr{c_2}}}{\rho}{\mu}{\highlight{b}}{\cseq{\cexpr{c_1'}}{\cexpr{c_2}}}{\rho'}{\mu'}{\highlight{b'}}{o}{\highlight{d}}}
  \]
  \[
  \infer[Spec\_While]
  {\cexpr{c_{while}} = \cwhiles{\cexpr{be}}{\cexpr{c}}}
  {\SpecEval{\cexpr{c_{while}}}{\rho}{\mu}{\highlight{b}}{\cifs{\cexpr{be}}{\cseq{\cexpr{c}}{\cexpr{c_{while}}}}{\cskip}}{\rho}{\mu}{\highlight{b}}{\bullet}{\highlight{\bullet}}}
  \]\[
  \inferrule*[left=Spec\_Seq\_Skip]
  {\phantom{\quad}}
  {\SpecEval{\cseq{\cskip}{\cexpr{c}}}{\rho}{\mu}{\highlight{b}}{\cexpr{c}}{\rho}{\mu}{\highlight{b}}{\bullet}{\highlight{\bullet}}}
  \]
  \[
  \infer[Spec\_If]
  {b' = \eval{\cexpr{be}}{\rho}}
  {\SpecEval{\cifs{\cexpr{be}}{\cexpr{c_\BoolTrue}}{\cexpr{c_\BoolFalse}}}{\rho}{\mu}{\highlight{b}}{\cexpr{c}_{b'}}{\rho}{\mu}{\highlight{b}}{\branch{b'}}{\highlight{\step}}}
  \]\[
\highlight{
  \infer[Spec\_If\_Force]
  {b' = \eval{\cexpr{be}}{\rho}}
  {\SpecEval{\cifs{\cexpr{be}}{\cexpr{c_\BoolTrue}}{\cexpr{c_\BoolFalse}}}{\rho}{\mu}{b}{\cexpr{c}_{\neg b'}}{\rho}{\mu}{\BoolTrue}{\branch{b'}}{\force}}
}
  \]
  % RB: These pairs of rules may read better one above the other
  \[
  \infer[Spec\_Read]
  {i = \eval{\cexpr{ie}}{\rho} \\
   v = \eval{\cvar{a}[i]}{\mu} \\
   i < | \cvar{a} |_{\mu}
  }
  {\SpecEval{\caread{\cvar{X}}{\cvar{a}}{\cexpr{ie}}}{\rho}{\mu}{\highlight{b}}{\cskip}
   {\subst{\cvar{X}}{v}{\rho}}
   {\mu}{\highlight{b}}{\OARead{\cvar{a}}{i}}{\highlight{\step}}}
  \]\[
\highlight{
  \infer[Spec\_Read\_Force]
  {i = \eval{\cexpr{ie}}{\rho} \\
   v = \eval{\cvar{b}[j]}{\mu} \\
   i \ge | \cvar{a} |_{\mu} \\
   j < | \cvar{b} |_{\mu}
  }
% TODO: Vertical spacing above/below arrows, typeface, array map update and lookup syntax
  {\SpecEval{\caread{\cvar{X}}{\cvar{a}}{\cexpr{ie}}}{\rho}{\mu}{\BoolTrue}{\cskip}
   {\subst{\cvar{X}}{v}{\rho}}
   {\mu}{\BoolTrue}{\OARead{\cvar{a}}{i}}{\DLoad{\cvar{b}}{j}}}
}
  \]
  \[
  \infer[Spec\_Write]
  {i = \eval{\cexpr{ie}}{\rho} \\
   v = \eval{\cexpr{ae}}{\rho} \\
   i < | \cvar{a} |_{\mu}
  }
  {\SpecEval{\cawrite{\cvar{a}}{\cexpr{ie}}{\cexpr{ae}}}{\rho}{\mu}{\highlight{b}}{\cskip}{\rho}
   {\subst{\cvar{a}[i]}{v}{\mu}}
   {\highlight{b}}{\OAWrite{\cvar{a}}{i}}{\highlight{\step}}}
  \]\[
\highlight{
  \infer[Spec\_Write\_Force]
  {i = \eval{\cexpr{ie}}{\rho} \\
   v = \eval{\cexpr{ae}}{\rho} \\
   i \ge | \cvar{a} |_{\mu} \\
   j < | \cvar{b} |_{\mu}
  }
  {\SpecEval{\cawrite{\cvar{a}}{\cexpr{ie}}{\cexpr{ae}}}{\rho}{\mu}{\BoolTrue}{\cskip}{\rho}
   {\subst{\cvar{b}[j]}{v}{\mu}}
   {\BoolTrue}{\OAWrite{\cvar{a}}{i}}{\DStore{\cvar{b}}{j}}}
}
  \]
  \vspace{-1em}
  \caption{Speculative semantics}
  \label{fig:spec-semantics}
\end{figure}

The small-step relation
$\specstate{\cexpr{c}}{\rho}{\mu}{b} \specstep{o}{d} \specstate{\cexpr{c}'}{\rho'}{\mu'}{b'}$
is a generalization of the sequential relation, extended in two
ways. The first change is the addition of an optional speculation
directive $d \in \mathit{Option}(\mathit{Dir})$ that abstracts the
active capabilities of the attacker to steer execution down a
specific speculative path. Observation-producing rules in the
sequential semantics are split into two speculative variants. The first
variant behaves exactly like its sequential counterpart, and does so
when allowed to by the attacker using the directive
$\step$. The second variant misspeculates, executing a branch that
should not be taken (when issuing directive $\force$ for
conditionals), or performing an arbitrary array access (when issuing
directives $\DLoad{\cvar{a}}{i}$ and $\DStore{\cvar{a}}{i}$ resp. for
reads and writes).
Note the initial speculation flag in
\textsc{Spec\_Read\_Force} and \textsc{Spec\_Write\_Force}:
the attacker can only cause misspeculation on array accesses after having forced a
branch misprediction, and only if it manages to force an
out-of-bounds access. In this case, we conservatively give the
attacker choice over the resulting access,
overapproximating the speculative capabilities of a Spectre v1 attacker.
We write $\bullet$ for steps where no directive is needed.

The second change is the extension of program states with a fourth
component, a boolean flag $b$ that indicates whether the program has
diverged from its sequential execution. The flag is set to
$\BoolFalse$ at the start of the execution, and updated to $\BoolTrue$
when the attacker forces a branch misprediction.

Again, we write $\specmulti{\Obss}{\Dirs}$ for multi-step execution,
where $\Dirs$ is the sequence of speculative directives selected by
the attacker.
Because our security definitions quantify over all possible execution
paths, the speculative semantics does not need to model implementation
details like finite speculation windows or rollback
\cite{ShivakumarBBCCGOSSY23}, as also discussed in \autoref{sec:related-work}.

%% we will quantify over all linear speculative executions, we will not
%% need to model implementation details like speculation windows and
%% rollback, and once misspeculation occurs it remains in place for the
%% rest of the run.

%% With these two semantics we will be able to reason about
%% transformations like \SLH, where the goal will be to guarantee that
%% all speculative leaks are also sequential, \IE that the speculative
%% attacker cannot observe more than its sequential counterpart.

\begin{example}

  \label{ex:v1std-not-rs}
Using the sequential and speculative semantics for \SourceLang, we can
show that \autoref{ex:v1std} without protections is not relative secure,
\IE does not satisfy \autoref{def:rs-sec}.
% \ld{I don't understand -- is the example supposed to prove the property?}

Assume the scalar variables $\cvar{i}$ and $\cvar{a$_1$\_size}$ and
the arrays $\cvar{a}_1$ and $\cvar{a}_2$ are public, and all other
data is secret. As we have seen in \autoref{ex:v1std-formal}, the
program satisfies \autoref{def:cct-sec}, and therefore for any pair of
public-equivalent states also \autoref{def:seq-obseq}.
%
%% Consider the family of states that agree on the data used to
%% execute \autoref{ex:v1std}. We relate a pair of scalar states
%% $\rho_1 \sim \rho_2$ iff $\eval{\cvar{a$_1$\_size}}{\rho_1}
%% = \eval{\cvar{a$_1$\_size}}{\rho_2}$
%% % , $\eval{\cvar{CACHE\_LINE\_SIZE}}{\rho_1}
%% % = \eval{\cvar{CACHE\_LINE\_SIZE}}{\rho_2}$
%% and
%% $\eval{\cvar{i}}{\rho_1} = \eval{\cvar{i}}{\rho_2}$, and a pair of
%% array states $\mu_1 \sim \mu_2$ iff $\eval{\cvar{a}_1[i]}{\mu_1}
%% = \eval{\cvar{a}_1[i]}{\mu_2}$ and $\eval{\cvar{a}_2[i]}{\mu_1}
%% = \eval{\cvar{a}_2[i]}{\mu_2}$ for any index $i$.\ch{All these
%%   seem just consequences of public-equivalence for a choice of
%%   public and secret variables/arrays, which is not even mentioned though.
%%   Seems strange and too low level.}
%
%% \rb{
%% For any pair of related states, we can establish the ``observational
%% equivalence'' precondition of relative security\ch{concrete reference
%%   to relative security definition needed}
%% (we should give this a name).}\ch{This is just a consequence of this program being CCT!}
%
But this does not mean that two executions from related states with
the same directives will yield identical observations.
Let us return to the second scenario in the last
example, where $\lookup{\rho}{\cvar{i}} = 4$ in an attempt to access
$\cvar{a}_1$ out of bounds.
Suppose that $\mu_1$ and $\mu_2$ are identical except for the fact
that $\mu_1$ contains a 1-element array $\cvar{a}_3 = [42]$, but in
$\mu_2$ we have $\cvar{a}_3 = [43]$.
The attacker can falsify the property as follows:
\begin{enumerate}
\item
Issue a directive $\force$ to speculatively begin executing the
\texttt{then} branch, even though the condition is
false. This produces an observation $\branch{\BoolFalse}$ on
both runs.
\item
The first read is an out of bounds access, so the attacker can choose
a directive $\DLoad{\cvar{a}_3}{0}$. This produces an observation
$\OARead{\cvar{a}_1}{4}$ on both runs, but the scalar states now
differ: $\cvar{j}$ holds the secret 42 in $\rho_1$ and 43 in
$\rho_2$.
\item
The second read is allowed to proceed normally using % the directive
$\step$, %but it is too late: \jb{This phrasing seems very informal}
producing the observation
$\OARead{\cvar{a}_2}{42}$ on the first run and $\OARead{\cvar{a}_2}{43}$
on the second, revealing the secret.% and violating the property.
\end{enumerate}
\end{example}
%
%% The speculation flag would always be false in the sequential world
%
%% We define semantics for a speculative execution $\SpecEval{\cexpr{c}}{\rho}{\mu}{b}{\cexpr{c'}}{\rho'}{\mu'}{b}{O}{D}$ in figure
%% ~\ref{fig-spec-execution}.
%
\subsection{Simple IFC Type System}
\label{sec:type-system}
\begin{figure*}
\centering
\[
\infer[WT\_Skip]
{\quad}
{\PubVars; \PubArrs \vdash_{\highlight{\mathpc}} \cskip}
\quad
\infer[WT\_Asgn]
{\labelarith{\PubVars}{\cexpr{a}} = \ell \quad
 \highlight{\mathpc \sqcup} \ell \sqsubseteq \PubVars(\cvar{X})}
{\PubVars; \PubArrs \vdash_{\highlight{\mathpc}} \casgn{\cvar{X}}{\cvar{a}}}
\quad
\infer[WT\_Seq]
{\PubVars; \PubArrs \vdash_{\highlight{\mathpc}} \cexpr{c_1} \quad
 \PubVars; \PubArrs \vdash_{\highlight{\mathpc}} \cexpr{c_2}}
{\PubVars; \PubArrs \vdash_{\highlight{\mathpc}} \cseq{\cexpr{c_1}}{\cexpr{c_2}}}
\quad
\infer[WT\_If]
{\labelbool{\PubVars}{\cexpr{be}} \highlight{ = \ell} \quad
 \PubVars; \PubArrs \vdash_{\highlight{\mathpc \sqcup \ell}} \cexpr{c_1} \quad
 \PubVars; \PubArrs \vdash_{\highlight{\mathpc \sqcup \ell}} \cexpr{c_2}}
{\PubVars; \PubArrs \vdash_{\highlight{\mathpc}} \cifs{\cexpr{be}}{\cexpr{c_1}}{\cexpr{c_2}}}
\]
\[
\infer[WT\_While]
{\labelbool{\PubVars}{\cexpr{be}} \highlight{ = \ell} \quad
 \PubVars; \PubArrs \vdash_{\highlight{\mathpc \sqcup \ell}} \cexpr{c}}
{\PubVars; \PubArrs \vdash_{\highlight{\mathpc}} \cwhiles{\cexpr{be}}{\cexpr{c}}}
\quad
\infer[WT\_ARead]
{\labelarith{\PubVars}{\cexpr{i}} \highlight{ = \ell_{\cexpr{i}}} \quad
 \highlight{\mathpc \sqcup \ell_{\cexpr{i}} \sqcup} \PubArrs(\cvar{a}) \sqsubseteq \PubVars(\cvar{X})}
{\PubVars; \PubArrs \vdash_{\highlight\mathpc} \caread{\cvar{X}}{\cvar{a}}{\cexpr{i}}}
\quad
\infer[WT\_AWrite]
{\labelarith{\PubVars}{\cexpr{i}} \highlight{ = \ell_{\cexpr{i}}} \quad
 \labelarith{\PubVars}{\cexpr{e}} = \ell \quad
 \highlight{\mathpc \sqcup \ell_{\cexpr{i}} \sqcup} \ell \sqsubseteq \PubArrs(\cvar{a})}
{\PubVars; \PubArrs \vdash_{\highlight{\mathpc}} \cawrite{\cvar{a}}{\cexpr{i}}{\cexpr{e}}}
\]
\vspace{-1em}
\caption{IFC type system used by \FlexiSLH and \FlexvSLH}
% TODO: Fix while language, labeling function notation
\label{fig:ifc-type-system}
\end{figure*}

As mentioned in \autoref{sec:FiSLH}, our current implementation of flexible index SLH
in Rocq makes use of a simple IFC type system \emph{à la}
Volpano-Smith~\cite{VolpanoIS96} %\footnote{We lift this restriction in \autoref{sec:fsfvslh} for the \FsFlexvSLH variant.}
that tracks explicit and implicit flows in
\SourceLang programs
(\autoref{fig:ifc-type-system}).
For this, we make use of
%% two main technical devices that build on
a pair of maps that assign security labels to
scalar variables ($\PubVars: \mathcal{V} \rightarrow \mathcal{L}$) and
to array variables ($\PubArrs: \mathcal{A} \rightarrow \mathcal{L}$),
respectively. We consider the two-point lattice of boolean levels,
with $\LabelPublic$ for public and $\LabelSecret$ for secret.
Since public may flow into secret, we take % JB: public and secret should either both or neither be treated as adjectives; otherwise the sentence is very weird
$\LabelPublic \sqsubseteq \LabelSecret$.
We lift the map of public variables $\PubVars$ to a pair of functions
$\labelbool{\PubVars}{\cexpr{be}}$ and
$\labelarith{\PubVars}{\cexpr{ae}}$ that compute the security levels of
arithmetic and boolean expressions in the usual way.
% CH: whatever, this is all standard
% It will be useful to establish noninterference properties based on
% these labels, such that from related states two program runs will be
% observationally equivalent.
% %% \rb{Give a good name to this, maybe in a
% %% definition?, then Gilles's lemma among other things becomes an
% %% instance of this property.}
% To relate states % in our semantics, -- CH: nonsense?
We define a standard public-equivalence relation
$\ScalarState_1 \sim_{\PubVars} \ScalarState_2$ that relates two scalar states
iff they agree on the public variables according to $\PubVars$, and
similarly $\ArrayState_1 \sim_{\PubArrs} \ArrayState_2$ iff they agree on the
sizes and contents of the public arrays according to $\PubArrs$.

We define the standard typing judgment
$\PubVars; \PubArrs \vdash_{\LabelPublic} \cexpr{c}$ using the rules from
\autoref{fig:ifc-type-system}.
This will ensure that command $\cexpr{c}$, run from two public-equivalent states
is secure, in the sense that secret variables do not affect public ones.
This type system generalizes the
type systems used for \CCT (\EG in \autoref{sec:specct});
\autoref{fig:ifc-type-system} highlights
the extensions of our type system w.r.t. the \CCT type
system in \colHighlight{red}: 
the labels of branch conditions and indices are no longer required to be public.
Instead, we maintain a PC label to track labels of surrounding branch conditions,
and adjust the read and write cases to account for implicit flows using the PC and index labels.
%implicit flows are tracked through the PC label, and the
%labels of branching conditions and memory addresses are no longer
%required to be public. \jb{lacking good phrasing for the conditions}
% and joins are generalized to account for these new labels and flows. % JB: that sentence doesn't make sense.

\rb{Show any preservation lemmas?}\ch{No space}

%% \autoref{fig:ifc-type-system} shows the IFC type system. The
%% additions to the \CCT type system are highlighted: we need to keep
%% track of flows through the PC; the labels of the guards of if and
%% while commands, and of the indices of reads and writes, are no longer
%% required to always be public; and join expressions generalize to
%% account for these new sources of flows.

\section{Formal Results for \FlexiSLH}
\label{sec:formal-results}

The main goal of this section is to prove that \FlexiSLH enforces
relative security (\autoref{def:rs-sec}), and from there to derive
security of other \SLH variants that are special cases of \FlexiSLH.

\subsection{Ideal Semantics}
\label{sec:ideal-aslh}

Similarly to \citet{ShivakumarBBCCGOSSY23}, 
our security proofs use an auxiliary semantics which refines
the speculative semantics
from \autoref{fig:spec-semantics}. This \emph{ideal semantics}
introduces new restrictions, which reflect the idealized behavior of
programs that are hardened against speculative leaks by \FlexiSLH.

%% In order to prove security of \FlexaSLH, we define intermediate semantics for commands,
%% which correspond to the behavior of the transformed code. These are referred to as \emph{ideal semantics}.
%% Small-step ideal semantics are written $\IdealEval{\cexpr{c}}{\rho}{\mu}{b}{\cexpr{c'}}{\rho'}{\mu'}{b'}{\Obss}{\Dirs}$, and are described in \autoref{fig:ideal-semantics-faslh}.
%% The goal is to be more restrictive than the speculative semantics --- which we do not have control on --- in order to obtain additional information in the proof of relative security.

% TODO: Directions, formatting, review
% TODO: Replace spec macros with new ideal ones (P, PA implicit?)
% TODO: Maybe make ie -> i to move closer to other figures? Assume N indices, so no 0 <=?
% TODO: inference vs. infer
\begin{figure}
  \centering
  \[
  \infer[Ideal\_If]
  {\highlight{\labelbool{\PubVars}{\cexpr{be}} = \ell} \\
   b' = \highlight{(\ell \vee \neg b)} \mathrel{\highlight{\wedge}} \eval{\cexpr{be}}{\ScalarState}
  }
  {\IdealEval{\cifs{\cexpr{be}}{\cexpr{c_\BoolTrue}}{\cexpr{c_\BoolFalse}}}{\rho}{\mu}{b}{\cexpr{c_{b'}}}{\rho}{\mu}{b}{\branch{b'}}{\step}}
  \]\[
  \infer[Ideal\_If\_Force]
  {\highlight{\BoolLabel{\cexpr{be}} = \ell} \\
   b' = \highlight{(\ell \vee \neg b)} \mathrel{\highlight{\wedge}} \eval{\cexpr{be}}{\ScalarState}
  }
  {\IdealEval{\cifs{\cexpr{be}}{\cexpr{c_\BoolTrue}}{\cexpr{c_\BoolFalse}}}{\rho}{\mu}{b}{\cexpr{c_{\neg b'}}}{\rho}{\mu}{\BoolTrue}{\branch{b'}}{\force}}
  \]
  \[
  \infer[Ideal\_Read]
  {\highlight{\labelarith{\PubVars}{\cexpr{ie}} = \ell_{\cexpr{i}}} \\
   i = \highlight{
   \left\{
   {\begin{matrix*}[l]
    0 & \mathit{if} ~ (\lnot\ell_{\cexpr{i}} \vee \lookup{\PubVars}{\cvar{X}}) \wedge b \\
    \colAll{\eval{\cexpr{ie}}{\rho}} & \mathit{otherwise}
    \end{matrix*}
   }
   \right.
   }
   \\\\
   %% i = \highlight{(\ell_{\cexpr{i}} \vee \lnot\lookup{\PubVars}{\cvar{X}}) \wedge b ~ ? ~ 0 ~ : ~} \eval{\cexpr{ie}}{\rho} \\
   v = \eval{\cvar{a}[i]}{\mu} \\
   i < | \cvar{a} |_{\mu}
  }
  {\IdealEval{\caread{\cvar{X}}{\cvar{a}}{\cexpr{ie}}}{\rho}{\mu}{b}{\cskip}
   %% {\rho[\casgn{\cvar{X}}{\eval{\cvar{a}}{\mu}(i)}]}
   {\subst{\cvar{X}}{v}{\rho}}
   {\mu}{b}{\OARead{\cvar{a}}{i}}{\step}}
  \]\[
  \infer[Ideal\_Read\_Force]
  {\highlight{\labelarith{\PubVars}{\cexpr{ie}}} \\
   \highlight{\lnot\lookup{\PubVars}{\cvar{X}}} \\
   i = \eval{\cexpr{ie}}{\rho} \\\\
   v = \eval{\cvar{b}[j]}{\mu} \\
   i \ge | \cvar{a} |_{\mu} \\
   j < | \cvar{b} |_{\mu}
  }
  {\IdealEval{\caread{\cvar{X}}{\cvar{a}}{\cexpr{ie}}}{\rho}{\mu}{\BoolTrue}{\cskip}
   %% {\rho[\casgn{\cvar{X}}{\eval{\cvar{b}}{\mu}(j)}]}
   {\subst{\cvar{X}}{v}{\rho}}
   {\mu}{\BoolTrue}{\OARead{\cvar{a}}{i}}{\DLoad{\cvar{b}}{j}}}
  \]
  \[
  \infer[Ideal\_Write]
  {i = \highlight{
   \left\{
   {\begin{matrix*}[l]
    0 & \mathit{if} ~ (\lnot\ell_{\cexpr{i}} \vee \lnot\ell) \wedge b \\
    \colAll{\eval{\cexpr{ie}}{\rho}} & \mathit{otherwise}
    \end{matrix*}
   }
   \right.
   }
\\\\
   %% i = \highlight{(\ell_{\cexpr{i}} \vee \neg (\lnot\ell \vee \PubArrs = \lambda \_.\BoolTrue)) \wedge b ~ ? ~ 0 ~ : ~} \eval{\cexpr{ie}}{\rho} \\\\
   \highlight{\labelarith{\PubVars}{\cexpr{ie}} = \ell_{\cexpr{i}}} \\
   \highlight{\labelarith{\PubVars}{\cexpr{ae}} = \ell} \\
   v = \eval{\cexpr{ae}}{\rho} \\ % TODO: v -> n for consistency/ease?
   i < | \cvar{a} |_{\mu}
  }
  {\IdealEval{\cawrite{\cvar{a}}{\cexpr{ie}}{\cexpr{ae}}}{\rho}{\mu}{b}{\cskip}{\rho}
   %% {\mu[\casgn{\cvar{a}[i]}{v}]}
   {\subst{\cvar{a}[i]}{v}{\mu}}
   {b}{\OAWrite{\cvar{a}}{i}}{\step}}
  %% \]
  %% \[
  \]\[
  \infer[Ideal\_Write\_Force]
  {\highlight{\labelarith{\PubVars}{\cexpr{ie}}} \\
  %\highlight{\labelarith{\PubVars}{\cexpr{ae}} \vee \PubArrs = \lambda \_. \LabelSecret} \\\\
  \highlight{\labelarith{\PubVars}{\cexpr{ae}}} \\\\
   i = \eval{\cexpr{ie}}{\rho} \\
   v = \eval{\cexpr{ae}}{\rho} \\  % TODO: v -> n for consistency/ease?
   i \ge | \cvar{a} |_{\mu} \\
   j < | \cvar{b} |_{\mu}
  }
  {\IdealEval{\cawrite{\cvar{a}}{\cexpr{ie}}{\cexpr{ae}}}{\rho}{\mu}{\BoolTrue}{\cskip}{\rho}
   %% {\mu[\casgn{\cvar{b}[j]}{v}]}
   {\subst{\cvar{b}[j]}{v}{\mu}}
   {\BoolTrue}{\OAWrite{\cvar{a}}{i}}{\DStore{\cvar{b}}{j}}}
  \]
  \vspace{-1em}
  \caption{Ideal semantics for \FlexiSLH (selected rules)}
  \label{fig:ideal-semantics-faslh}
\end{figure}

Again, we define the semantics as a small-step relation and write
$\PubVars \vdash \IdealEval{\cexpr{c}}{\rho}{\mu}{b}{\cexpr{c'}}{\rho'}{\mu'}{b'}{\Obss}{\Dirs}$.
The labeling map $\PubVars$ is constant throughout, so we
elide it and the turnstile from the relation, although we may refer
to it inside premises.
The speculative and the ideal semantics share the same collection of
rules. \autoref{fig:ideal-semantics-faslh} shows the subset of rules
that change from the speculative semantics, with additions highlighted
in \colHighlight{red}.

Only rules which generate observations are modified,
and all changes are exclusively additions to premises.
The two rules for conditionals simply mask the branch condition so
that it defaults to the \texttt{else} branch when the branch condition is secret and the speculation flag is set.
This prevents leaking information about such conditions, such as in
\autoref{ex:gilles-cex}.\jb{This is previously explained in 3.2 for USLH, can we reference this?}
Similarly, the $\step$ rules for array loads and stores add conditions
under which the index % (\IE the address) -- CH: space, but also suspicious about address = index
is masked, so that it will be $\cvar{0}$
when the speculation flag is set, and therefore guaranteed to be in bounds.
The misspeculating rules on array loads and stores, on the other hand, only apply
under certain safe scenarios, and get stuck when these are not met.
For example, a misspeculating read is only allowed to proceed if the index is public
and the result is stored to a secret variable. This stems from the fact that the operation
leaks the index, but also allows the attacker to read from any (public or secret) array,
as demonstrated in \autoref{ex:v1std-not-rs}.
Finally, multi-step executions $\idealmultiarrow{\Obss}{\Dirs}$ and ideal
observational equivalence $\approx_i$ are defined in precisely the
same way as their speculative counterparts.

\subsection{Key Theorems}

There are several key lemmas used in our proof
of \autoref{thm:faslh-rs}. The first of these is a backwards compiler
correctness (BCC) result which shows that the program hardened by \FlexiSLH executed with the
unrestricted speculative semantics behaves like the original source
program executed with the ideal semantics, \IE the \FlexiSLH translation
correctly implements the restrictions of the ideal semantics.
Despite their parallels, establishing the links between the two is
technically subtle. \rb{Do we want to mention loops here, from Léon's
comment?}

%% First, we prove that the \FlexaSLH transformation actually behaves
%% the way it is described in the ideal semantics. This may seem obvious since those are defined to fit the transformation, but it does require some work
%% because programs with infinite loops prevent an induction on the command from working.

\begin{lemma}[Backwards compiler correctness]
\label{lem:bcc}
\setcounter{equation}{0}
\begin{align}
 &(\forall \cvar{a}. | \cvar{a} |_{\ArrayState} > 0) \wedge
  \cvar{b} \notin \UsedVars{\cexpr{c}} \wedge
  \lookup{\ScalarState}{\cvar{b}} = \natofbool{\SemSpecFlag}
 &\Rightarrow
 \label{eq:bcc-sideconds}
\\
 &\specstate{\transls{\cexpr{c}}{\FlexiSLH}}{\ScalarState}{\ArrayState}{\SemSpecFlag}
  \specmulti{\Obss}{\Dirs}
  \specstate{\cexpr{c'}}{\ScalarState'}{\ArrayState'}{\SemSpecFlag'}
 &\Rightarrow
 \label{eq:bcc-target}
\\
  % Below must be the ideal semantics!
 &\exists c''.
    \idealmulti
    {\specstate{\cexpr{c}}{\ScalarState}{\ArrayState}{\SemSpecFlag}}
    {\Obss}{\Dirs}
    {\specstate{\cexpr{c''}}{\subst{\cvar{b}}{\lookup{\ScalarState}{\cvar{b}}}{\ScalarState'}}{\ArrayState'}{\SemSpecFlag'}}
 &\wedge
  \label{eq:bcc-source}
\\
  % Strengthening, conjunction
 &\qquad
  (
    \cexpr{c'} = \cskip \Rightarrow
    \cexpr{c''} = \cskip \wedge
    \lookup{\ScalarState'}{\cvar{b}} = \natofbool{b'}
      %% \left\{
      %% \begin{matrix}
      %% 1 & \mathit{if} ~ b' \\
      %% 0 & \mathit{otherwise}
      %% \end{matrix}
      %% \right.
  )
  \label{eq:bcc-strenghtened}
\end{align}
\end{lemma}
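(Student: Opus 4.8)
The plan is to prove this as a \emph{backward simulation} between the speculative execution of the hardened program and the ideal execution of the source, by induction on the length of the speculative run in \eqref{eq:bcc-target}. The complication is that one ideal step of the source is implemented by \emph{several} speculative steps of the translation --- chiefly the administrative flag updates $\casgn{\cvar{b}}{\ldots}$ that the recipe of \autoref{fig:aslh-template} inserts at the head of each branch --- and that, while those administrative steps are in flight, the synchronization $\lookup{\ScalarState}{\cvar{b}} = \natofbool{\SemSpecFlag}$ from \eqref{eq:bcc-sideconds} is momentarily broken. I would therefore first generalize the statement to arbitrary \emph{residual} configurations: the speculative commands reachable from $\transls{\cexpr{c}}{\FlexiSLH}$ are, up to left-nesting under sequencing, of the form $\transls{\cexpr{c'}}{\FlexiSLH}$ possibly preceded by one pending $\cvar{b}$-assignment. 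I would characterize this set inductively, pair each shape with the exact relationship it forces between $\lookup{\ScalarState}{\cvar{b}}$ and $\SemSpecFlag$, and prove the simulation at that level of generality, recovering Lemma~\ref{lem:bcc} as the instance where the residual config is literally $\transls{\cexpr{c}}{\FlexiSLH}$.

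The core is a single-step matching lemma. For every speculative step of a residual configuration, either (i) it is a silent administrative step --- an assignment to $\cvar{b}$, the unfolding of a leading $\cseq{\cskip}{\cdot}$ left over from such an assignment, or a loop unfolding (\textsc{Spec\_While}) --- which is matched by \emph{zero} ideal steps (a stutter) while re-establishing the flag relationship; or (ii) it is an observation-producing step, matched by exactly one ideal step of the corresponding kind, emitting the same observation and consuming the same directive. Case (ii) reduces to an arithmetic fact I would isolate separately: evaluating the translated masking expression under $\eval{\cdot}{\ScalarState}$, given $\lookup{\ScalarState}{\cvar{b}} = \natofbool{\SemSpecFlag}$, yields exactly the masked value prescribed by the matching ideal rule of \autoref{fig:ideal-semantics-faslh}. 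Concretely, $\eval{\cvar{b==0 \&\& } \cexpr{be}}{\ScalarState} = \lnot\SemSpecFlag \wedge \eval{\cexpr{be}}{\ScalarState}$, which matches $(\ell \vee \lnot\SemSpecFlag) \wedge \eval{\cexpr{be}}{\ScalarState}$ of \textsc{Ideal\_If} in the case $\ell = \LabelSecret$ where \autoref{fig:faslh-instance} actually inserts this mask, and equals $\eval{\cexpr{be}}{\ScalarState}$ otherwise; and $\eval{\ccond{\cvar{b==1}}{\cvar{0}}{\cexpr{ie}}}{\ScalarState}$ is $0$ exactly when $\SemSpecFlag$ holds, matching the index-masking guards of \textsc{Ideal\_Read} and \textsc{Ideal\_Write}, with the side condition $\forall \cvar{a}.\; |\cvar{a}|_{\ArrayState} > 0$ being what guarantees this masked index is in bounds so the ideal step does not get stuck. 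For the \textsc{Spec\_Read\_Force} and \textsc{Spec\_Write\_Force} variants, the extra premises of their ideal counterparts (index public, target variable secret, etc.) hold \emph{precisely} in the cases where \autoref{fig:faslh-instance} omits the corresponding mask, so a forced out-of-bounds access is possible on the translation only when the ideal rule also permits it.

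With the single-step lemma in hand the induction on \eqref{eq:bcc-target} is routine: the zero-step case takes $\cexpr{c''} = \cexpr{c}$, and \eqref{eq:bcc-strenghtened} then follows because $\transls{\cexpr{c}}{\FlexiSLH} = \cskip$ forces $\cexpr{c} = \cskip$ with the flag relationship already given by \eqref{eq:bcc-sideconds}; the inductive case composes one single-step match with the induction hypothesis applied to the residual configuration, accumulating ideal steps. The substitution $\subst{\cvar{b}}{\lookup{\ScalarState}{\cvar{b}}}{\ScalarState'}$ in \eqref{eq:bcc-source} is explained by $\cvar{b} \notin \UsedVars{\cexpr{c}}$: the source never touches $\cvar{b}$ under the ideal semantics, so its value stays at the initial $\lookup{\ScalarState}{\cvar{b}}$, whereas the administrative assignments move $\lookup{\ScalarState'}{\cvar{b}}$; and the conjunct $\cexpr{c''} = \cskip$ in \eqref{eq:bcc-strenghtened} holds because reaching $\cskip$ on the target means every pending administrative assignment has fired, so the flag relationship is exact again, now with $\SemSpecFlag'$.

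I expect the main obstacle to be neither the arithmetic nor the induction skeleton, but the precise formulation of the residual-configuration invariant and its threading through sequencing and loop unfolding: one must say exactly which deferred $\cvar{b}$-assignments are in flight, what each will write, and why that is consistent with $\SemSpecFlag$ --- including the subtle point that \textsc{Spec\_If\_Force} into the \texttt{then} branch lands on a pending $\casgn{\cvar{b}}{\ccondtight{\ldots}{\cvar{b}}{\cvar{1}}}$ whose guard now evaluates to false and hence correctly writes $1$, matching the flag set to $\BoolTrue$ by the force --- and the need to distinguish a leading $\cskip$ that is administrative residue from one that legitimately corresponds to a reduced source subcommand. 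Getting all of these cases uniform is where the Rocq bookkeeping concentrates.
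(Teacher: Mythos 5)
Your proposal is sound in substance, but it takes a genuinely different route from the paper's proof. You set up a stuttering backward simulation: a generalized invariant over \emph{residual} configurations (partially executed translations with pending $\cvar{b}$-assignments), a single-step matching lemma, and an induction on the number of speculative steps of the target run. The paper instead inducts on the compound measure $|\cexpr{c}| + |\Obss|$ (size of the source command plus length of the observation trace), followed by a case analysis on the source command, inverting the target multi-step run in command-sized chunks; it explicitly notes that the number of steps cannot be recovered from $\Obss$ or $\Dirs$ because of silent steps, and that loops are handled by the trace length since every iteration emits at least one observation. Your architecture buys a uniform, local step lemma and never needs the measure trick, but it pays exactly where you predict: the decompilation invariant must cover more shapes than ``$\transls{\cexpr{c'}}{\FlexiSLH}$ possibly preceded by one pending $\cvar{b}$-assignment'' --- the while translation also leaves a \emph{trailing} flag assignment as the right component of a sequence, and unfolded-loop forms must be related on both sides --- whereas the paper's route avoids formulating any such invariant at the cost of the compound measure and sequence-decomposition reasoning. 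One small wrinkle in your matching discipline: you cannot both match the target's \textsc{Spec\_While} unfolding by \emph{zero} ideal steps and match every observable target step by \emph{exactly one} ideal step, since the source loop must also unfold (silently) before its branch; either match the unfoldings one-to-one, or let an observable target step be matched by a silent ideal unfolding followed by the observable ideal step. Your arithmetic claims about the masks (branch conditions, read/write indices, and the coincidence between the unmasked cases of \autoref{fig:faslh-instance} and the premises of the forced ideal rules, with array non-emptiness ruling out forced accesses on masked indices) are all correct and are exactly the facts the paper's case analysis also relies on.
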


\begin{proof}[Explanation and proof outline]

This lemma connects a ``target run'' of the hardened command 
\eqref{eq:bcc-target} to a ``source run'' of the original command
\eqref{eq:bcc-source}. The initial configurations of both runs are
identical except for the translation function; the memories and the
speculation flag in the final states are also identical. Additionally,
it concludes that terminating target
runs correspond to terminating source runs, and that for these, the
values of the speculation flags maintained by the program and the semantics
coincide \eqref{eq:bcc-strenghtened}.

The result holds under a number of side conditions \eqref{eq:bcc-sideconds}.
For one, we require all arrays to be non-empty, so that when
we mask an index to 0, this does not result in an out-of-bound access
that could be misspeculated upon.
For another, the original program must not use the variable reserved
for the speculation flag (hence why $\cvar{b}$) remains unchanged throughout
the source run), and last, the initial value of $\cvar{b}$ must be the same
as the semantic flag $b$. The function $\natofbool{\cdot}$ simply produces
the natural encoding of the input boolean.

The proof proceeds by induction on the number of steps in the target
run \eqref{eq:bcc-target}, followed by a case analysis on the
command. Note however that the number of steps cannot be inferred from
the size of either the observations or the directives due to the
presence of silent steps. Instead, we use induction on
$| \cexpr{c} | + | \Obss |$, where $| \cexpr{c} |$ is
defined simply as the number of its constructors. This compound
measure effectively bounds the number of steps taken by the
speculative semantics in the target run. Loops are handled by the size
of the trace of observations, as each iteration produces at least one
observation.
\end{proof}

A property of the ideal semantics used in the following lemmas
allows us to relate the results of a pair of runs from related
states when they are given the same directives and produce the same
observations.
Here we show this for single steps, but we have generalized this to multi-steps
%
%% The following result enables the use of \autoref{lem:gilles-lemma} in the proof of relative security of the ideal semantics.
%% It is stated for single steps, but can be generalized to multiple steps in the ideal semantics
by ensuring that $\cexpr{c_1}$ and $\cexpr{c_2}$
cannot reduce without producing an observation.

\begin{lemma}[Noninterference of $\idealsteparrow{}{}$ with equal observations]
  \label{lem:faslh-ideal-noninterference}
  \begin{align*}
    &\wtifc{\mathpc}{\cexpr{c}}\wedge \ScalarState_1 \sim_{\PubVars} \ScalarState_2\wedge
    \ArrayState_1 \sim_{\PubArrs} \ArrayState_2 &\Rightarrow\\
    &\IdealEval{\cexpr{c}}{\ScalarState_1}{\ArrayState_1}{b}{\cexpr{c_1}}{\ScalarState_1'}{\ArrayState_1'}{b_1}{o}{d}&\Rightarrow\\
    &\IdealEval{\cexpr{c}}{\ScalarState_2}{\ArrayState_2}{b}{\cexpr{c_2}}{\ScalarState_2'}{\ArrayState_2'}{b_2}{o}{d}&\Rightarrow\\
    &\cexpr{c_1} = \cexpr{c_2} \wedge b_1 = b_2 \wedge \ScalarState_1' \sim_{\PubVars} \ScalarState_2' \wedge \ArrayState_1' \sim_{\PubArrs} \ArrayState_2' &
  \end{align*}
\end{lemma}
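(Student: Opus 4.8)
The plan is to prove this by induction on the typing derivation $\wtifc{\mathpc}{\cexpr{c}}$ (equivalently, by structural induction on $\cexpr{c}$), inverting the two ideal steps in each case. Two routine auxiliary facts are needed throughout, and I would establish them first: (i) if $\ScalarState_1 \sim_{\PubVars} \ScalarState_2$ and an arithmetic (resp.\ boolean) expression $\cexpr{e}$ satisfies $\labelarith{\PubVars}{\cexpr{e}} = \LabelPublic$ (resp.\ $\labelbool{\PubVars}{\cexpr{e}} = \LabelPublic$), then $\eval{\cexpr{e}}{\ScalarState_1} = \eval{\cexpr{e}}{\ScalarState_2}$, by induction on $\cexpr{e}$; and (ii) updating a scalar (resp.\ array) state at a \emph{secret} location preserves $\sim_{\PubVars}$ (resp.\ $\sim_{\PubArrs}$), while updating both related states at the \emph{same public} location with the \emph{same} value also preserves the relation.

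The first real step is to observe that both hypothesized transitions start from the \emph{same} command $\cexpr{c}$, with the \emph{same} directive $d$ and the \emph{same} speculation flag $b$. Since $d$ is exactly what tells $\textsc{Ideal\_If}$ apart from $\textsc{Ideal\_If\_Force}$, and the plain read/write rules apart from their forced variants --- and in the forced variants $d$ additionally pins down the same alternate array and offset in both runs --- both derivations must instantiate the \emph{same} rule of \autoref{fig:ideal-semantics-faslh}; moreover, because $b$ and all security labels are shared, every masking side-condition evaluates identically in the two runs. The easy cases then go through mechanically: $\cskip$ takes no step; $\casgn{\cvar{X}}{\cexpr{e}}$ touches neither arrays nor flag, and if $\PubVars(\cvar{X}) = \LabelPublic$ then $\textsc{WT\_Asgn}$ (\autoref{fig:ifc-type-system}) forces $\labelarith{\PubVars}{\cexpr{e}} = \LabelPublic$ so fact~(i) gives equal assigned values, whereas if $\cvar{X}$ is secret fact~(ii) applies; $\cseq{\cexpr{c_1}}{\cexpr{c_2}}$ is immediate when $\cexpr{c_1} = \cskip$ and otherwise follows from the induction hypothesis applied to $\cexpr{c_1}$ (well typed at $\mathpc$ by inverting $\textsc{WT\_Seq}$), repackaging the residuals as $\cseq{\cexpr{c_1'}}{\cexpr{c_2}}$; $\cwhiles{\cexpr{b}}{\cexpr{c}}$ merely unfolds to a common conditional with unchanged state and flag; and for a conditional the observation $\branch{b'}$ records the \emph{effective} (post-masking) branch outcome $b'$, so equality of observations directly yields $b_1' = b_2'$, hence equal residual commands, equal flags, and unchanged states.

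The genuinely interesting cases are array reads and writes. For $\caread{\cvar{X}}{\cvar{a}}{\cexpr{ie}}$ via $\textsc{Ideal\_Read}$, equality of the $\OARead{\cvar{a}}{\cdot}$ observations forces the two (possibly masked) indices to agree, say $i_1 = i_2 = i$; if $\PubVars(\cvar{X}) = \LabelSecret$ the write to $\cvar{X}$ is harmless by fact~(ii), and if $\PubVars(\cvar{X}) = \LabelPublic$ then $\textsc{WT\_ARead}$ forces $\PubArrs(\cvar{a}) = \LabelPublic$, so $\ArrayState_1 \sim_{\PubArrs} \ArrayState_2$ gives $\eval{\cvar{a}[i]}{\ArrayState_1} = \eval{\cvar{a}[i]}{\ArrayState_2}$ and the update preserves $\sim_{\PubVars}$; arrays and flag are untouched. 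Via $\textsc{Ideal\_Read\_Force}$ the rule's own guard $\lnot\PubVars(\cvar{X})$ (\IE $\cvar{X}$ is secret) makes the write to $\cvar{X}$ harmless and both flags become $\BoolTrue$, while the directive keeps the accessed array and offset identical. The write cases are symmetric: $\textsc{Ideal\_Write}$ yields equal indices from the equal observations, and when $\PubArrs(\cvar{a}) = \LabelPublic$ the typing constraint of $\textsc{WT\_AWrite}$ forces the stored value to be public and hence equal (by fact~(i)), so the array update preserves $\sim_{\PubArrs}$ (and when $\cvar{a}$ is secret fact~(ii) applies directly); and $\textsc{Ideal\_Write\_Force}$'s guard that the stored expression is public again gives equal stored values, so writing them into the common alternate location preserves $\sim_{\PubArrs}$ whatever that array's label is. I would then lift this single-step statement to the multi-step version used in the relative-security argument, which needs only the additional assumption that neither command can take a silent step.

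I expect the array read/write cases to be the crux. The delicate point is to check that the \emph{conjunction} of (a) the masking conditions in the plain rules, (b) the safety guards of the forced rules --- index public, and respectively target variable secret (reads) / stored value public (writes) --- and (c) the consequences of the equal-observation hypothesis together with the IFC typing premise always suffices to re-establish both $\sim_{\PubVars}$ and $\sim_{\PubArrs}$; the scenario one must get right is that the forced rules never allow a secret value to land in a public array, which is precisely what those guards --- mirroring \FlexiSLH's own masking decisions --- are designed to rule out.
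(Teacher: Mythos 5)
Your proposal is correct and matches the paper's proof in essence: the paper argues by induction on the derivation of the first ideal step with inversion on the second, which yields exactly the case analysis you carry out (your induction on the typing derivation produces the same cases, with the sequence case as the only recursive one), and your handling of the read/write and forced rules — using equal observations to equate indices, the IFC typing premises to equate public stored/loaded values, and the forced rules' guards (public index, secret target for reads, public value for writes) to re-establish $\sim_{\PubVars}$ and $\sim_{\PubArrs}$ — is precisely what makes the paper's one-line proof go through. Your closing remark about lifting to multi-steps under a no-silent-step assumption also matches the paper's stated generalization.
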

\begin{proof}[Proof sketch]
  By induction on the derivation of the first step, and inversion on the second one. 
\end{proof}

Another key technical result relates the behaviors of well-typed
programs running from public-equivalent but misspeculating states, still
in the ideal semantics~\cite{ZhangBCSY23}.
%~\cite[Lemma~5]{ShivakumarBBCCGOSSY23}. -- just the name!
%
\rb{Léon's comment, with some sharpening, could work better inside the proof proper:
``The idea is that since each branching operation containing a secret is masked, there is no way to obtain different observations after misspeculating based on publicly equivalent states.''
}

%% The next important lemma is about the behavior of transformed programs after misspeculation.

\begin{lemma}[Unwinding of ideal misspeculated executions]
\label{lem:gilles-lemma}
\setcounter{equation}{0}
\begin{align}
 &\wtifc{\LabelPublic}{\cexpr{c}} \wedge
  \ScalarState_1 \sim_{\PubVars} \ScalarState_2 \wedge
  \ArrayState_1 \sim_{\PubArrs} \ArrayState_2
 &\Rightarrow
 \label{eq:gilles-loweq}
\\
 &\specstate{\cexpr{c}}{\ScalarState_1}{\ArrayState_1}{\BoolTrue}
  \approx_i
  \specstate{\cexpr{c}}{\ScalarState_2}{\ArrayState_2}{\BoolTrue}
  \label{eq:gilles-goal}
%%  &\idealmulti
%%   {\specstate{\cexpr{c}}{\ScalarState_1}{\ArrayState_1}{\BoolTrue}}
%%   {\Obss_1}{\Dirs}
%%   {\cdot}
%%  &\Rightarrow
%% \\
%%  &\idealmulti
%%   {\specstate{\cexpr{c}}{\ScalarState_2}{\ArrayState_2}{\BoolTrue}}
%%   {\Obss_2}{\Dirs}
%%   {\cdot}
%%  &\Rightarrow
%% \\
%%  &\Obss_1 = \Obss_2
\end{align}
\end{lemma}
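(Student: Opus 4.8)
The plan is to unfold $\approx_i$ and show that, for any directive sequence $\Dirs$, the two ideal runs of $\cexpr{c}$ from $\specstate{\cexpr{c}}{\ScalarState_1}{\ArrayState_1}{\BoolTrue}$ and $\specstate{\cexpr{c}}{\ScalarState_2}{\ArrayState_2}{\BoolTrue}$ produce identical observation traces. The conceptual heart is that the misspeculation flag is \emph{sticky}: no ideal rule ever resets $b$ from $\BoolTrue$ back to $\BoolFalse$, so $b = \BoolTrue$ holds throughout both executions. Under $b = \BoolTrue$ the ideal rules of \autoref{fig:ideal-semantics-faslh} force every observation-producing step to depend only on the \emph{public} fragment of the state together with the attacker's directive: a secret branch condition is masked, so $(\ell\vee\neg b)\wedge\eval{\cexpr{be}}{\ScalarState}$ collapses to $\BoolFalse$ and $\branch{\BoolFalse}$ is emitted; a read or write with a secret index is redirected to index $0$; a read or write with a public index emits that common public index; and the forced out-of-bounds rules only fire when the index is public and the transferred data flows into a secret location. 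Starting from public-equivalent states, the two runs therefore proceed in lockstep.

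Concretely, I would strengthen the statement to an invariant quantified over \emph{all} pc-labels $\mathpc$, not just $\LabelPublic$, since descending into a secret branch raises the pc: whenever $\wtifc{\mathpc}{\cexpr{c}}$, $\ScalarState_1 \sim_{\PubVars} \ScalarState_2$, $\ArrayState_1 \sim_{\PubArrs} \ArrayState_2$, and $b = \BoolTrue$, any two ideal steps of $\cexpr{c}$ from these states under a common directive $d$ either both get stuck, or both succeed, landing in the \emph{same} residual command $\cexpr{c'}$ with flag still $\BoolTrue$, public-equivalent resulting states, $\cexpr{c'}$ again well-typed at some pc, and the \emph{same} emitted observation. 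This is essentially \autoref{lem:faslh-ideal-noninterference} augmented with the ``same observation'' conclusion --- which here must be \emph{derived} (using $b = \BoolTrue$) rather than assumed --- together with a routine ``ideal steps preserve IFC-well-typedness'' sublemma, whose only slightly delicate case is the unfolding of a \texttt{while}, where the guard label is absorbed into the pc of the unfolded conditional. I would then chain this along the two multi-step derivations; as in \autoref{lem:bcc}, the induction should be on $|\cexpr{c}| + |\Obss|$ rather than on step count, so that silent steps (assignments and loop unfoldings, which alter the command and state but preserve public-equivalence and well-typedness) and loops are handled uniformly.

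The main obstacle is ruling out a desynchronization in which one run emits an observation while the other is stuck, which would break trace equality (the definition of $\approx_i$ compares traces with $=$, not with $\lessgtr$). This needs a careful case analysis of the observation-producing rules under $b = \BoolTrue$: for masked reads and writes the index is $0$ in both runs; for public-index ones it is the common value computed from the public state; for the forced rules the directive fixes the accessed array and index identically; and for branches the masked condition yields a public-determined boolean. One then needs both runs to agree on whether the relevant access is in bounds, which reduces to the two array states agreeing on array sizes --- the same ingredient that makes masking to index $0$ safe, echoing the non-emptiness reasoning of \autoref{lem:bcc}. Finally, \textsc{Ideal\_Read\_Force} and \textsc{Ideal\_Write\_Force} are enabled only when the destination variable, respectively the stored value, is secret, so any value the attacker injects lands in a secret location and public-equivalence is preserved --- precisely the constraint (as discussed after \autoref{ex:v1std-not-rs}) that the ideal semantics bakes into these two rules.
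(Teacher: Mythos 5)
Your proposal is correct and takes essentially the same route as the paper: unfold $\approx_i$ and induct over the executions, concentrating the work in a single-step auxiliary lemma (equal observation \emph{derived} from $b=\BoolTrue$, same residual command, public-equivalence preserved) obtained from \autoref{lem:faslh-ideal-noninterference} together with preservation of IFC-well-typedness by ideal steps, which is exactly the paper's decomposition. The only (harmless) deviation is your induction measure $|\cexpr{c}|+|\Obss|$; the paper simply inducts on one of the multi-step derivations, which suffices here because the two runs execute the same command under the same directives and hence stay in lockstep.
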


\begin{proof}

%% Note that the speculation bit being set to $\BoolTrue$ in the final states is optional, as this is the only value it can take.
%% This is because this bit track whether misspeculation has happened before, so taking steps cannot set it from $\BoolTrue$ to $\BoolFalse$.

By induction on one of the speculative executions
exposed after unfolding \eqref{eq:gilles-goal}.
The main auxiliary lemma is a single-step version of the same statement,
which additionally states that the two steps result in the same
%that takes one step in each execution with the same directive and
%well-typed command, and again from public-equivalent states, and shows
%that the two steps produce the same observation and reduction of the
command. This follows from the initial
hypothesis \eqref{eq:gilles-loweq}, \autoref{lem:faslh-ideal-noninterference},
and the fact that the ideal step relation preserves well-typedness.
%
%% \rb{
%% Any explicit results about public-equivalence and types earlier, after
%% the semantics and type system? Around here,
%% multi\_ideal\_stuck\_noninterference might be interesting. -- Should be enough detail now}
%
%% The proof mainly relies on a stepwise lemma stating that performing one step from $\cexpr{c}$ with the same directions in publicly equivalent states
%% not only produces the same observations, but also reduces $\cexpr{c}$ to the same command. We then conclude by stating that the preconditions $\wtifc{\LabelPublic}{\cexpr{c}}$,
%% $\ScalarState_1 \sim_{\PubVars} \ScalarState_2$ and $\ArrayState_1 \sim_{\PubArrs} \ArrayState_2$ are preserved by taking steps.
\end{proof}

The final piece of the puzzle requires showing that the ideal
semantics satisfies relative security.
% by itself, without the need for any hardening. -- CH: space + dubious claim (ideal semantics does hardening)
%% \rb{We could reuse the full definition using the
%% identity function as translation.}

%% The last big step before proving relative security of \FlexaSLH is to show the relative security of the ideal semantics.

\begin{lemma}[$\idealmultiarrow{}{}$ ensures relative security]
\label{lem:faslh-ideal-rs}
\setcounter{equation}{0}
\begin{align}
 &\wtifc{\LabelPublic}{\cexpr{c}} \wedge
  \ScalarState_1 \sim_{\PubVars} \ScalarState_2 \wedge
  \ArrayState_1 \sim_{\PubArrs} \ArrayState_2
 &\Rightarrow
  \label{eq:faslh-ideal-loweq}
\\
 &\seqstate{\cexpr{c}}{\ScalarState_1}{\ArrayState_1}
  \approx
  \seqstate{\cexpr{c}}{\ScalarState_2}{\ArrayState_2}
 %% &( \forall \Obss_1 \Obss_2.
 %%   \\
 %%   &\quad
 %%    \seqmultis{\seqstate{\cexpr{c}}{\ScalarState_1}{\ArrayState_1}}{\Obss_1}{\cdot}
 %%    \Rightarrow
 %%    \seqmultis{\seqstate{\cexpr{c}}{\ScalarState_2}{\ArrayState_2}}{\Obss_2}{\cdot}
 %%    \Rightarrow
 %%    \Obss_1 \lessgtr \Obss_2
 %%  )
 &\Rightarrow
  \label{eq:faslh-ideal-seqeq}
\\
 &\specstate{\cexpr{c}}{\ScalarState_1}{\ArrayState_1}{\BoolFalse}
  \approx_i
  \specstate{\cexpr{c}}{\ScalarState_2}{\ArrayState_2}{\BoolFalse}
  \label{eq:faslh-ideal-goal}
%%  &\idealmulti
%%   {\specstate{\cexpr{c}}{\ScalarState_1}{\ArrayState_1}{\BoolFalse}}
%%   {\Obss_1}{\Dirs}
%%   {\cdot}
%%  &\Rightarrow
%% \\
%%  &\idealmulti
%%   {\specstate{\cexpr{c}}{\ScalarState_2}{\ArrayState_2}{\BoolFalse}}
%%   {\Obss_2}{\Dirs}
%%   {\cdot}
%%  &\Rightarrow
%% \\
%%  &\Obss_1 = \Obss_2
\end{align}
\end{lemma}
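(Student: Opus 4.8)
The plan hinges on one structural fact about the ideal semantics of \autoref{fig:ideal-semantics-faslh}: on configurations whose misspeculation flag is $\BoolFalse$ it coincides with the sequential semantics, because every index- or condition-masking premise is guarded by a conjunct $\wedge\, b$ and therefore fires nothing while $b=\BoolFalse$, the force rules for loads and stores require the flag to already be $\BoolTrue$, and the \emph{only} rule that can turn the flag to $\BoolTrue$ is \textsc{Ideal\_If\_Force}, fired by a $\force$ directive on a conditional; moreover, once $b=\BoolTrue$ no rule ever resets it. So I would unfold $\approx_i$ in the goal \eqref{eq:faslh-ideal-goal} to obtain directives $\Dirs$ and observation traces $\Obss_1,\Obss_2$ that both ideal runs realize, and — if $\Dirs$ contains a $\force$ at all — split it into a $\force$-free prefix $\Dirs_1$, the first $\force$, and a remainder $\Dirs_2$. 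This induces a \emph{sequential phase} (all steps before the flag-flip, consuming $\Dirs_1$, with $b=\BoolFalse$), the single flag-flipping \textsc{Ideal\_If\_Force} step, and a \emph{misspeculating phase} (consuming $\Dirs_2$, with $b=\BoolTrue$).

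For the sequential phase, both runs execute exactly as ordinary sequential runs of $\cexpr{c}$ from $\seqstate{\cexpr{c}}{\ScalarState_1}{\ArrayState_1}$ and $\seqstate{\cexpr{c}}{\ScalarState_2}{\ArrayState_2}$, so their phase-1 observation traces are prefixes, of equal length, of the two full sequential traces; since the sequential-equivalence premise \eqref{eq:faslh-ideal-seqeq} only gives the prefix ordering $\lessgtr$, equal length upgrades it to genuine \emph{equality} on this common prefix. Using this, together with the well-typedness of $\cexpr{c}$ from \eqref{eq:faslh-ideal-loweq} and preservation of $\wtifc{\LabelPublic}{\cdot}$ along ideal steps (already needed for \autoref{lem:gilles-lemma}; note the branches of a conditional typed at public pc are typed at $\ell \in \{\LabelPublic,\LabelSecret\}$, and $\wtifc{\LabelSecret}{\cdot}$ entails $\wtifc{\LabelPublic}{\cdot}$ by downward pc-weakening), the multi-step form of \autoref{lem:faslh-ideal-noninterference} gives that the two runs remain in lockstep: at the end of phase 1 they sit at one and the same residual command $\cexpr{c'}$, still well-typed at public pc, with $\ScalarState_1' \sim_{\PubVars} \ScalarState_2'$, $\ArrayState_1' \sim_{\PubArrs} \ArrayState_2'$ and flag $\BoolFalse$. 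If $\Dirs$ is $\force$-free this already yields $\Obss_1 = \Obss_2$.

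Otherwise $\cexpr{c'}$ must be a conditional $\cifs{\cexpr{be}}{\cexpr{c_{\BoolTrue}}}{\cexpr{c_{\BoolFalse}}}$ (the only redex consuming a $\force$), and \textsc{Ideal\_If\_Force} emits $\branch{b_i'}$ with $b_i' = (\ell \vee \neg\BoolFalse)\wedge \eval{\cexpr{be}}{\ScalarState_i'} = \eval{\cexpr{be}}{\ScalarState_i'}$. I would then argue $\eval{\cexpr{be}}{\ScalarState_1'} = \eval{\cexpr{be}}{\ScalarState_2'}$: immediate from $\ScalarState_1' \sim_{\PubVars} \ScalarState_2'$ when $\cexpr{be}$ is public, and otherwise because the matching \textsc{Spec\_If} steps in the two sequential runs emit precisely $\branch{\eval{\cexpr{be}}{\ScalarState_i'}}$ right after the (now equal) phase-1 prefix, so \eqref{eq:faslh-ideal-seqeq} forces the two values to coincide. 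Hence both runs emit the same $\branch{b'}$ and move to the same branch $\cexpr{c_{\neg b'}}$ — again well-typed at public pc — with unchanged, still public-equivalent states and flag $\BoolTrue$. Applying \autoref{lem:gilles-lemma} to $\specstate{\cexpr{c_{\neg b'}}}{\ScalarState_i'}{\ArrayState_i'}{\BoolTrue}$ then yields that the two continuations under $\Dirs_2$ emit equal observations, and concatenating the phase-1 prefix, the $\branch{b'}$ step, and the phase-2 tail gives $\Obss_1 = \Obss_2$.

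I expect the main obstacle to lie in the sequential phase rather than the misspeculating one (which is black-boxed by \autoref{lem:gilles-lemma}): making the ``ideal at $b=\BoolFalse$ equals sequential'' identification precise enough to transport the prefix relation of \eqref{eq:faslh-ideal-seqeq} into an equality, keeping the two runs synchronized step for step through silent steps and data-dependent branch targets, and then cleanly handing control to \autoref{lem:gilles-lemma} at a command whose pc label may be secret.
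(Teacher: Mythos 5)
Your proposal follows essentially the same route as the paper's proof: case-split on whether $\Dirs$ contains a $\force$, treat the force-free case by observing that the ideal semantics with flag $\BoolFalse$ coincides with the sequential semantics, and otherwise decompose $\Dirs$ into a $\step$-only prefix, the pivot $\force$, and a suffix, using \autoref{lem:faslh-ideal-noninterference} (in multi-step form) to keep the runs in lockstep and to equate the post-pivot states, then concluding with \autoref{lem:gilles-lemma}. Your additional details — upgrading the $\lessgtr$ prefix relation to equality via $|\Obss|=|\Dirs|$, justifying equality of the pivot $\branch{\cdot}$ observation for secret conditions from the sequential premise, and pc-weakening before invoking the unwinding lemma — are exactly the points the paper's sketch leaves implicit, and they are handled correctly.
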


\begin{proof}

Unfolding the definition in \eqref{eq:faslh-ideal-goal} exposes the
two ideal executions together with their shared directives $\Dirs$ and
their observations $\Obss_1$ and $\Obss_2$. The goal of the theorem is
to establish their equality based on the equivalence of observations
in the sequential semantics \eqref{eq:faslh-ideal-seqeq},
well-typedness, and public-equivalence of the initial states
\eqref{eq:faslh-ideal-loweq}.

If the attacker never forces misspeculation, \IE $\Dirs$ contains
exclusively $\step$ directives, the goal follows directly because
ideal executions without misspeculation are identical to sequential executions
and thus preserve the property
in \eqref{eq:faslh-ideal-seqeq}.

If the attacker forces the program to misspeculate at some point
during the executions, the list of directives is necessarily of the
form $\Dirs = [\step; \ldots; \step] \cdot [\force] \cdot \Dirs'$: a
prefix without misspeculation, followed by a directive $\force$ that
initiates the misspeculation on a branch, and then by an arbitrary
suffix of directives. This neatly divides the proof into two: the
prefix phase is identical to the first case up to the point where
misspeculation begins, and the two ideal runs have reduced to the same
command.
\autoref{lem:faslh-ideal-noninterference} shows that the states
reached after the pivot directive $\force$ are equivalent.
%
% The pivot directive $\force$ produces the same observation on both
% ideal runs (\IE the negation of the observation produced on both
% sequential runs), and consequently the two ideal runs after the
% $\force$ step continue to reduce to the same command.
%
At this point, we apply \autoref{lem:gilles-lemma} with the suffix
$\Dirs'$ to conclude the proof.
\end{proof}

%% If misspeculation never happens during the execution, the statement is trivially true since the speculative execution then corresponds to a sequential execution.
%% \ld{State a lemma about multi\_ideal\_no\_spec?}
%% We thus place ourselves in the case where misspeculation happens. This means that there is a decomposition of $\Dirs$ into $\Dirs_1^\BoolFalse ++ \force ++ \Dirs_1^\BoolTrue$,
%% such that $\Dirs_1^\BoolFalse$ is only made of $\step$.
%% Asserting that the part of the execution without misspeculation verifies the goal is easy: in fact, it is the same reasoning than when there is no misspeculation at all.
%% The first step of misspeculation has to be on a branch, in which case the resulting observation is the same (it is the opposite of the observation in the sequential setting, which is the same in both executions).
%% Dealing with the part after misspeculation has happened corresponds to \autoref{lem:gilles-lemma}.

%% During the proof, we also assert that the commands are the same in both executions right before the $\force$ instruction is given and right after, in order to satisfy the premises of \autoref{lem:gilles-lemma}.
%% This is possible because we know that if a command performs sequential steps and produces the same observations, then it always reduces to the same command.

Now we are ready to prove our main theorem.

\newtheorem*{thm:faslh-rs}{\autoref{thm:faslh-rs}}
\begin{thm:faslh-rs}[\FlexiSLH enforces relative security]
\setcounter{equation}{0}
\begin{align}
 &\cvar{b} \notin \UsedVars{\cexpr{c}} \wedge
  \lookup{\ScalarState_1}{\cvar{b}} = 0 \wedge
  \lookup{\ScalarState_2}{\cvar{b}} = 0
 &\Rightarrow
 \label{eq:faslh-general}
\\
 %% &\colU{
 &(\forall \cvar{a}. | \cvar{a} |_{\ArrayState_1} > 0) \wedge
  (\forall \cvar{a}. | \cvar{a} |_{\ArrayState_2} > 0)
  %% }
 %% &\colU{
 &\Rightarrow
  %% }
  \label{eq:faslh-nonempty}
\\
 &\colSel{
  \wtifc{\LabelPublic}{\cexpr{c}} \wedge
  \ScalarState_1 \sim_{\PubVars} \ScalarState_2 \wedge
  \ArrayState_1 \sim_{\PubArrs} \ArrayState_2
  }
 &\colSel{
  \Rightarrow
  }
  \label{eq:faslh-loweq}
\\
 &\colU{
 \seqstate{\cexpr{c}}{\ScalarState_1}{\ArrayState_1}
 \approx
 \seqstate{\cexpr{c}}{\ScalarState_2}{\ArrayState_2}
 }
 %% &\colU{( \forall \Obss_1 \Obss_2.}
 %%   \\
 %%   &\colU{\quad
 %%    \seqmultis{\seqstate{\cexpr{c}}{\ScalarState_1}{\ArrayState_1}}{\Obss_1}{\cdot}
 %%    \Rightarrow
 %%    \seqmultis{\seqstate{\cexpr{c}}{\ScalarState_2}{\ArrayState_2}}{\Obss_2}{\cdot}
 %%    \Rightarrow
 %%    \Obss_1 \lessgtr \Obss_2
 %%  )}
 &\colU{\Rightarrow}
 \label{eq:faslh-seqeq}
\\
 &\specstate{\transls{\cexpr{c}}{\FlexiSLH}}{\ScalarState_1}{\ArrayState_1}{\BoolFalse}
  \approx_s
  \specstate{\transls{\cexpr{c}}{\FlexiSLH}}{\ScalarState_2}{\ArrayState_2}{\BoolFalse}
  \label{eq:faslh-goal}
 %% &\specmultis
%%   {\specstate{\transls{\cexpr{c}}{\FlexaSLH}}{\ScalarState_1}{\ArrayState_1}{\BoolFalse}}
%%   {\Obss_1}{\Dirs}
%%   {\cdot}
%%  &\Rightarrow
%% \\
%%  &\specmultis
%%   {\specstate{\transls{\cexpr{c}}{\FlexaSLH}}{\ScalarState_2}{\ArrayState_2}{\BoolFalse}}
%%   {\Obss_2}{\Dirs}
%%   {\cdot}
%%  &\Rightarrow
%% \\
%%  &\Obss_1 = \Obss_2
\end{align}
\end{thm:faslh-rs}

\begin{proof}
The proof of relative security depends on some general assumptions
in \eqref{eq:faslh-general} about the original program not using the
reserved variable for the misspeculation flag, and the flag being
initialized to 0 in both memories.
Together with these, there is well-typedness and public-equivalence
in \eqref{eq:faslh-loweq} and the requirement on array sizes
in \eqref{eq:faslh-nonempty}.

The top-level proof itself is simple. \autoref{def-specobs} unfolds
in \eqref{eq:faslh-goal} to reveal a pair of speculative
executions. \autoref{lem:bcc} is used on each of these to yield a pair
of ideal executions. The result then follows
from \autoref{lem:faslh-ideal-rs}.
\end{proof}

The other two top-level security results for other variants of \SLH
are simple corollaries of this theorem.
Observe that the assumptions of \autoref{thm:faslh-rs} can be divided
into general-purpose assumptions (in black), assumptions that are
closely related to the security of \SeliSLH in \autoref{thm:saslh-ct}
(in \colSel{blue}), and those related to the security of \USLH
in \autoref{thm:uslh-rs} (in \colU{red}).

%% The assumptions are colored \colSel{in blue} when they are inspired by the assumptions of the constant-time security theorem
%% of \SelSLH, and \colU{in orange} when they come from the relative security theorem of \USLH. Other assumptions and the result are common to both.

%% As a consequence of \autoref{thm:faslh-rs}, we obtain \CCT security for \SelaSLH and relative security for \USLH.

\newtheorem*{thm:saslh-ct}{\autoref{thm:saslh-ct}}
\begin{thm:saslh-ct}[\SeliSLH enforces SCT security]

\begin{align*}
  &\cvar{b} \notin \UsedVars{\cexpr{c}} \wedge
   \lookup{\ScalarState_1}{\cvar{b}} = 0 \wedge
   \lookup{\ScalarState_2}{\cvar{b}} = 0
  &\Rightarrow
 \\
  &(\forall \cvar{a}. | \cvar{a} |_{\ArrayState_1} > 0) \wedge
   (\forall \cvar{a}. | \cvar{a} |_{\ArrayState_2} > 0)
  &\Rightarrow
 \\
  &
   \wtct{\cexpr{c}} \wedge
   \ScalarState_1 \sim_{\PubVars} \ScalarState_2 \wedge
   \ArrayState_1 \sim_{\PubArrs} \ArrayState_2
  &
   \Rightarrow
 \\
  &\specstate{\transls{\cexpr{c}}{\SeliSLH}}{\ScalarState_1}{\ArrayState_1}{\BoolFalse}
   \approx_s
   \specstate{\transls{\cexpr{c}}{\SeliSLH}}{\ScalarState_2}{\ArrayState_2}{\BoolFalse}
 %%  &\specmultis
 %%   {\specstate{\transls{\cexpr{c}}{\FlexiSLH}}{\ScalarState_1}{\ArrayState_1}{\BoolFalse}}
 %%   {\Obss_1}{\Dirs}
 %%   {\cdot}
 %%  &\Rightarrow
 %% \\
 %%  &\specmultis
 %%   {\specstate{\transls{\cexpr{c}}{\FlexiSLH}}{\ScalarState_2}{\ArrayState_2}{\BoolFalse}}
 %%   {\Obss_2}{\Dirs}
 %%   {\cdot}
 %%  &\Rightarrow
 %% \\
 %%  &\Obss_1 = \Obss_2
 \end{align*}
\end{thm:saslh-ct}

\begin{proof}
From \autoref{thm:faslh-aslh}, \SeliSLH is identical to \FlexiSLH for
a fixed $\PubVars$.
%% The proof follows from the observation that
%% $\transls{\cexpr{c}}{\SelaSLH}
%% = \transls{\cexpr{c}}{\FlexaSLH}$ \rb{this should be stated somewhere
%% above}
Together with the use of the more restrictive \CCT type system instead
of the general IFC type system, this imposes that all observations are
based on public values, and we can deduce that all sequential
executions from public-equivalent states produce the same
observations. After deriving the missing assumption, we conclude by
applying \autoref{thm:faslh-rs}.
%
  %% When $\wtct{\cexpr{c}}$, since $\cexpr{c}$ never produces observations based on secret variables,
  %% a look at \autoref{fig:aslh-template} and \autoref{fig:aslh-variants} helps realizing that $\transls{\cexpr{c}}{\FlexaSLH}$ is the same than $\transls{\cexpr{c}}{\SelaSLH}$.
  %% Moreover, when all observations rely on public variables, every sequential execution of the program starting with publicly equivalent states will produce the same observations.
  %% This allows us to remove the corresponding hypothesis in \autoref{thm:faslh-rs}, since it is always verified.
\end{proof}

%% And, similarly, for the final security theorem.

\newtheorem*{thm:uslh-rs}{\autoref{thm:uslh-rs}}
\begin{thm:uslh-rs}[\USLH enforces relative security]
\begin{align*}
  &\cvar{b} \notin \UsedVars{\cexpr{c}} \wedge
   \lookup{\ScalarState_1}{\cvar{b}} = 0 \wedge
   \lookup{\ScalarState_2}{\cvar{b}} = 0
  &\Rightarrow
 \\
  &
   (\forall \cvar{a}. | \cvar{a} |_{\ArrayState_1} > 0) \wedge
   (\forall \cvar{a}. | \cvar{a} |_{\ArrayState_2} > 0)
  &
   \Rightarrow
 \\
  &\seqstate{\cexpr{c}}{\ScalarState_1}{\ArrayState_1}
   \approx
   \seqstate{\cexpr{c}}{\ScalarState_2}{\ArrayState_2}
  %% &( \forall \Obss_1 \Obss_2.
  %%   \\
  %%   &\quad
  %%    \seqmultis{\seqstate{\cexpr{c}}{\ScalarState_1}{\ArrayState_1}}{\Obss_1}{\cdot}
  %%    \Rightarrow
  %%    \seqmultis{\seqstate{\cexpr{c}}{\ScalarState_2}{\ArrayState_2}}{\Obss_2}{\cdot}
  %%    \Rightarrow
  %%    \Obss_1 \lessgtr \Obss_2
  %%  )
  &\Rightarrow
 \\
  &\specstate{\translAux{\cexpr{c}}{\USLH}{}}{\ScalarState_1}{\ArrayState_1}{\BoolFalse}
   \approx_s
   \specstate{\translAux{\cexpr{c}}{\USLH}{}}{\ScalarState_2}{\ArrayState_2}{\BoolFalse}
 %%  &\specmultis
 %%   {\specstate{\translAux{\cexpr{c}}{\FlexaSLH}{\LabelSecret}}{\ScalarState_1}{\ArrayState_1}{\BoolFalse}}
 %%   {\Obss_1}{\Dirs}
 %%   {\cdot}
 %%  &\Rightarrow
 %% \\
 %%  &\specmultis
 %%   {\specstate{\translAux{\cexpr{c}}{\FlexaSLH}{\LabelSecret}}{\ScalarState_2}{\ArrayState_2}{\BoolFalse}}
 %%   {\Obss_2}{\Dirs}
 %%   {\cdot}
 %%  &\Rightarrow
 %% \\
 %%  &\Obss_1 = \Obss_2
 \end{align*}
\end{thm:uslh-rs}
\begin{proof}
From \autoref{thm:faslh-uslh}, \USLH is a special case of \FlexiSLH
when all variables are secret.
%% Recall again \rb{and this refer to it, as in the last theorem} that
%% when all variables are considered secret, which we write
%% $(\lambda \_. \LabelSecret)$ for both $\PubVars$ and $\PubArrs$,
%% $\translAux{\cexpr{c}}{\USLH}{}
%% = \translAux{\cexpr{c}}{\FlexaSLH}{\lambda \_. \LabelSecret}$.
%
In this setting it is trivial to establish
$\wtifcAux{(\lambda \_. \LabelSecret)}{(\lambda \_. \LabelSecret)}{\LabelPublic}{\cexpr{c}}$,
$\ScalarState_1 \sim_{(\lambda \_. \LabelSecret)} \ScalarState_2$ and
$\ArrayState_1 \sim_{(\lambda \_. \LabelSecret)} \ArrayState_2$
for any choice of commands and states.
%
  %% Another look at \autoref{fig:aslh-template} and \autoref{fig:aslh-variants} allows us to show that \FlexaSLH corresponds to \USLH in the setting where every variable is considered to be secret.
  %% In that setting, we have $\ScalarState_1 \sim_{\LabelSecret} \ScalarState_2$, $\ArrayState_1 \sim_{\LabelSecret} \ArrayState_2$ and
  %% $\wtifcAux{\LabelSecret}{\LabelSecret}{\LabelPublic}{\cexpr{c}}$.
\autoref{thm:faslh-rs} then gives us relative security
for \USLH.
\end{proof}

%% \rb{
%% Fixes to Spectre Declassified presentation of ideal semantics

%% \begin{itemize}

%% \item
%% Case \texttt{Ideal\_ARead}: misspeculated reads within bounds are
%% allowed.

%% \item
%% Case \texttt{Ideal\_ARead\_Prot}: removed and folded into the other
%% two read rules.

%% \item
%% Both read cases: remove harmful preconditions.

%% \end{itemize}

%% Interesting generalizations,
%% cf. \texttt{ct\_well\_typed\_ideal\_noninterferent\_general}
%% }

\section{Flexible Value SLH}
\label{sec:FvSLH}

In the last section we formalized index \SLH, which protects
programs from speculative leakage by masking (some of) the indices
used in array accesses.
For load operations specifically, an alternative is to target the
``output'' of operations, preventing values read from memory
from leaking to the attacker. This has produced countermeasures like
Selective value \SLH \cite{ShivakumarBBCCGOSSY23}.

\begin{figure}
\centering
\begin{multline*}
\transl{\caread{\cvar{X}}{\cvar{a}}{\cexpr{i}}} \doteq
\\
\quad
  \left\{
  \begin{matrix*}[l]
  {\cseq{\caread{\cvar{X}}{\cvar{a}}{\cexpr{i}}}{\casgn{\cvar{X}}{\ccond{\cvar{b==1}}{\cvar{0}}{\cvar{X}}}}} &
  {\mathit{if} ~ \ValueCheck{\cvar{X}}{\cexpr{i}}}
  \\
  {\caread{\cvar{X}}{\cvar{a}}{\llbracket \cexpr{i} \rrbracket_{\mathit{rd}}}} &
  {\mathit{otherwise}}
  \end{matrix*}
  \right.
\end{multline*}
\vspace{-1em}
\caption{Master recipe for value \SLH}
\label{fig:vslh-template}
\end{figure}

% TODO: Mark whatever comes from USLH here, in particular is the
% masking a shared feature of i_rd now?
\begin{figure}
\subcaptionbox{\SelectivevSLH \label{fig:svslh-instance}}[0.4\linewidth]{
\begin{align*}
\ValueCheck{\cvar{X}}{\cexpr{i}}
 &\doteq \colSel{
  \lookup{\PubVars}{\cvar{X}}
  }
\\
\llbracket \cexpr{i} \rrbracket_{\mathit{rd}}
 &\doteq \colSel{
  \phantom{\left\{ \right.}
  \begin{matrix*}[l]
  \phantom{\ccond{\cvar{b==1}}{\cvar{0}}{\cexpr{i}}}
  \\
  \cexpr{i}
  \end{matrix*}
  }
\\
\llbracket \cexpr{i} \rrbracket_{\mathit{wr}}
 &\doteq \colSel{
  \phantom{\left\{ \right.}
  \begin{matrix*}[l]
  \phantom{\ccond{\cvar{b==1}}{\cvar{0}}{\cexpr{i}}}
  \\
  \cexpr{i}
  \end{matrix*}
  }
\end{align*}
}
\subcaptionbox{Flexible \vSLH \label{fig:fvslh-instance}}[0.6\linewidth]{
\begin{align*}
%% \ValueCheck{\cvar{X}}{\cexpr{i}}
 &\colSel{
 %% &\doteq \colFlexnew{
  \lookup{\PubVars}{\cvar{X}}} \mathrel{\colFlexnew{\wedge}} \colFlexnew{\ArithLabel{\cexpr{i}}
  }
\\
%% \llbracket \cexpr{be} \rrbracket_{\mathit{rd}}
 &\colFlexnew{
 %% &\doteq \colSel{
  \left\{
  \begin{matrix*}[l]
  {\colAll{\ccond{\cvar{b==1}}{\cvar{0}}{\cexpr{i}}}} &  {\mathit{if} ~ \colFlexnew{\lnot\labelarith{\PubVars}{\cexpr{i}}}}
  \\
  {\colSel{\cexpr{i}}} &
  {\mathit{otherwise}}
  \end{matrix*}
  \right.
  }
\\
%% \llbracket \cexpr{be} \rrbracket_{\mathit{wr}}
 &\colFlexnew{
 %% &\doteq \colSel{
  \left\{
  \begin{matrix*}[l]
  {\colAll{\ccond{\cvar{b==1}}{\cvar{0}}{\cexpr{i}}}} &
  {\mathit{if} ~ \colFlexnew{\lnot\labelarith{\PubVars}{\cexpr{i}}}}
  \\
  {\colSel{\cexpr{i}}} &
  {\mathit{otherwise}}
  \end{matrix*}
  \right.
  }
\end{align*}
}
\caption{Overview of \vSLH variants}
\label{fig:vslh-variants}
\end{figure}

%% All variants of \SLH need to protect a subset of the memory accesses
%% of a program to prevent speculative leakage.
%% %
%% the exact...
%% , according to their rules
%% %
%% So far, all versions implement what is known as \emph{address \SLH},
%% where the target of the masking operations is one of the ``inputs'' of
%% the access, its memory address.

%% Define FvSLH Most interesting differences wrt previous 3 sections

%% here we
%% outline the most interesting departures from previous sections.

We can adapt \FlexibleSLH to use value \SLH countermeasures while
generalizing existing schemes and offering similar protections against
speculative attackers.
Our general template for \vSLH transformations is almost identical to
the \iSLH master recipe. Only the rule for array reads changes as
shown in \autoref{fig:vslh-template}.
In \iSLH, array reads were
parameterized by an index translation function
$\llbracket \cdot \rrbracket_{\mathit{rd}}^\cdot$ that was in charge of
masking the index as needed.  In \vSLH, the translation of reads is
split into two cases: one where the loaded value is immediately masked
using the misspeculation flag, and one where it is not; the rule is
parameterized by a value check function $\ValueCheck{\cdot}{\cdot}$
that determines which case is used.

The \vSLH template uses an index
translation function $\llbracket \cdot \rrbracket_{\mathit{rd}}$ in
the case where the read value is not masked, as well as the index
masking function $\llbracket \cdot \rrbracket_{\mathit{wr}}$ for array
writes, like \iSLH did.
%% \ld{TODO rephrase to make it a bit less ambiguous} \rb{OK!}
%
Unlike \SelvSLH, which gets its security exclusively from masking
values loaded to public variables and never protects indices (see \autoref{fig:svslh-instance}),
\FlexvSLH (\autoref{fig:fvslh-instance})
needs to fall back on masking certain indices if it is to achieve
relative security.
Moreover, \USLH can also be derived from \FlexvSLH by simply ignoring
its value masking facilities and setting $\ValueCheck{\_}{\_}
= \BoolFalse$ to fall back on index masking for all
protections.
%
%% \rb{More figures explanations/examples, if there is space?}
%% \ld{I don't think there will be space, but the description is good enough I believe} \rb{OK!}

%%  \SelvSLH is shown in \autoref{fig:svslh-instance}. \rb{Pure value
%% masking solution} Values are masked when they are loaded to public
%% variables, and indices are never masked.
%% %
%% \rb{and this is going to work because of the security and \ldots}

%% This leaves the door open to protecting accesses based on their
%% indices under certain conditions not covered by value masking.
%% %
%% Although this seems to run counter to the intuition of \vSLH as
%% distinct from \aSLH, it turns out \ldots

%% \rb{
%% The version of SelSLH that we use for these proofs
%% doesn't do anything with stores, no masking of values, but also no
%% masking of addresses like we present in the previous sections. The
%% mechanized proofs use the SpecCT version of SelvSLH, which does not
%% mask stores either.}

% TODO: Harmonize with other ideal semantics, captions, etc.
% v definitions in other reads and writes above?
% outline removed premises
% Note that this semantics does not use PA, only P
% Give unique names to these rules?
\begin{figure*}
  \centering
  \[
  \infer[Ideal\_Read]
  {\labelarith{\PubVars}{\cexpr{ie}} = \ell_{\cexpr{i}} \\
   i = \left\{
   {\begin{matrix*}[l]
    0 & \mathit{if} ~ \highlight{\lnot\ell_{\cexpr{i}} \mathrel{\wedge} b} \\
    \eval{\cexpr{ie}}{\rho} & \mathit{otherwise}
    \end{matrix*}
   }
   \right.
   \\\\
   %% i = \highlight{\ell_{\cexpr{i}} \wedge b} ~ ? ~ 0 ~ : ~ \eval{\cexpr{ie}}{\rho} \\\\
   v =
   \highlight{
   \left\{
   {\begin{matrix*}[l]
    0 & \mathit{if} ~ \lookup{\PubVars}{\cvar{X}} \mathrel{\wedge} \ell_{\cexpr{i}} \mathrel{\wedge} b \\
    \colAll{\eval{\cvar{a}[i]}{\mu}} & \mathit{otherwise}
    \end{matrix*}
   }
   \right.
   } \\
   %% \highlight{v = \lookup{\PubVars}{\cvar{X}} \wedge \lnot\ell_{\cexpr{i}} \wedge b ~ ? ~ 0 ~ : ~ \eval{\cvar{a}[i]}{\mu}} \\
   i < | \cvar{a} |_{\mu}
  }
  {\IdealEval{\caread{\cvar{X}}{\cvar{a}}{\cexpr{ie}}}{\rho}{\mu}{b}{\cskip}{\subst{\cvar{X}}{v}{\rho}}{\mu}{b}{\OARead{\cvar{a}}{i}}{\step}}
  \quad
  \infer[Ideal\_Read\_Force]
  {\labelarith{\PubVars}{\cexpr{ie}} \\
   \highlight{\cancel{\lnot\lookup{\PubVars}{\cvar{X}}}} \\
   i = \eval{\cexpr{ie}}{\rho} \\\\
   %% \highlight{v = \lnot\lookup{\PubVars}{\cvar{X}} ~ ? ~ 0 ~ : ~ \eval{\cvar{b}[j]}{\mu}} \\\\
   v =
   \highlight{
   \left\{
   {\begin{matrix*}[l]
    0 & \mathit{if} ~ \lookup{\PubVars}{\cvar{X}} \\
    \colAll{\eval{\cvar{b}[j]}{\mu}} & \mathit{otherwise}
    \end{matrix*}
   }
   \right.
   } \\
   i \ge | \cvar{a} |_{\mu} \\
   j < | \cvar{b} |_{\mu}
  }
  {\IdealEval{\caread{\cvar{X}}{\cvar{a}}{\cexpr{ie}}}{\rho}{\mu}{\BoolTrue}{\cskip}{\subst{\cvar{X}}{v}{\rho}}{\mu}{\BoolTrue}{\OARead{\cvar{a}}{i}}{\DLoad{\cvar{b}}{j}}}
  \]
  \[
  \infer[Ideal\_Write]
  {
   i = \left\{
   {\begin{matrix*}[l]
    0 & \mathit{if} ~ \highlight{\lnot\ell_{\cexpr{i}} \mathrel{\wedge} b} \\
    \eval{\cexpr{ie}}{\rho} & \mathit{otherwise}
    \end{matrix*}
   }
   \right. \\\\
   %% i = \highlight{\ell_{\cexpr{i}} \wedge b} ~ ? ~ 0 ~ : ~ \eval{\cexpr{ie}}{\rho} \\\\
   \labelarith{\PubVars}{\cexpr{ie}} = \ell_{\cexpr{i}} \\
   \highlight{\cancel{\labelarith{\PubVars}{\cexpr{ae}} = \ell}} \\
   v = \eval{\cexpr{ae}}{\rho} \\ % TODO: v -> n for consistency/ease?
   i < | \cvar{a} |_{\mu}
  }
  {\IdealEval{\cawrite{\cvar{a}}{\cexpr{ie}}{\cexpr{ae}}}{\rho}{\mu}{b}{\cskip}{\rho}
  %% {\mu[\casgn{\cvar{a}[i]}{v}]}
  {\subst{\cvar{a}[i]}{v}{\mu}}
  {b}{\OAWrite{\cvar{a}}{i}}{\step}}
  %% \]
  %% \[
  \quad
  \infer[Ideal\_Write\_Force]
  {\labelarith{\PubVars}{\cexpr{ie}} \\
  %\highlight{\cancel{\labelarith{\PubVars}{\cexpr{ae}} \vee \PubArrs = \lambda \_. \LabelSecret}} \\\\
  \highlight{\cancel{\labelarith{\PubVars}{\cexpr{ae}}}} \\\\
   %% \highlight{\cancel{\labelarith{\PubVars}{\cexpr{ae}} = \ldots}} \\\\
   i = \eval{\cexpr{ie}}{\rho} \\
   v = \eval{\cexpr{ae}}{\rho} \\  % TODO: v -> n for consistency/ease?
   i \ge | \cvar{a} |_{\mu} \\
   j < | \cvar{b} |_{\mu}
  }
  {\IdealEval{\cawrite{\cvar{a}}{\cexpr{ie}}{\cexpr{ae}}}{\rho}{\mu}{\BoolTrue}{\cskip}{\rho}
  %% {\mu[\casgn{\cvar{b}[j]}{v}]}
  {\subst{\cvar{b}[j]}{v}{\mu}}
  {\BoolTrue}{\OAWrite{\cvar{a}}{i}}{\DStore{\cvar{b}}{j}}}
  \]
  \vspace{-1em}
  \caption{Ideal semantics for \FlexvSLH (selected rules)}
  \label{fig:ideal-semantics-fvslh}
\end{figure*}

\begin{example}
\label{ex:fvslh-arrays-differ}
Consider again \autoref{ex:aslh-vs-vslh}.
%% In this program, after misspeculating on the branch when $\cvar{i}$ is out of the bounds of the secret array \cvar{secret\_array}, the attacker can try to write the secret variable \cvar{SECRET}
%% into the public array \cvar{a} at index $0$.
Both \iSLH and \vSLH prevent the public variable \cvar{x} from containing the secret value of \cvar{key}, but their strategies differ.
In the \iSLH setting, it is the out-of-bounds store that is prevented. %% leaving \cvar{a} unchanged. Instead, \cvar{SECRET} will be written into \cvar{secret\_array[$0$]}.
\vSLH, however, does not protect the write operation as long as the index $\cvar{i}$ is public. The attacker is indeed able to write the secret $\cvar{key}$ into the public array $\cvar{a}$,
which breaks the public-equivalence between arrays. In order to maintain security, the subsequently loaded value on $\cvar{x}$ is masked to 0, and \cvar{a[0]} containing a secret does not cause the program to leak information.
\end{example}

% In that example, an attacker who managed to get $\cexpr{ie}$ out of bounds of the array $\cvar{secret\_array}$ can then cause the secret variable $\cvar{SECRET}$ to be stored in any other array,
% for example in array $\cvar{a}$ at index $0$. If this array is public, reading from it and storing the result in a public variable can the cause to branching on a secret with a public variable.
% Masking the index while reading from $\cvar{a}$ does not fix the issue, as this happens inside of bounds anyway.
% In order to prevent this attack, we mask the index of writes when misspeculating in two cases. The first is when there exists a public array (which should be the case most of the time) and
% the value that is written is secret. This corresponds to an answer to the previously shown example. The other (more obvious) case is when the index contains a secret itself.

Using these new definitions we can state and prove that
\FlexvSLH enforces relative security and
\SelvSLH enforces SCT security.
The high-level structure of the security proofs closely mirrors the
development in \autoref{sec:formal-results}.
\FlexvSLH uses a new version of the ideal semantics that
reflects the changes in its behavior and supplies the proofs with
relevant information.
By a slight abuse of notation, we will write
$\PubVars \vdash \IdealEval{\cexpr{c}}{\rho}{\mu}{b}{\cexpr{c'}}{\rho'}{\mu'}{b'}{\Obss}{\Dirs}$
for the new semantics in this section and elide the public variables
and turnstile as those remain constant throughout.

The new ideal semantics is in large part identical to the
one for \FlexiSLH in \autoref{fig:ideal-semantics-faslh}; the
only changes are in the premises of the array read
and write rules, which are given in \autoref{fig:ideal-semantics-fvslh}.
Changes between the two versions are \highlight{highlighted};
premises that disappear in the new version are
also crossed out.
Both rules for array reads now apply masking to the value fetched from the array,
consistent with the
transformation. Additionally, both $\step$ rules on array reads and
writes have modified conditions for index masking, and both
misspeculating rules on reads and writes have slightly less restrictive premises.
%
%% Note that the ideal semantics no longer depends on $\PubArrs$.
%% %
%% \rb{Any remarks to make about the changes?}
%% %
%% \ld{Actually, the reason $\PubArrs$ are in the other ideal semantics
%% in the first place is because of some optimization in order to fit with
%% USLH. It's more of a hack, I think mentioning it is more confusing than useful}

%% First among these, we need a new auxiliary semantics that reflects the
%% ideal behavior of a program \ldots (pre-hardening) \ldots.
%% \autoref{fig:ideal-semantics-fvslh}
%% %
%% It has the same set of rules as the ideal semantics for \FlexaSLH
%% in \autoref{fig:ideal-semantics-faslh}, and the only changes are in
%% the premises of the four rules for array reads and writes, shown
%% highlighted in \autoref{fig:ideal-semantics-fvslh}. Notably, the
%% conditional rules remain unchanged.

%%   shows the ideal semantics for \FlexvSLH as a delta from the ideal
%% semantics of \FlexaSLH in \autoref{fig:ideal-semantics-faslh}.

%% All changes affect the side conditions rules that change are those for
%% reads and writes.

Based on the new ideal semantics we can prove an identical set of key
technical lemmas, elided here\ifappendix~(see \autoref{sec:appendix-fvslh-theorems} for details)\fi.
The fact that arrays are no longer public-equivalent during
misspeculation, as \autoref{ex:fvslh-arrays-differ} shows, requires
us to strengthen the value-based counterpart
of \autoref{lem:gilles-lemma} by removing that assumption,
as well as an adapted version of \autoref{lem:faslh-ideal-noninterference}.
%% taking into account that arrays are no longer
%% assured to be publicly equivalent after misspeculation has happened.
%
The statements of the main theorems correspond to their index \SLH counterparts.
% , substituting new translation functions for the old ones.

\begin{theorem}
%% [\FlexvSLH is relative-secure]
\label{thm:fvslh-rs}
\FlexvSLH enforces relative security for all IFC-well-typed programs and for all
public-equivalent initial states.
%% \begin{align*}
%%  &\cvar{b} \notin \UsedVars{\cexpr{c}} \wedge
%%   \lookup{\ScalarState_1}{\cvar{b}} = 0 \wedge
%%   \lookup{\ScalarState_2}{\cvar{b}} = 0
%%  &\Rightarrow
%% \\
%%  %% &\colU{
%%  &(\forall \cvar{a}. | \cvar{a} |_{\ArrayState_1} > 0) \wedge
%%   (\forall \cvar{a}. | \cvar{a} |_{\ArrayState_2} > 0)
%%   %% }
%%  %% &\colU{
%%  &\Rightarrow
%%   %% }
%% \\
%%  &\colSel{
%%   \wtifc{\LabelPublic}{\cexpr{c}} \wedge
%%   \ScalarState_1 \sim_{\PubVars} \ScalarState_2 \wedge
%%   \ArrayState_1 \sim_{\PubArrs} \ArrayState_2
%%   }
%%  &\colSel{
%%   \Rightarrow
%%   }
%% \\
%%  &\colU{
%%  \seqstate{\cexpr{c}}{\ScalarState_1}{\ArrayState_1}
%%  \approx
%%  \seqstate{\cexpr{c}}{\ScalarState_2}{\ArrayState_2}
%%  }
%%  &\colU{\Rightarrow}
%% \\
%%  &\specstate{\transls{\cexpr{c}}{\FlexvSLH}}{\ScalarState_1}{\ArrayState_1}{\BoolFalse}
%%   \approx_s
%%   \specstate{\transls{\cexpr{c}}{\FlexvSLH}}{\ScalarState_2}{\ArrayState_2}{\BoolFalse}
%% \end{align*}
\end{theorem}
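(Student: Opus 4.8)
The plan is to reprise, at the \FlexvSLH translation and the \FlexvSLH ideal semantics of \autoref{fig:ideal-semantics-fvslh}, exactly the three-layer argument used for \autoref{thm:faslh-rs}: a backwards compiler correctness (BCC) lemma connecting speculative runs of the hardened program to ideal runs of the source, together with a relative-security proof of the ideal semantics itself, built from a noninterference lemma, an unwinding lemma for misspeculated executions, and a lemma that $\approx_i$ ensures relative security. First I would prove the \FlexvSLH analogue of \autoref{lem:bcc} under the same side conditions ($\cvar{b}$ fresh and equal to the semantic flag, all arrays non-empty): every speculative run of $\transls{\cexpr{c}}{\FlexvSLH}$ is matched by an ideal run of $\cexpr{c}$ agreeing on memory and on the flag, with termination transferred. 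The only genuinely new case is the array read, whose translation now expands to a load followed by the silent masking assignment $\casgn{\cvar{X}}{\ccond{\cvar{b==1}}{\cvar{0}}{\cvar{X}}}$ rather than an index rewrite; the induction on $|\cexpr{c}| + |\Obss|$ still bounds the number of speculative steps because the extra assignment emits no observation while the surrounding load still contributes exactly one.

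Next I would re-establish the three ideal-semantics lemmas. When the flag is $\BoolFalse$ the \FlexvSLH ideal semantics coincides with the sequential semantics exactly as in \autoref{sec:formal-results}, so the no-misspeculation reasoning — and hence the bulk of the \FlexvSLH analogues of \autoref{lem:faslh-ideal-noninterference} and \autoref{lem:faslh-ideal-rs} — carries over with only bookkeeping changes. The real difference is in the unwinding lemma: as \autoref{ex:fvslh-arrays-differ} shows, under misspeculation \FlexvSLH may write a secret into a public array rather than preventing the store the way \FlexiSLH does, so array public-equivalence is no longer an invariant. I would therefore strengthen the \FlexvSLH counterpart of \autoref{lem:gilles-lemma} to conclude $\specstate{\cexpr{c}}{\ScalarState_1}{\ArrayState_1}{\BoolTrue} \approx_i \specstate{\cexpr{c}}{\ScalarState_2}{\ArrayState_2}{\BoolTrue}$ for \emph{arbitrary} $\ArrayState_1, \ArrayState_2$, assuming only well-typedness and scalar public-equivalence, together with an adapted version of \autoref{lem:faslh-ideal-noninterference} that likewise dispenses with array equivalence. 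Its single-step core still goes through: observation equality holds because a read or write index is either computed from the (public-equivalent) scalar state when it is public — and the ideal semantics only permits a forced out-of-bounds access when the index is public — or masked to $\cvar{0}$ when it is secret and the flag is set, and a branch condition is masked to the \texttt{else} value whenever it is secret and the flag is set; scalar public-equivalence is preserved because a public destination of a load is masked to $\cvar{0}$ under misspeculation, and by well-typedness any assignment or in-bounds load into a public variable uses only public data, so the divergent array contents never flow into a public scalar or into public control flow. A minor technical obligation along the way is to re-prove that the \FlexvSLH ideal step relation preserves well-typedness in the presence of the new value-masking premises.

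The top-level proof is then the same three steps as for \autoref{thm:faslh-rs}: unfold $\specobseq$ in the conclusion to expose a pair of speculative runs of $\transls{\cexpr{c}}{\FlexvSLH}$ sharing directives, apply the new BCC lemma to each to obtain a pair of ideal runs of $\cexpr{c}$, and finish with the \FlexvSLH version of \autoref{lem:faslh-ideal-rs} — whose misspeculating case splits the directive list at the first $\force$, applies the adapted noninterference lemma at that pivot (where memories are still public-equivalent), and invokes the strengthened unwinding lemma on the remaining suffix. I expect the strengthened unwinding lemma to be the main obstacle, precisely because it must be carried through \emph{without} the array public-equivalence invariant that makes the \FlexiSLH proof comfortable, relying instead on the interplay of value masking of public loads, index masking of secret accesses, and the IFC type system's guarantee that post-misspeculation array contents — secret by construction — can never reach a public variable or a public branch.
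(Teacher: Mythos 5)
Your proposal is correct and takes essentially the same route as the paper: the paper proves \autoref{thm:fvslh-rs} by reusing the exact BCC-plus-ideal-semantics decomposition of \autoref{thm:faslh-rs}, with the counterpart of \autoref{lem:gilles-lemma} strengthened by removing array public-equivalence and an adapted version of \autoref{lem:faslh-ideal-noninterference}, just as you describe. The only minor difference is that the paper's adapted noninterference lemma does not drop array equivalence outright but keeps it as a hypothesis and conclusion conditional on the misspeculation flag being $\BoolFalse$, which is the invariant your top-level argument implicitly relies on at the $\force$ pivot.
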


%% \begin{proof}
%% Same structure as the proof of \autoref{thm:faslh-rs}.
%% %% The statement of \autoref{lem:gilles-lemma} is slightly adapted in this setting by
%% %% removing the requirement for public arrays to be equivalent.
%% \end{proof}

%% CCT/SCT secure?

\begin{theorem}
%% [\SelvSLH is \CCT-secure \cite{ShivakumarBBCCGOSSY23}]
\label{thm:svslh-sct}
\SelvSLH enforces SCT security for CCT programs.
%% \begin{align*}
%%   &\cvar{b} \notin \UsedVars{\cexpr{c}} \wedge
%%    \lookup{\ScalarState_1}{\cvar{b}} = 0 \wedge
%%    \lookup{\ScalarState_2}{\cvar{b}} = 0
%%   &\Rightarrow
%%  \\
%%   &(\forall \cvar{a}. | \cvar{a} |_{\ArrayState_1} > 0) \wedge
%%    (\forall \cvar{a}. | \cvar{a} |_{\ArrayState_2} > 0)
%%   &\Rightarrow
%%  \\
%%   &
%%    \wtct{\cexpr{c}} \wedge
%%    \ScalarState_1 \sim_{\PubVars} \ScalarState_2 \wedge
%%    \ArrayState_1 \sim_{\PubArrs} \ArrayState_2
%%   &
%%    \Rightarrow
%%  \\
%%   &\specstate{\transls{\cexpr{c}}{\SelvSLH}}{\ScalarState_1}{\ArrayState_1}{\BoolFalse}
%%    \approx_s
%%    \specstate{\transls{\cexpr{c}}{\SelvSLH}}{\ScalarState_2}{\ArrayState_2}{\BoolFalse}
%%  \end{align*}
%
\end{theorem}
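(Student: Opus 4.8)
The plan is to mirror the proof of \autoref{thm:saslh-ct}, with \FlexiSLH replaced by \FlexvSLH and \autoref{thm:faslh-aslh} replaced by its value-\SLH counterpart. The first step is to observe that \SelvSLH and \FlexvSLH agree on \CCT programs: for every $\cexpr{c}$ with $\wtct{\cexpr{c}}$ and every labeling $\PubVars$ we have $\translAux{\cexpr{c}}{\SelvSLH}{\PubVars} = \translAux{\cexpr{c}}{\FlexvSLH}{\PubVars}$. This follows by inspection of \autoref{fig:svslh-instance} and \autoref{fig:fvslh-instance}: under the \CCT discipline every array index $\cexpr{i}$ satisfies $\ArithLabel{\cexpr{i}} = \LabelPublic$, so the purple conjuncts and index-masking branches that distinguish \FlexvSLH from \SelvSLH --- including the extra $\ArithLabel{\cexpr{i}}$ factor in $\ValueCheck{\cvar{X}}{\cexpr{i}}$ and the case splits of $\llbracket \cdot \rrbracket_{\mathit{rd}}$ and $\llbracket \cdot \rrbracket_{\mathit{wr}}$ --- are never exercised. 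Hence it suffices to establish SCT security of $\translAux{\cexpr{c}}{\FlexvSLH}{\PubVars}$.

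The second step is to discharge the sequential observational-equivalence premise of (the fully-quantified form of) \autoref{thm:fvslh-rs}, i.e.\ to derive $\seqstate{\cexpr{c}}{\ScalarState_1}{\ArrayState_1} \seqobseq \seqstate{\cexpr{c}}{\ScalarState_2}{\ArrayState_2}$ from $\wtct{\cexpr{c}}$, $\ScalarState_1 \sim_{\PubVars} \ScalarState_2$ and $\ArrayState_1 \sim_{\PubArrs} \ArrayState_2$. The \CCT type system guarantees that every branch condition and every memory index evaluated along a sequential run is public, so a routine sequential-noninterference argument --- induction on the execution, using preservation of \CCT-well-typedness under sequential steps and the fact that public expressions evaluate identically in public-equivalent states --- shows that the two runs emit \emph{identical} observation traces, in particular one is a prefix of the other ($\Obss_1 \lessgtr \Obss_2$), which is exactly the premise we need. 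From $\wtct{\cexpr{c}}$ we also obtain $\wtifc{\LabelPublic}{\cexpr{c}}$ for free, since the IFC type system of \autoref{fig:ifc-type-system} generalizes the \CCT one.

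The third step just plugs everything in: the remaining hypotheses of the statement --- $\cvar{b} \notin \UsedVars{\cexpr{c}}$, $\lookup{\ScalarState_1}{\cvar{b}} = \lookup{\ScalarState_2}{\cvar{b}} = 0$, and all arrays non-empty in $\ArrayState_1$ and $\ArrayState_2$ --- are precisely the side conditions of \autoref{thm:fvslh-rs}, so applying that theorem yields $\specstate{\translAux{\cexpr{c}}{\FlexvSLH}{\PubVars}}{\ScalarState_1}{\ArrayState_1}{\BoolFalse} \specobseq \specstate{\translAux{\cexpr{c}}{\FlexvSLH}{\PubVars}}{\ScalarState_2}{\ArrayState_2}{\BoolFalse}$, and rewriting with the first step turns this into the corresponding statement for \SelvSLH, which is SCT security in the sense of \autoref{def:sct-sec}. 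The only step with real mathematical content is the sequential-noninterference lemma of the second paragraph; this is the classical Volpano--Smith~\cite{VolpanoIS96} soundness result instantiated to our observation model, so I expect no genuine obstacle --- consistent with the paper treating this theorem as an easy corollary of \autoref{thm:fvslh-rs}. The one point needing a little care is the value-\SLH analog of \autoref{thm:faslh-aslh}, since the value-masking case of \SelvSLH's read rule has no index-\SLH counterpart and must be separately checked to coincide with \FlexvSLH; this too is immediate from \autoref{fig:svslh-instance} and \autoref{fig:fvslh-instance} once all indices are known public.
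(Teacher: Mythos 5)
Your proposal matches the paper's proof: the paper also derives this theorem as a corollary of the connection result $\translAux{\cexpr{c}}{\FlexvSLH}{\PubVars} = \translAux{\cexpr{c}}{\SelvSLH}{\PubVars}$ for \CCT programs together with \autoref{thm:fvslh-rs}, discharging the sequential observational-equivalence premise by noting that \CCT typing forces all observations to depend only on public data (and that IFC well-typedness follows from the \CCT typing). The only difference is that you spell out the sequential-noninterference argument in more detail than the paper's one-line proof, which is fine.
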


%% \begin{proof}
%% Simple corollary of \autoref{thm:fvslh-svslh}
%% and \autoref{thm:fvslh-rs}.
%% \end{proof}

\section{\FsFlexvSLH: Flexible vSLH for All Programs}
\label{sec:fsfvslh}

So far, we used the simple IFC type system of \autoref{sec:type-system}.
As outlined in \autoref{sec:fsfvslh-key-ideas},
we demonstrate that this is not a conceptual limitation by introducing \FsFlexvSLH,
which works on arbitrary source programs by using a sound but permissive static IFC analysis
similar to the algorithmic version of \citepos{HuntS06} flow-sensitive type system.
Unlike the IFC type system of \autoref{sec:type-system}, which rejects some programs
since it requires the labelings
$\PubVars$ and $\PubArrs$ to remain constant% throughout the program
, the flow-sensitive IFC analysis
accepts all programs by updating the labelings with each assignment.
%The type system rejected some programs as it required the labelings
%$\PubVars$ and $\PubArrs$ to remain constant throughout the program, whereas
%this IFC analysis accepts all programs by being
%{\em flow-sensitive}~\cite{HuntS06},
%so it updates the labelings with each assignment.
%
This results in different labelings at different program points
that are used to compute IFC levels of individual program expressions.
Unlike \citet{HuntS06}, our analysis adds these levels as annotations to the commands,
%These levels are added as annotations to the commands
as they are used by \FsFlexvSLH to apply protections.
We begin this section
by describing annotations and the IFC analysis,
before presenting the ideal semantics and well-labeledness predicate
that allow us to prove relative security for \FsFlexvSLH.

\subsection{Annotations and Flow-Sensitive IFC analysis}
\label{sec:acom-and-analysis}

\begin{figure}
  \centering
  \[
    \begin{aligned}
      \annotated{\cexpr{c}}\in\texttt{acom} ::=~ & \cskip ~|~ \casgn{\cvar{X}}{\cexpr{e}}
      ~|~\acseq{\annotated{\cexpr{c}}}{\annotated{\cexpr{c}}}{\PubVars}{\PubArrs}\\
      |&~\acifs{\cexpr{b}}{\annotated{\cexpr{c}}}{\annotated{\cexpr{c}}}{\ell}\\
      |&~\acwhiles{\cexpr{b}}{\annotated{\cexpr{c}}}{\ell}{\PubVars}{\PubArrs}\\
      |&~\acaread{\cvar{X}}{\cvar{a}}{\cexpr{e}}{\ell_\cvar{X}}{\ell_i}
      ~|~\acawrite{\cvar{a}}{\cexpr{e}}{\cexpr{e}}{\ell_i}\\
      |&~\acbranch{\ell}{\annotated{\cexpr{c}}}
    \end{aligned}
  \]
  \vspace{-0.5em}
  \caption{Annotated commands}
  \label{fig:syntax-acom}
\end{figure}

%As mentioned above, our IFC analysis annotates commands with additional information to be used by \FsFlexvSLH, but also the ideal semantics and the well-labeledness predicate.
We extend the syntax of \SourceLang commands as shown in \autoref{fig:syntax-acom} with the following annotations:
% Specifically, commands are annotated with the following information:
\begin{itemize}[leftmargin=*,nosep]
  \item Branch conditions, array indices, and target variables of array reads are
    labeled with their statically determined security level.
    These annotations are directly used by \FsFlexvSLH.
% to decide what protections to apply. -- CH: space
  \item Sequence and loop commands are annotated with intermediate labelings; used for
defining well-labeledness (\autoref{sec:fs-well-labeled}).
  \item Finally, we introduce a new command $\acbranch{\ell}{\annotated{\cexpr{c}}}$, which wraps
% an annotated command -- CH: space
$\annotated{\cexpr{c}}$ with an additional label $\ell$.
    This annotation is not produced by the IFC analysis, but by our ideal
    semantics of \autoref{sec:ideal-sem}, which also works with annotated commands,
    and which uses this annotation to better track implicit flows.
    %to store the program counter (pc) label when entering a conditional,
    %so that it can restore the pc label when the conditional ends and
    %the control flow joins.\footnote{We choose this design as it
      %keeps a one-to-one correspondence between steps of the nonspeculative and
      %ideal semantics.}
% \ch{This footnote may belong in the ideal semantics
%       subsection, where I expect ``this design'' is explained a bit more.} CH: for now explained more here
\end{itemize}

\begin{figure*}
  \begin{align*}
    \flowtrack{\cskip}{\PubVars}{\PubArrs}{\mathpc} \doteq& (\cskip, \PubVars, \PubArrs)\\
    \flowtrack{\casgn{\cvar{X}}{\cexpr{e}}}{\PubVars}{\PubArrs}{\mathpc} \doteq& (\casgn{\cvar{X}}{\cexpr{e}}, \subst{\cvar{X}}{\labelarith{\PubVars}{\cexpr{e}}}{\PubVars}, \PubArrs)\\
    \flowtrack{\cseq{\cexpr{c_1}}{\cexpr{c_2}}}{\PubVars}{\PubArrs}{\mathpc} \doteq& (\acseq{\annotated{\cexpr{c_1}}}{\annotated{\cexpr{c_2}}}{\PubVars_1}{\PubArrs_1}, \PubVars_2, \PubArrs_2) 
    \begin{array}[t]{rl}
      \text{where } & (\annotated{\cexpr{c_1}}, \PubVars_1, \PubArrs_1) = \flowtrack{\cexpr{c_1}}{\PubVars}{\PubArrs}{\mathpc} \\
      \text{and } & (\annotated{\cexpr{c_2}}, \PubVars_2, \PubArrs_2) = \flowtrack{\cexpr{c_2}}{\PubVars_1}{\PubArrs_1}{\mathpc}
    \end{array} \\
    \flowtrack{\cifs{\cexpr{be}}{\cexpr{c_1}}{\cexpr{c_2}}}{\PubVars}{\PubArrs}{\mathpc} \doteq& (\acifs{\cexpr{be}}{\annotated{\cexpr{c_1}}}{\annotated{\cexpr{c_2}}}{\labelarith{\PubVars}{\cexpr{be}}}, \PubVars_1 \sqcup \PubVars_2, \PubArrs_1 \sqcup \PubArrs_2)
    \begin{array}[t]{rl}
      \text{where } & (\annotated{\cexpr{c_1}}, \PubVars_1, \PubArrs_1) = \flowtrack{\cexpr{c_1}}{\PubVars}{\PubArrs}{\mathpc \sqcup \labelarith{\PubVars}{\cexpr{be}}}\\
      \text{and } & (\annotated{\cexpr{c_2}}, \PubVars_2, \PubArrs_2) = \flowtrack{\cexpr{c_2}}{\PubVars}{\PubArrs}{\mathpc \sqcup \labelarith{\PubVars}{\cexpr{be}}}
    \end{array}\\
    \flowtrack{\cwhiles{\cexpr{be}}{\cexpr{c}}}{\PubVars}{\PubArrs}{\mathpc} \doteq& (\acwhiles{\cexpr{be}}{\annotated{\cexpr{c}}}{\labelarith{\PubVars_{\mathit{fix}}}{\cexpr{be}}}{\PubVars_{\mathit{fix}}}{\PubArrs_{\mathit{fix}}}, \PubVars_{\mathit{fix}}, \PubArrs_{\mathit{fix}})
                                                                                 \\& \text{where } (\PubVars_{\mathit{fix}}, \PubArrs_{\mathit{fix}}) = \textbf{fix}~(\lambda (\PubVars', \PubArrs').\, \textbf{let}~(\annotated{\cexpr{c}}, \PubVars'', \PubArrs'') = \flowtrack{\cexpr{c}}{\PubVars'}{\PubArrs'}{\mathpc \sqcup \labelarith{\PubVars'}{\cexpr{be}}} ~ \textbf{in} ~ (\PubVars'', \PubArrs'') \sqcup (\PubVars, \PubArrs))\\
    %\begin{array}[t]{rl}
      %\text{where } & (\PubVars, \PubArrs) \sqsubseteq (\PubVars_{\mathit{fix}}, \PubArrs_{\mathit{fix}})\\
      %\text{and } & \flowtrack{\cexpr{c}}{\PubVars_{\mathit{fix}}}{\PubArrs_{\mathit{fix}}}{\mathpc \sqcup \labelarith{\PubVars_{\mathit{fix}}}{\cexpr{be}}} = (\annotated{\cexpr{c}}, \PubVars', \PubArrs') \\
                  %& \text{with } (\PubVars', \PubArrs') \sqcup (\PubVars_{\mathit{fix}}, \PubArrs_{\mathit{fix}}) = (\PubVars_{\mathit{fix}}, \PubArrs_{\mathit{fix}})
    %\end{array}\\
    \flowtrack{\caread{\cvar{X}}{\cvar{a}}{\cexpr{i}}}{\PubVars}{\PubArrs}{\mathpc} \doteq& (\acaread{\cvar{X}}{\cvar{a}}{\cexpr{i}}{\mathpc \sqcup \labelarith{\PubVars}{\cexpr{i}} \sqcup \PubArrs(\cvar{a})}{\labelarith{\PubVars}{\cexpr{i}}}, \subst{\cvar{X}}{\mathpc \sqcup \labelarith{\PubVars}{\cexpr{i}} \sqcup \PubArrs(\cvar{a})}{\PubVars}, \PubArrs)\\
    \flowtrack{\cawrite{\cvar{a}}{\cexpr{i}}{\cexpr{e}}}{\PubVars}{\PubArrs}{\mathpc} \doteq& (\acawrite{\cvar{a}}{\cexpr{i}}{\cexpr{e}}{\labelarith{\PubVars}{\cexpr{i}}}, \PubVars , \subst{\cvar{a}}{\PubArrs(\cvar{a}) \sqcup \mathpc \sqcup \labelarith{\PubVars}{\cexpr{i}} \sqcup \PubVars(\cexpr{e})}{\PubArrs})
  \end{align*}
  \vspace{-1.5em}
  \caption{Flow-sensitive IFC analysis generating annotated commands}
  % \jb{I do borrow the fixpoint notation from HuntS11, but the relevant section of that paper is only recalling the definition of HuntS06 with slightly altered notation. I would argue that the notation is sufficiently clear?}
% CH: it's good
  \label{fig:flow-tracking}
\end{figure*}

The IFC analysis described by \autoref{fig:flow-tracking} is a straightforward extension
of the one of \citet{HuntS06} to \SourceLang and to also add annotations to the commands.
% By a slight abuse of notation, -- CH: it's fine, overloading is not abuse
We write $(\PubVars_1, \PubArrs_1) \sqsubseteq (\PubVars_2, \PubArrs_2)$,
$\PubVars_1 \sqcup \PubVars_2$, and $\PubArrs_1 \sqcup \PubArrs_2$
for the pointwise liftings from labels to maps.
For assignments, loads, and stores, the analysis simply updates
the labelings and annotates expressions with the appropriate labels.
It also takes into account the program counter label $\mathpc$,
which as usual helps prevent implicit flows.
For conditionals, both branches are analyzed recursively,
raising the $\mathpc$ label by the label of the condition.
We then take the join of the two resulting labelings, which is a necessary
overapproximation, as we cannot determine statically which branch will be taken.

Similarly, for loops, we must account for arbitrarily many iterations.
This is achieved by a fixpoint labeling, \IE a labeling such that analysis
of the loop body with this labeling results in an equal or more precise final
labeling, which implies that the annotations produced by the analysis remain
correct for execution in this final labeling (where correctness of annotations
is formalized in \autoref{sec:fs-well-labeled}).
Further, this fixpoint labeling must also be less precise than the initial
labeling to ensure that the computed annotations are correct for the first iteration.
%
%While in \autoref{fig:flow-tracking}, for brevity, we only write down these
%properties required of the labelings,
%in our formalization, we compute a fixpoint labeling with a simple
%fixpoint iteration, starting with $(\PubVars, \PubArrs)$ and repeatedly updating
%it with the result of the analysis of the loop body until a labeling with the
%required properties is reached.
%
We compute the fixpoint labeling with a simple fixpoint iteration,
ensuring termination by additionally providing the number of variables and arrays assigned 
by $\cexpr{c}$ as an upper bound.
This is sound, as each iteration that does not yield a
fixpoint must change some array or variable from public to secret.
% \jb{Does this better bridge the gap between what's shown in the figure and what we do in practice?} CH: yes
%
Thus, the static analysis is a total function that accepts all programs.

\subsection{Applying Protections}
The IFC analysis computes security labels for expressions and records them
as annotations in the commands.
The translation function $\translFsFlexvSLH{\annotated{\cexpr{c}}}$ works very
similarly to \FlexvSLH (cf. \autoref{fig:vslh-template},
\autoref{fig:fvslh-instance}) but takes annotated commands as input and applies
protections based on the labels provided as annotations, instead of the fixed
labelings for variables taken by \FlexvSLH.

\subsection{Ideal Semantics}
\label{sec:ideal-sem}

\begin{figure*}
  \centering
  \[
    \inferrule*[left=Ideal\_If]
    {\highlight{\cancel{\labelbool{\PubVars}{\cexpr{be}} = \ell}} \\
   b' = (\ell \vee \neg b) \mathrel{\wedge} \eval{\cexpr{be}}{\ScalarState}
  }
  {
    \idealstep{\fsidealstate{\highlight{\acifs{\cexpr{be}}{\annotated{\cexpr{c_\BoolTrue}}}{\annotated{\cexpr{c_\BoolFalse}}}{\ell}}}{\rho}{\mu}{b}{\highlight{\mathpc}}{\highlight\PubVars}{\highlight{\PubArrs}}}{\branch{b'}}{\step}{\fsidealstate{\highlight{\acbranch{\mathpc}{\annotated{\cexpr{c_{b'}}}}}}{\rho}{\mu}{b}{\highlight{\mathpc \sqcup \ell}}{\highlight{\PubVars}}{\highlight{\PubArrs}}}
}
  \]
  \[
  \inferrule*[left=Ideal\_If\_Force]
  {\highlight{\cancel{\BoolLabel{\cexpr{be}} = \ell}} \\
   b' = (\ell \vee \neg b) \mathrel{\wedge} \eval{\cexpr{be}}{\ScalarState}
  }
  {
    \idealstep{\fsidealstate{\highlight{\acifs{\cexpr{be}}{\annotated{\cexpr{c_\BoolTrue}}}{\annotated{\cexpr{c_\BoolFalse}}}{\ell}}}{\rho}{\mu}{b}{\highlight{\mathpc}}{\highlight{\PubVars}}{\highlight{\PubArrs}}}{\branch{b'}}{\force}{\fsidealstate{\highlight{\acbranch{\mathpc}{\annotated{\cexpr{c_{\neg b'}}}}}}{\rho}{\mu}{\BoolTrue}{\highlight{\mathpc \sqcup \ell}}{\highlight{\PubVars}}{\highlight{\PubArrs}}}
    }
  \]
  \[
  \inferrule*[left=Ideal\_Read]
  {\highlight{\cancel{\labelarith{\PubVars}{\cexpr{ie}} = \ell_{\cexpr{i}}} }\\
   i = \left\{
   {\begin{matrix*}[l]
    0 & \mathit{if} ~ \lnot\ell_{\cexpr{i}} \mathrel{\wedge} b \\
    \eval{\cexpr{ie}}{\rho} & \mathit{otherwise}
    \end{matrix*}
   }
   \right.
   \\
   v =
   \left\{
   {\begin{matrix*}[l]
       0 & \mathit{if} ~ \highlight{\ell_\cvar{X}} \mathrel{\wedge} \ell_{\cexpr{i}} \mathrel{\wedge} b \\
    \colAll{\eval{\cvar{a}[i]}{\mu}} & \mathit{otherwise}
    \end{matrix*}
   }
   \right.
    \\
   %% \highlight{v = \lookup{\PubVars}{\cvar{X}} \wedge \lnot\ell_{\cexpr{i}} \wedge b ~ ? ~ 0 ~ : ~ \eval{\cvar{a}[i]}{\mu}} \\
   i < | \cvar{a} |_{\mu}
 }{
   \idealstep{\fsidealstate{\highlight{\acaread{\cvar{X}}{\cvar{a}}{\cexpr{ie}}{\ell_\cvar{X}}{\ell_\cexpr{i}}}}{\rho}{\mu}{b}{\highlight{\mathpc}}{\highlight{\PubVars}}{\highlight{\PubArrs}}}{\OARead{\cvar{a}}{i}}{\step}{\fsidealstate{\cskip}{\subst{\cvar{X}}{v}{\rho}}{\mu}{b}{\highlight{\mathpc}}{\highlight{\subst{\cvar{X}}{\ell_\cvar{X}}{\PubVars}}}{\highlight{\PubArrs}}}
 }
  \]
  \[
  \inferrule*[left=Ideal\_Read\_Force]
  {\highlight{\cancel{\labelarith{\PubVars}{\cexpr{ie}}}} \\
   i = \eval{\cexpr{ie}}{\rho} \\
   %% \highlight{v = \lnot\lookup{\PubVars}{\cvar{X}} ~ ? ~ 0 ~ : ~ \eval{\cvar{b}[j]}{\mu}} \\\\
   v =
   \left\{
   {\begin{matrix*}[l]
       0 & \mathit{if} ~ \highlight{\ell_\cvar{X}} \\
    \colAll{\eval{\cvar{b}[j]}{\mu}} & \mathit{otherwise}
    \end{matrix*}
   }
   \right.
    \\
   i \ge | \cvar{a} |_{\mu} \\
   j < | \cvar{b} |_{\mu} 
 }{
   \idealstep{\fsidealstate{\highlight{\acaread{\cvar{X}}{\cvar{a}}{\cexpr{ie}}{\ell_\cvar{X}}{\BoolTrue}}}{\rho}{\mu}{\BoolTrue}{\highlight{\mathpc}}{\highlight{\PubVars}}{\highlight{\PubArrs}}}{\OARead{\cvar{a}}{i}}{\step}{\fsidealstate{\cskip}{\subst{\cvar{X}}{v}{\rho}}{\mu}{\BoolTrue}{\highlight{\mathpc}}{\highlight{\subst{\cvar{X}}{\ell_\cvar{X}}{\PubVars}}}{\highlight{\PubArrs}}}
 }
  \]
  \[
  \inferrule*[lab=Ideal\_Seq\_Skip]
  {\highlight{\mathit{terminal} ~ \annotated{\cexpr{c_1}}}}
  {\idealstep{\fsidealstate{\highlight{\acseq{\annotated{\cexpr{c_1}}}{{\annotated{\cexpr{c_2}}}}{\PubVars'}{\PubArrs'}}}{\rho}{\mu}{b}{\highlight{\mathpc}}{\highlight{\PubVars}}{\highlight{\PubArrs}}}{\bullet}{\bullet}{\fsidealstate{\annotated{\cexpr{c_2}}}{\rho}{\mu}{b}{\highlight{\pcofacom{\annotated{\cexpr{c_1}}}{\mathpc}}}{\highlight{\PubVars}}{\highlight{\PubArrs}}}}
  \quad
    \highlight{
    \inferrule*[lab=Ideal\_Branch]
    {
      \idealstep{\fsidealstate{{\annotated{\cexpr{c}}}}{\rho}{\mu}{b}{\mathpc}{\PubVars}{\PubArrs}}{o}{d}{\fsidealstate{{\annotated{\cexpr{c'}}}}{\rho'}{\mu'}{b'}{\mathpc'}{\PubVars'}{\PubArrs'}}
    }
    {
      \idealstep{\langle \acbranch{\ell}{\annotated{\cexpr{c}}}, \dots \rangle}{o}{d}{\langle \acbranch{\ell}{\annotated{\cexpr{c'}}}, \dots \rangle}
    }
  }
  \]
  \vspace{-1.5em}
  \caption{Ideal semantics for \FsFlexvSLH (selected rules)}
  \label{fig:ideal-semantics-flowsensitivevslh}
\end{figure*}

As in \autoref{sec:ideal-aslh} and \autoref{sec:FvSLH},
we use an ideal semantics to prove the relative security of \FsFlexvSLH.
Compared to \autoref{fig:ideal-semantics-fvslh} from \autoref{sec:FvSLH},
the ideal semantics for \FsFlexvSLH 
also dynamically tracks the labels of variables when taking execution steps~\cite{SabelfeldR09},
as these will be necessary for the counterpart of \autoref{lem:gilles-lemma}.
Therefore, the states of our ideal semantics also include a $\mathpc$ label,
variable labeling $\PubVars$, and array labeling $\PubArrs$.
To exactly match the behavior of the statically applied protections,
the ideal semantics does not use these dynamic labelings in its premises though.
Instead, it uses the annotations provided by the static IFC analysis.
This distinction is important, as the dynamically computed labelings may be more precise,
depending on which branches are taken and the number of loop iterations.
%These dynamic labelings are not used for protection, however,\ch{didn't get what follows
  %after this ``however'' and the connection to what was said so far (bad flow).
  %Are there some words missing here? Are ``however'', ``as'', and ``while'' really the right connectors?}
   %as the ideal semantics must match the behavior of the statically applied protections, while the dynamically computed labelings may be more precise depending on which branches are taken.
%Therefore, the ideal semantics instead\ch{Instead of what? I'm lost by now, but
  %already because of the previous phrase}
%uses the annotations provided by the static IFC analysis.

%Due to our design of branch annotations,\ch{Maybe that design should be explained here, not in 4.1?
  %It would be good to (re)explain the main idea before going into the corner cases below.}
%the \textsc{Ideal\_Seq\_Skip} rule also requires a change.
To save and restore the $\mathpc$ label, the ideal semantics uses \texttt{branch} annotations.
When entering a branch $\annotated{\cexpr{c}}$ in rules \textsc{Ideal\_If} and
\textsc{Ideal\_If\_Force} in \autoref{fig:ideal-semantics-flowsensitivevslh},
$\annotated{\cexpr{c}}$ is wrapped in an annotated command
$\acbranch{\mathpc}{\annotated{\cexpr{c}}}$, where $\mathpc$ is the program
counter label before entering the branch.
The command $\annotated{\cexpr{c}}$ underneath this annotation
then takes steps as usual (\textsc{Ideal\_Branch}).
This leads to terminated commands ($\cskip$) potentially being wrapped in an arbitrary number of such branch annotations, so we introduce the predicate $\mathit{terminal}$ to describe such commands.
The rule \textsc{Ideal\_Seq\_Skip} is adjusted accordingly, allowing any terminal program on the left.
In this rule, the function $\pcofacom{\mathpc}{\annotated{\cexpr{c}}}$
determines the label of the outermost branch annotation and uses it to restore
the program counter label to the one before entering the branch.
If there is no branch annotation, $\pcofacom{\mathpc}{\annotated{\cexpr{c}}}$
defaults to the current program counter label $\mathpc$.
We choose to use branch annotations in this way as it allows us to keep a
one-to-one correspondence between steps of the sequential and ideal semantics.
%Since $\cskip$ commands can be wrapped in an arbitrary number of branch annotations,
%we generalize this rule with the predicate $\mathit{terminal}~\annotated{\cexpr{c_1}}$.
%However, if we are leaving a branch in this way, we must also update the program counter label accordingly, so $\pcofacom{\mathpc}{\annotated{\cexpr{c_1}}}$ determines the label of the outermost branch annotation, defaulting to $\mathpc$ otherwise.
%
For illustration, we also give the rules for loads in
\autoref{fig:ideal-semantics-flowsensitivevslh}, with differences
wrt. \autoref{fig:ideal-semantics-fvslh} \highlight{highlighted}.
\ifappendix The full set of rules can be found in \autoref{sec:appendix-fsflexvslh-ideal-semantics}.\fi

Since the ideal semantics matches the masking behavior of $\translFsFlexvSLH{\cdot}$,
we again establish a compiler correctness result:
\begin{lemma}[Backwards compiler correctness for \FsFlexvSLH]
  \label{lem:bcc-fs}
  \setcounter{equation}{0}
  \begin{align}
 &(\forall \cvar{a}. | \cvar{a} |_{\ArrayState} > 0) \wedge
 \cvar{b} \notin \UsedVars{\cexpr{c}} \wedge
 \lookup{\ScalarState}{\cvar{b}} = \natofbool{\SemSpecFlag}
 \\
 &\Rightarrow \specstate{\translFsFlexvSLH{\annotated{\cexpr{c}}}}{\ScalarState}{\ArrayState}{\SemSpecFlag}
 \specmulti{\Obss}{\Dirs}
 \specstate{\cexpr{c'}}{\ScalarState'}{\ArrayState'}{\SemSpecFlag'}
 \\
 &
\Rightarrow 
\setlength{\arraycolsep}{0pt}
\begin{array}[t]{rl}
\exists \annotated{\cexpr{c''}} \, \PubVars' \, \PubArrs' \, \mathpc'.&
  {\fsidealstate{\cexpr{c}}{\ScalarState}{\ArrayState}{\SemSpecFlag}{\mathpc}{\PubVars}{\PubArrs}}
  \xrightarrow[\Dirs]{\Obss}\hspace{-0.5em}{}_i^*\\
 & {\fsidealstate{\cexpr{c''}}{\subst{\cvar{b}}{\lookup{\ScalarState}{\cvar{b}}}{\ScalarState'}}{\ArrayState'}{\SemSpecFlag'}{\mathpc'}{\PubVars'}{\PubArrs'}}
\end{array}
\\
  % Strengthening, conjunction
 &\wedge
 (
 \cexpr{c'} = \cskip \Rightarrow
 \mathit{terminal}~\annotated{\cexpr{c''}} \wedge
 \lookup{\ScalarState'}{\cvar{b}} = \natofbool{b'}
      %% \left\{
      %% \begin{matrix}
      %% 1 & \mathit{if} ~ b' \\
      %% 0 & \mathit{otherwise}
      %% \end{matrix}
      %% \right.
 )
  \end{align}
\end{lemma}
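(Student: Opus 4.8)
The plan is to mirror the proof of \autoref{lem:bcc}, adding the bookkeeping needed to thread the program-counter label $\mathpc$ and the dynamic labelings $\PubVars, \PubArrs$ through the ideal semantics of \autoref{fig:ideal-semantics-flowsensitivevslh}, and to accommodate the \texttt{branch} annotations that this semantics introduces. As there, I would first \emph{strengthen} the statement so that it also covers the annotated commands that arise mid-execution: allowing $\annotated{\cexpr{c}}$ to carry \texttt{branch} wrappers, and leaving $\mathpc, \PubVars, \PubArrs$ unconstrained, since the masking decisions are read only from the static label annotations sitting inside $\annotated{\cexpr{c}}$ and never from the threaded labelings. The invariant that would drive the whole argument is exactly this: both $\translFsFlexvSLH{\cdot}$ and the ideal semantics decide whether to mask a branch condition, an array index, or a loaded value solely from those annotations, so the hardened program run speculatively and the annotated program run ideally make the same masking choices step for step. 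I would also record that $\translFsFlexvSLH{\cdot}$ is transparent to \texttt{branch} wrappers and that \textsc{Ideal\_Branch} merely relays steps while the generalized \textsc{Ideal\_Seq\_Skip} restores the saved $\mathpc$ via $\pcofacom{\cdot}{\cdot}$; hence the wrappers never affect observations or directives and only reposition $\mathpc$.

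Next I would prove the strengthened statement by well-founded induction on the measure $|\annotated{\cexpr{c}}| + |\Obss|$, where $|\annotated{\cexpr{c}}|$ counts constructors (counting a loop's body once), exactly as in \autoref{lem:bcc}: this bounds the number of speculative steps of the target run, because every silent step strictly shrinks the command --- including the silent peeling of \texttt{branch} wrappers off a terminal command and the generalized sequence-skip step --- whereas each loop iteration emits at least one observation and is charged to $|\Obss|$. The case analysis on the shape of $\annotated{\cexpr{c}}$ is then routine: $\cskip$, assignment, and sequencing go as before (sequencing applying the induction hypothesis twice, using the intermediate labelings recorded on the $\acseq{\annotated{\cexpr{c_1}}}{\annotated{\cexpr{c_2}}}{\PubVars_1}{\PubArrs_1}$ node --- exactly the ones produced by \autoref{fig:flow-tracking} --- to restart the second run). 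For conditionals, the hardened program masks $\cexpr{be}$ iff the annotation $\ell$ is secret, which is precisely the guard in \textsc{Ideal\_If} and \textsc{Ideal\_If\_Force}; after the shared $\branch{b'}$ observation the target continues with the flag update followed by $\translFsFlexvSLH{\annotated{\cexpr{c_{b'}}}}$ while the ideal run continues with $\acbranch{\mathpc}{\annotated{\cexpr{c_{b'}}}}$ at the raised label $\mathpc \sqcup \ell$, so I would close with the induction hypothesis on the residual --- which is exactly why the strengthened statement had to allow \texttt{branch} wrappers and an arbitrary $\mathpc$. Loops I would unfold one iteration into a conditional (as in \textsc{Spec\_While} and its ideal counterpart) and discharge via the trace-length part of the measure; the fixpoint labeling computed by the analysis guarantees the body's annotations stay valid across iterations, keeping the masking choices aligned. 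For array reads and writes, the index masking of $\translFsFlexvSLH{\cdot}$ and its value masking of loads match the guards in \textsc{Ideal\_Read} and \textsc{Ideal\_Write} (the non-emptiness side condition ensures a masked index $0$ is always in bounds, so it never itself triggers a forced out-of-bounds access), and the force rules fire only in the ``safe'' configurations of \textsc{Ideal\_Read\_Force} and \textsc{Ideal\_Write\_Force}; since $\cvar{b} \notin \UsedVars{\cexpr{c}}$, the reserved flag is left untouched by the ideal run, which is what lets it be reset to $\lookup{\ScalarState}{\cvar{b}}$ in the conclusion.

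The final conjunct I would carry alongside the execution correspondence: once the target run has reached $\cskip$, the residual ideal command is $\mathit{terminal}$ (a $\cskip$ under some stack of \texttt{branch} wrappers) and the program-level flag stored in $\cvar{b}$ agrees with the semantic flag, because the hardened conditionals and loops update $\cvar{b}$ in lockstep with the way the ideal \textsc{Ideal\_If}, \textsc{Ideal\_If\_Force} and force rules update $b$. I expect the main obstacle to be precisely this \texttt{branch}-wrapper and $\mathpc$ bookkeeping: one has to check that introducing a wrapper when entering a branch, relaying steps under it, and restoring $\mathpc$ on exit preserves a one-to-one correspondence with the steps of the hardened program and never perturbs observations or directives, and that the extra silent unwrapping steps do not undermine the termination measure. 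The loop case is the other delicate point, but it should be handled exactly as in \autoref{lem:bcc}, through the observation-count part of the measure together with the validity of the fixpoint annotations from \autoref{fig:flow-tracking}.
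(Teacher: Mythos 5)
Your plan matches the paper's approach: the paper proves this lemma exactly as it proves \autoref{lem:bcc} --- induction on the compound measure $|\cexpr{c}| + |\Obss|$ with a case analysis on the command, loops charged to the observation trace --- strengthened to thread $\mathpc$, $\PubVars$, $\PubArrs$ and to tolerate \texttt{branch} wrappers (hence the $\mathit{terminal}$ conclusion), relying on the fact that both $\translFsFlexvSLH{\cdot}$ and the ideal semantics read masking decisions only from the static annotations. Your identification of that last invariant and of the wrapper/$\mathpc$ bookkeeping as the only genuinely new work is exactly right.
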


On the last line, we use a predicate
$\mathit{terminal} ~ \annotated{\cexpr{c''}}$ instead of requiring that
$\annotated{\cexpr{c''}} = \cskip$, because the $\cskip$ command may be wrapped
in multiple \texttt{branch} annotations.

\subsection{Well-labeledness}
\label{sec:fs-well-labeled}
\begin{figure}
  \centering
  \[
    \inferrule*[left=WL\_Skip]{ (\PubVars_1, \PubArrs_1) \sqsubseteq (\PubVars_2, \PubArrs_2)}{\wlacom{\cskip}{\PubVars_1}{\PubArrs_1}{\mathpc}{\PubVars_2}{\PubArrs_2}}
  \]
  \[
    \inferrule*[left=WL\_Asgn]{ (\subst{\cvar{X}}{\labelarith{\PubVars_1}{\cexpr{e}}}{\PubVars_1}, \PubArrs_1) \sqsubseteq{\PubVars_2, \PubArrs_2}}{ \wlacom{(\casgn{\cvar{X}}{\cexpr{e}})}{\PubVars_1}{\PubArrs_1}{\mathpc}{\PubVars_2}{\PubArrs_2}}
  \]
  \[
    \inferrule*[left=WL\_Seq]{
      \branchfree{\annotated{\cexpr{c_2}}} \\
      \wlacom{\annotated{\cexpr{c_1}}}{\PubVars_1}{\PubArrs_1}{\mathpc}{\PubVars'}{\PubArrs'} \\
      \wlacom{\annotated{\cexpr{c_2}}}{\PubVars'}{\PubArrs'}{(\pcofacom{\annotated{\cexpr{c_1}}}{\mathpc})}{\PubVars_2}{\PubArrs_2}
    }{\wlacom{(\acseq{\annotated{\cexpr{c_1}}}{\annotated{\cexpr{c_2}}}{\PubVars'}{\PubArrs'})}{\PubVars_1}{\PubArrs_1}{\mathpc}{\PubVars_2}{\PubArrs_2}}
  \]
  \[
    \infer[WL\_If] {
      \labelbool{\PubVars_1}{\cexpr{be}} \sqsubseteq \ell_\cexpr{be}\and
      \branchfree{\annotated{\cexpr{c_1}}}\and
      \branchfree{\annotated{\cexpr{c_2}}}\\
      \wlacom{\annotated{\cexpr{c_1}}}{\PubVars_1}{\PubArrs_1}{\mathpc \sqcup \ell_\cexpr{be}}{\PubVars_2}{\PubArrs_2}\\
      \wlacom{\annotated{\cexpr{c_2}}}{\PubVars_1}{\PubArrs_1}{\mathpc \sqcup \ell_\cexpr{be}}{\PubVars_2}{\PubArrs_2}
    }{
      \wlacom{(\acifs{\cexpr{be}}{\annotated{\cexpr{c_1}}}{\annotated{\cexpr{c_2}}}{\ell_\cexpr{be}})}{\PubVars_1}{\PubArrs_1}{\mathpc}{\PubVars_2}{\PubArrs_2}
    }
  \]
  \[
    \infer[WL\_While]{
      \labelbool{\PubVars_1}{\cexpr{be}} \sqsubseteq \ell_\cexpr{be}\and
      \branchfree{\annotated{\cexpr{c}}}\\
      (\PubVars_1, \PubArrs_1) \sqsubseteq (\PubVars', \PubArrs') \and
      (\PubVars', \PubArrs') \sqsubseteq (\PubVars_2, \PubArrs_2) \\
      \wlacom{\annotated{\cexpr{c_1}}}{\PubVars'}{\PubArrs'}{\mathpc \sqcup \ell_\cexpr{be}}{\PubVars'}{\PubArrs'}
    }{
      \wlacom{(\acwhiles{\cexpr{be}}{\annotated{\cexpr{c}}}{\ell_\cexpr{be}}{\PubVars'}{\PubArrs'})}{\PubVars_1}{\PubArrs_1}{\mathpc}{\PubVars_2}{\PubArrs_2}
    }
  \]
  \[
    \infer[WL\_ARead]{
      \labelarith{\PubVars_1}{\cexpr{e}} \sqsubseteq \ell_i \\
      \mathpc \sqsubseteq \ell_\cvar{X} \\
      \ell_i \sqsubseteq \ell_\cvar{X} \\
      \PubArrs_1(\cvar{a}) \sqsubseteq \ell_\cvar{X} \\
      (\subst{\cvar{X}}{\ell_\cvar{X}}{\PubVars_1}, \PubArrs_1) \sqsubseteq (\PubVars_2, \PubArrs_2)
    }{
      \wlacom{(\acaread{\cvar{X}}{\cvar{a}}{\cvar{e}}{\ell_\cvar{X}}{\ell_i})}{\PubVars_1}{\PubArrs_1}{\mathpc}{\PubVars_2}{\PubArrs_2}
    }
  \]
  \[
    \infer[WL\_AWrite]{
      \labelarith{\PubVars_1}{\cexpr{i}} \sqsubseteq \ell_\cexpr{i} \\
      (\PubVars_1, \subst{\cvar{a}}{\PubArrs_1(\cvar{a}) \sqcup \mathpc \sqcup \ell_\cexpr{i} \sqcup \labelarith{\PubVars_1}{\cexpr{e}}}{\PubArrs_1}) \sqsubseteq (\PubVars_2, \PubArrs_2)
    }{
      \wlacom{(\acawrite{\cvar{a}}{\cexpr{i}}{\cexpr{e}}{\ell_\cexpr{i}})}{\PubVars_1}{\PubArrs_1}{\mathpc}{\PubVars_2}{\PubArrs_2}
    }
  \]
  \[
    \inferrule*[left=WL\_Branch]{
      \wlacom{\annotated{\cexpr{c}}}{\PubVars_1}{\PubArrs_1}{\mathpc}{\PubVars_2}{\PubArrs_2}
    }{
      \wlacom{(\acbranch{\ell}{\annotated{\cexpr{c}}})}{\PubVars_1}{\PubArrs_1}{\mathpc}{\PubVars_2}{\PubArrs_2}
    }
  \]
  \vspace{-1.5em}
  \caption{Well-labeledness of annotated commands}
  \label{fig:well-labeled}
\end{figure}

Since the ideal semantics relies on the labels included in our annotated commands, we can only show relative security (the counterpart of \autoref{lem:faslh-ideal-rs}) if the annotations have been computed correctly, \IE a secret value can never be labeled as public.
While this is the case for the initial annotated program obtained by our IFC analysis, we need a property that is preserved during execution, and thus cannot reference the analysis directly.
We therefore define a well-labeledness judgment
$\wlacom{\annotated{\cexpr{c}}}{\PubVars_1}{\PubArrs_1}{\mathpc}{\PubVars_2}{\PubArrs_2}$
of an annotated command $\annotated{\cexpr{c}}$ with respect to an initial
labeling $(\PubVars_1, \PubArrs_1)$ and $\mathpc$ label
and a final labeling $(\PubVars_2, \PubArrs_2)$ with the derivation rules given
in \autoref{fig:well-labeled}.
The rules largely match the behavior of
$\flowtrack{\cdot}{\PubVars}{\PubArrs}{\mathpc}$, except instead of saving
computed labels, we only make sure that they don't contradict the already annotated ones.

The most interesting case is that of loops (\textsc{WL\_While}).
Instead of doing another fixpoint iteration,
we confirm that the annotated labeling is indeed a fixpoint by using it for both
the initial and final labeling for the loop body.

Like in the ideal semantics, in \textsc{WL\_Seq}
we use $\pcofacom{\ell}{\annotated{\cexpr{c_1}}}$
to determine the correct program counter label after executing the first part of a sequence.
However, this would create problems for the preservation result below
if we had branch annotations at arbitrary locations, so we prohibit branch annotations in nonsensical locations 
using the predicate $\branchfree{\annotated{\cexpr{c}}}$.

Note that well-labeledness does not require labels to match \emph{precisely}, instead,
they may overapproximate.
Specifically, the initial labeling can be made more precise and the final labeling less precise while preserving well-labeledness.

\begin{lemma}[IFC analysis produces well-labeled programs]
  \label{lem:static-tracking-well-labeled}
  \[ \flowtrack{{\cexpr{c}}}{\PubVars}{\PubArrs}{\mathpc} = (\annotated{\cexpr{c}}, \PubVars', \PubArrs') \implies \wlacom{\annotated{\cexpr{c}}}{\PubVars}{\PubArrs}{\mathpc}{\PubVars'}{\PubArrs'}\]
\end{lemma}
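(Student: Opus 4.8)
The plan is to prove this by structural induction on $\cexpr{c}$ --- equivalently, on the recursion of $\flowtrack{\cdot}{\cdot}{\cdot}{\cdot}$ in \autoref{fig:flow-tracking} --- matching each defining equation of the analysis against the corresponding rule of \autoref{fig:well-labeled}. Before starting the induction I would establish two auxiliary facts. First, a trivial induction shows the IFC analysis never emits a \texttt{branch} annotation, so every output $\annotated{\cexpr{c}}$ satisfies $\branchfree{\annotated{\cexpr{c}}}$; by the definition of $\pcofacom{\cdot}{\cdot}$ this yields $\pcofacom{\annotated{\cexpr{c}}}{\mathpc} = \mathpc$ for analysis outputs, which is exactly what \textsc{WL\_Seq} needs, and it also makes \textsc{WL\_Branch} irrelevant. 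Second, I would prove that well-labeledness is monotone in its labelings: from $\wlacom{\annotated{\cexpr{c}}}{\PubVars_1}{\PubArrs_1}{\mathpc}{\PubVars_2}{\PubArrs_2}$, $(\PubVars_1', \PubArrs_1') \sqsubseteq (\PubVars_1, \PubArrs_1)$ and $(\PubVars_2, \PubArrs_2) \sqsubseteq (\PubVars_2', \PubArrs_2')$ one obtains $\wlacom{\annotated{\cexpr{c}}}{\PubVars_1'}{\PubArrs_1'}{\mathpc}{\PubVars_2'}{\PubArrs_2'}$ (induction on the well-labeledness derivation, using monotonicity of expression labeling). This is the formal content of the ``overapproximate, need not match precisely'' remark after \autoref{fig:well-labeled}, and it lets me freely sharpen the initial and coarsen the final labeling when combining subcases.

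The leaf and the non-loop compositional cases should then be routine. For $\cskip$ and $\casgn{\cvar{X}}{\cexpr{e}}$ the analysis either leaves the labelings alone or sets $\PubVars(\cvar{X})$ to exactly $\labelarith{\PubVars}{\cexpr{e}}$, so \textsc{WL\_Skip}/\textsc{WL\_Asgn} close them by reflexivity of $\sqsubseteq$. For $\caread{\cvar{X}}{\cvar{a}}{\cexpr{i}}$ and $\cawrite{\cvar{a}}{\cexpr{i}}{\cexpr{e}}$ the emitted annotations are precisely the joins appearing in the premises of \textsc{WL\_ARead}/\textsc{WL\_AWrite}, so each side condition (e.g.\ $\mathpc \sqsubseteq \ell_{\cvar{X}}$, $\ell_{\cexpr{i}} \sqsubseteq \ell_{\cvar{X}}$, $\PubArrs(\cvar{a}) \sqsubseteq \ell_{\cvar{X}}$) holds because a join dominates its operands, and the final labeling matches exactly. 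For $\cseq{\cexpr{c_1}}{\cexpr{c_2}}$ the two induction hypotheses give the two pieces well-labeled through the intermediate labeling recorded in the annotated sequence command, and branch-freeness of $\annotated{\cexpr{c_1}}$ turns the analysis's reuse of $\mathpc$ for $\cexpr{c_2}$ into the $\pcofacom{\annotated{\cexpr{c_1}}}{\mathpc}$ demanded by \textsc{WL\_Seq}. For $\cifs{\cexpr{be}}{\cexpr{c_1}}{\cexpr{c_2}}$ the induction hypotheses give each branch well-labeled from $(\PubVars,\PubArrs)$ with $\mathpc$ raised by $\labelarith{\PubVars}{\cexpr{be}}$ to its own output labeling; I then weaken both output labelings up to their join $(\PubVars_1 \sqcup \PubVars_2,\, \PubArrs_1 \sqcup \PubArrs_2)$ with the monotonicity lemma and apply \textsc{WL\_If}.

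The loop case is where the real work --- and the main obstacle --- lies, since it is the only place the fixpoint machinery of the analysis is exercised. Write $(\PubVars_{\mathit{fix}}, \PubArrs_{\mathit{fix}})$ for the labeling returned by $\textbf{fix}$, which is also the lemma's final labeling. I need two properties of it, both flowing from the fact that $\textbf{fix}$ genuinely computes a fixpoint of $\lambda(\PubVars',\PubArrs').\, f(\PubVars',\PubArrs') \sqcup (\PubVars,\PubArrs)$, where $f$ denotes analysis of the loop body; establishing that this fixpoint exists and is actually returned is itself the delicate part, relying on the termination bound discussed after \autoref{fig:flow-tracking} (every non-fixpoint iteration flips some variable or array from $\LabelPublic$ to $\LabelSecret$). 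From the join with $(\PubVars,\PubArrs)$ I get $(\PubVars,\PubArrs) \sqsubseteq (\PubVars_{\mathit{fix}}, \PubArrs_{\mathit{fix}})$, which discharges the $(\PubVars_1,\PubArrs_1)\sqsubseteq(\PubVars',\PubArrs')$ premise of \textsc{WL\_While} and, together with monotonicity of expression labeling, the condition $\labelbool{\PubVars}{\cexpr{be}} \sqsubseteq \labelarith{\PubVars_{\mathit{fix}}}{\cexpr{be}}$; the premise $(\PubVars',\PubArrs')\sqsubseteq(\PubVars_2,\PubArrs_2)$ is just reflexivity, since the lemma's final labeling is $(\PubVars_{\mathit{fix}}, \PubArrs_{\mathit{fix}})$ too. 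From the fixpoint equation, re-analyzing the body with initial labeling $(\PubVars_{\mathit{fix}}, \PubArrs_{\mathit{fix}})$ and pc $\mathpc \sqcup \labelarith{\PubVars_{\mathit{fix}}}{\cexpr{be}}$ yields an output $(\PubVars'', \PubArrs'') \sqsubseteq (\PubVars_{\mathit{fix}}, \PubArrs_{\mathit{fix}})$, and the body annotation recorded in the loop is exactly this run's output; applying the induction hypothesis to that run gives $\wlacom{\annotated{\cexpr{c}}}{\PubVars_{\mathit{fix}}}{\PubArrs_{\mathit{fix}}}{\mathpc \sqcup \labelarith{\PubVars_{\mathit{fix}}}{\cexpr{be}}}{\PubVars''}{\PubArrs''}$, which monotonicity weakens to $\wlacom{\annotated{\cexpr{c}}}{\PubVars_{\mathit{fix}}}{\PubArrs_{\mathit{fix}}}{\mathpc \sqcup \labelarith{\PubVars_{\mathit{fix}}}{\cexpr{be}}}{\PubVars_{\mathit{fix}}}{\PubArrs_{\mathit{fix}}}$, i.e.\ the remaining premise of \textsc{WL\_While}. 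So the difficulty is concentrated entirely in the loop: pinning down the post-fixpoint property of $\textbf{fix}$ precisely enough that the single annotated body produced at the fixpoint is simultaneously well-labeled for itself; everything else is bookkeeping.
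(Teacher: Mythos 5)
Your proposal is correct and follows essentially the same route as the paper's proof: structural induction on $\cexpr{c}$, with the non-loop cases discharged directly against the rules of \autoref{fig:well-labeled} (using the overapproximation/monotonicity property of well-labeledness that the paper notes just before the lemma), and all the real work concentrated in the loop case. The only organizational difference is that you factor the loop case through a separate ``\textbf{fix} returns a genuine fixpoint'' claim justified by the public-variable-counting termination bound, whereas the paper runs that same counting argument in place, as a nested induction on the number of public variables and arrays with two base cases (no public variables or arrays left, or no shrinkage in one iteration), concluding well-labeledness of the body at the fixpoint labeling from the outer induction hypothesis in both cases.
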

\begin{proof}[Proof sketch]
  By induction on $\cexpr{c}$; most cases follow easily from the definition of well-labeledness.
  The loop case follows by a nested induction on the number of public variables and arrays, with two base cases: If there are no public variables or arrays left or if the set of public variables and arrays does not shrink in one iteration, then a fixpoint has been reached; in both cases, well-labeledness of the loop body in the fixpoint labeling follows by the outer induction hypothesis.
\end{proof}

\begin{lemma}[$\idealstep{}{}{}{}$ preserves well-labeledness]
  \label{lem:fs-ideal-preserves-wl}
  \begin{align*}
    & \wlacom{\annotated{\cexpr{c}}}{\PubVars}{\PubArrs}{\mathpc}{\PubVars''}{\PubArrs''} & \Rightarrow \\
    & \idealstep{{\fsidealstate{\annotated{\cexpr{c}}}{\rho}{\mu}{b}{\mathpc}{\PubVars}{\PubArrs}}}{\Obss}{\Dirs}{\fsidealstate{\annotated{\cexpr{c'}}}{\rho}{\mu}{b}{\mathpc'}{\PubVars'}{\PubArrs'}} & \Rightarrow \\
    & \qquad \wlacom{\annotated{\cexpr{c'}}}{\PubVars'}{\PubArrs'}{\mathpc'}{\PubVars''}{\PubArrs''} 
  \end{align*}
\end{lemma}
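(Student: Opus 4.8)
The plan is to prove the lemma by induction on the derivation of the ideal step $\idealstep{}{}{}{}$ from $\langle\annotated{c},\rho,\mu,b,\mathpc,\PubVars,\PubArrs\rangle$ to $\langle\annotated{c'},\rho',\mu',b',\mathpc',\PubVars',\PubArrs'\rangle$, doing inversion on $\wlacom{\annotated{c}}{\PubVars}{\PubArrs}{\mathpc}{\PubVars''}{\PubArrs''}$ in each case and then reassembling a well-labeledness derivation for $\annotated{c'}$ against the updated $(\PubVars',\PubArrs')$ and $\mathpc'$. Three small auxiliary facts are used throughout. (i) \emph{Monotonicity of well-labeledness}: $\wlacom{\annotated{c}}{\PubVars_1}{\PubArrs_1}{\mathpc}{\PubVars_2}{\PubArrs_2}$ still holds after making the initial labeling more precise and the final one less precise; this is the property already announced after \autoref{fig:well-labeled} and goes by structural induction on $\annotated{c}$, using monotonicity of $\labelarith{\cdot}{\cdot}$, $\labelbool{\cdot}{\cdot}$ and of the update operations. (ii) \emph{Terminal fact}: if $\mathit{terminal}\,\annotated{c}$ and $\wlacom{\annotated{c}}{\PubVars_1}{\PubArrs_1}{\mathpc}{\PubVars_2}{\PubArrs_2}$ then $(\PubVars_1,\PubArrs_1)\sqsubseteq(\PubVars_2,\PubArrs_2)$, obtained by peeling the \textsc{WL\_Branch} wrappers (which leave labelings untouched) down to \textsc{WL\_Skip}. (iii) \emph{$\pcofacom$-commutation}: for commands respecting the \texttt{branch}-position discipline of \textsc{WL\_Seq} (in particular well-labeled ones), every ideal step satisfies $\pcofacom{\annotated{c'}}{\mathpc'}=\pcofacom{\annotated{c}}{\mathpc}$ — the label carried by a \texttt{branch} wrapper is exactly the $\mathpc$ in force before the wrapper was introduced, and $\pcofacom$ of a sequence recurses into its first component, so the two bookkeepings cancel.

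With these in hand most cases are mechanical. For the assignment, read, and write rules (and the misspeculating \textsc{Force} variants) $\annotated{c}$ steps to $\cskip$ and the state labeling is updated by exactly the expression occurring in the matching well-labeledness rule, so the side condition demanded by \textsc{WL\_Skip} on $\cskip$ is literally the last premise produced by inverting \textsc{WL\_Asgn}, \textsc{WL\_ARead}, or \textsc{WL\_AWrite}. For \textsc{Ideal\_If}/\textsc{Ideal\_If\_Force}, inverting \textsc{WL\_If} gives $\wlacom{\annotated{c_{b'}}}{\PubVars}{\PubArrs}{\mathpc\sqcup\ell}{\PubVars''}{\PubArrs''}$ for the taken branch, and since the step yields $\acbranch{\mathpc}{\annotated{c_{b'}}}$ with new pc $\mathpc\sqcup\ell$, one \textsc{WL\_Branch} step finishes. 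For \textsc{Ideal\_Branch} and \textsc{Ideal\_Seq\_Step} we invert \textsc{WL\_Branch} resp.\ \textsc{WL\_Seq}, apply the induction hypothesis to the sub-step, and re-apply the same rule; in the sequence case the only nontrivial residual goal $\wlacom{\annotated{c_2}}{\PubVars_s}{\PubArrs_s}{(\pcofacom{\annotated{c_1'}}{\mathpc'})}{\PubVars_2}{\PubArrs_2}$ follows from the inverted hypothesis after rewriting $\pcofacom{\annotated{c_1'}}{\mathpc'}=\pcofacom{\annotated{c_1}}{\mathpc}$ by (iii). For \textsc{Ideal\_Seq\_Skip}, inversion of \textsc{WL\_Seq} gives $\wlacom{\annotated{c_2}}{\PubVars_s}{\PubArrs_s}{(\pcofacom{\annotated{c_1}}{\mathpc})}{\PubVars_2}{\PubArrs_2}$ and $\wlacom{\annotated{c_1}}{\PubVars}{\PubArrs}{\mathpc}{\PubVars_s}{\PubArrs_s}$ with $\annotated{c_1}$ terminal; (ii) gives $(\PubVars,\PubArrs)\sqsubseteq(\PubVars_s,\PubArrs_s)$, and (i) then relocates the judgment for $\annotated{c_2}$ to start from $(\PubVars,\PubArrs)$, which is exactly the goal since the step keeps $\PubVars,\PubArrs$ and sets the pc to $\pcofacom{\annotated{c_1}}{\mathpc}$.

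The main obstacle is the loop-unfolding rule \textsc{Ideal\_While}, where $\acwhiles{\cexpr{be}}{\annotated{c}}{\ell}{\PubVars_s}{\PubArrs_s}$ steps, leaving $\mathpc,\PubVars,\PubArrs$ unchanged, to $\acifs{\cexpr{be}}{\acseq{\annotated{c}}{\acwhiles{\cexpr{be}}{\annotated{c}}{\ell}{\PubVars_s}{\PubArrs_s}}{\PubVars_s}{\PubArrs_s}}{\cskip}{\ell}$. Inverting \textsc{WL\_While} supplies $\branchfree{\annotated{c}}$, the chain $(\PubVars,\PubArrs)\sqsubseteq(\PubVars_s,\PubArrs_s)\sqsubseteq(\PubVars'',\PubArrs'')$, the condition bound $\labelbool{\PubVars}{\cexpr{be}}\sqsubseteq\ell$, and that $(\PubVars_s,\PubArrs_s)$ is a genuine fixpoint: $\wlacom{\annotated{c}}{\PubVars_s}{\PubArrs_s}{\mathpc\sqcup\ell}{\PubVars_s}{\PubArrs_s}$. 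The goal is then a \textsc{WL\_If} derivation for the unfolding: its $\cskip$ branch is closed by \textsc{WL\_Skip} and transitivity of $\sqsubseteq$; its \texttt{then} branch is a \textsc{WL\_Seq} (both components branch-free by a trivial syntactic closure of $\branchfree{\cdot}$), whose first component is the body — well-labeled from $(\PubVars,\PubArrs)$ to $(\PubVars_s,\PubArrs_s)$ after strengthening the initial labeling via (i) — and whose second component is the recursive loop, re-derived by \textsc{WL\_While} from the \emph{same} annotation $(\ell,\PubVars_s,\PubArrs_s)$. Here branch-freeness of $\annotated{c}$ makes $\pcofacom{\annotated{c}}{\mathpc\sqcup\ell}=\mathpc\sqcup\ell$, so the pc label that \textsc{WL\_Seq} threads into its second premise collapses correctly, and $(\mathpc\sqcup\ell)\sqcup\ell=\mathpc\sqcup\ell$ lets the body's fixpoint judgment be reused verbatim; the remaining condition-label obligation $\labelbool{\PubVars_s}{\cexpr{be}}\sqsubseteq\ell$ for the inner \textsc{WL\_While} (and its weakening $\labelbool{\PubVars}{\cexpr{be}}\sqsubseteq\ell$ for the outer \textsc{WL\_If}, via monotonicity of $\labelbool{\cdot}{\cexpr{be}}$ and $(\PubVars,\PubArrs)\sqsubseteq(\PubVars_s,\PubArrs_s)$) holds because $\ell$ was computed at the recorded fixpoint labeling (cf.\ \autoref{lem:static-tracking-well-labeled}). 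Assembling all of these exactly is fiddly, but uses no idea beyond (i)--(iii), and I expect it to be where essentially all of the proof effort goes.
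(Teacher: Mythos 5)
Your proposal is essentially the paper's proof: the paper argues by induction on the annotated command (your induction on the step derivation amounts to the same case analysis, with the IH used only under \textsc{Ideal\_Seq\_Step}/\textsc{Ideal\_Branch}), and the work in the loop and sequence cases is carried by exactly your fact (i) — well-labeledness is preserved when the initial labeling is made more precise — since the labeling in force when the loop is reached is more precise than its recorded fixpoint labeling. Your auxiliary facts (ii) and (iii) are the routine bookkeeping the paper's sketch leaves implicit.

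One caveat on the loop case: your discharge of the obligation $\labelbool{\PubVars_s}{\cexpr{be}} \sqsubseteq \ell$ for the re-derived inner \textsc{WL\_While} by appealing to ``$\ell$ was computed at the fixpoint labeling (cf.\ \autoref{lem:static-tracking-well-labeled})'' is not available here: this lemma assumes only well-labeledness, not that the command is an output of the IFC analysis, and during execution the residual commands are not analysis outputs anyway. That bound has to be supplied by the well-labeledness judgment itself, i.e.\ the condition-label premise of \textsc{WL\_While} must be read against the recorded fixpoint labeling $(\PubVars_s,\PubArrs_s)$ (from which the bound at the more precise outer labeling follows by monotonicity of $\labelbool{\cdot}{\cexpr{be}}$); with the premise taken literally at the initial labeling, as in \autoref{fig:well-labeled}, your derivation of the inner \textsc{WL\_While} does not go through from the inverted premises. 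Apart from repairing that one justification, the argument is the paper's.
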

\begin{proof}[Proof sketch]
  By induction on the annotated command.
  The loop case relies on the fact that well-labeledness is preserved when making the initial labeling more precise, as the labeling when reaching the loop is generally more precise than the loop's fixed-point labeling.
The sequence case is similar.
\end{proof}

\subsection{Key Theorems}

We have already shown that \FsFlexvSLH is correct w.r.t. the ideal semantics (\autoref{lem:bcc-fs}), so it remains to show the relative security of the ideal semantics.
One difference compared to \autoref{lem:faslh-ideal-rs} is that we compare the
ideal semantics on annotated commands to the nonspeculative semantics of
unannotated commands, so in \eqref{eq:fsfvslh-flowtrack-eq} we assume that the
annotated command was generated using the IFC analysis on the source program
using the same initial labeling used for the ideal semantics.

\begin{lemma}[$\idealmultiarrow{}{}$ ensures relative security]
  \label{lem:fs-fsvslh-ideal-rs}
\setcounter{equation}{0}
\begin{align}
 &  \flowtrack{\cexpr{c}}{\PubVars}{\PubArrs}{\BoolTrue} = (\annotated{\cexpr{c}}, \PubVars', \PubArrs') \wedge
  \ScalarState_1 \sim_{\PubVars} \ScalarState_2 \wedge
  \ArrayState_1 \sim_{\PubArrs} \ArrayState_2
 &\Rightarrow
 \label{eq:fsfvslh-flowtrack-eq}
\\
 &\seqstate{\cexpr{c}}{\ScalarState_1}{\ArrayState_1}
  \approx
  \seqstate{\cexpr{c}}{\ScalarState_2}{\ArrayState_2}
 %% &( \forall \Obss_1 \Obss_2.
 %%   \\
 %%   &\quad
 %%    \seqmultis{\seqstate{\cexpr{c}}{\ScalarState_1}{\ArrayState_1}}{\Obss_1}{\cdot}
 %%    \Rightarrow
 %%    \seqmultis{\seqstate{\cexpr{c}}{\ScalarState_2}{\ArrayState_2}}{\Obss_2}{\cdot}
 %%    \Rightarrow
 %%    \Obss_1 \lessgtr \Obss_2
 %%  )
 &\Rightarrow
\\
 &\fsidealstate{\annotated{\cexpr{c}}}{\ScalarState_1}{\ArrayState_1}{\BoolFalse}{\BoolTrue}{\PubVars}{\PubArrs}
 \approx_i
 \fsidealstate{\annotated{\cexpr{c}}}{\ScalarState_2}{\ArrayState_2}{\BoolFalse}{\BoolTrue}{\PubVars}{\PubArrs}
%%  &\idealmulti
%%   {\specstate{\cexpr{c}}{\ScalarState_1}{\ArrayState_1}{\BoolFalse}}
%%   {\Obss_1}{\Dirs}
%%   {\cdot}
%%  &\Rightarrow
%% \\
%%  &\idealmulti
%%   {\specstate{\cexpr{c}}{\ScalarState_2}{\ArrayState_2}{\BoolFalse}}
%%   {\Obss_2}{\Dirs}
%%   {\cdot}
%%  &\Rightarrow
%% \\
%%  &\Obss_1 = \Obss_2
\end{align}
\end{lemma}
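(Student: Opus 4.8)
The plan is to mirror the structure of the proof of \autoref{lem:faslh-ideal-rs}, adapting each ingredient to the annotated, dynamically-IFC-tracking ideal semantics of \autoref{sec:ideal-sem}. First I would unfold $\approx_i$ in the conclusion, exposing the two ideal executions from $\fsidealstate{\annotated{\cexpr{c}}}{\ScalarState_1}{\ArrayState_1}{\BoolFalse}{\BoolTrue}{\PubVars}{\PubArrs}$ and $\fsidealstate{\annotated{\cexpr{c}}}{\ScalarState_2}{\ArrayState_2}{\BoolFalse}{\BoolTrue}{\PubVars}{\PubArrs}$ together with their shared directives $\Dirs$ and their observations $\Obss_1$ and $\Obss_2$, reducing the goal to $\Obss_1 = \Obss_2$. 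Before the case analysis I would use the analysis hypothesis \eqref{eq:fsfvslh-flowtrack-eq} together with \autoref{lem:static-tracking-well-labeled} to obtain that $\annotated{\cexpr{c}}$ is well-labeled w.r.t.\ the initial labelings, and I would use \autoref{lem:fs-ideal-preserves-wl} repeatedly to carry well-labeledness along both runs as the key invariant relating the static annotations consulted by the ideal premises to the more precise dynamic labelings stored in the state.

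If $\Dirs$ contains only $\step$ directives, the misspeculation flag stays $\BoolFalse$, so none of the index- or value-masking side conditions of \autoref{fig:ideal-semantics-flowsensitivevslh} can fire, and the ideal execution of $\annotated{\cexpr{c}}$ stays in one-to-one correspondence with the sequential execution of $\cexpr{c}$ — this lockstep is exactly what the $\texttt{branch}$ annotations and the $\pcofacom{\cdot}{\cdot}$ bookkeeping were designed to preserve, since they add no steps. Then $\Obss_1$ and $\Obss_2$ are the observation traces of the two sequential runs, which are prefix-compatible by the $\approx$ premise; since both ideal runs consume the same $\Dirs$, the traces have equal length, hence $\Obss_1 = \Obss_2$.

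Otherwise, since the array-force rules require $b = \BoolTrue$, which is produced only by \textsc{Ideal\_If\_Force}, the first force in $\Dirs$ is a branch misprediction and $\Dirs = [\step;\ldots;\step]\cdot[\force]\cdot\Dirs'$. On the $\step$-prefix both runs behave sequentially as in the previous case, so they stay in lockstep and reach the same annotated conditional; iterating the \FsFlexvSLH{} counterpart of \autoref{lem:faslh-ideal-noninterference} (whose hypothesis is well-labeledness and which compares states w.r.t.\ the \emph{dynamic} labelings carried in the ideal state) keeps the scalar and array states public-equivalent w.r.t.\ those dynamic labelings. The $\force$ step emits $\branch{b'}$ with $b' = \eval{\cexpr{be}}{\rho}$, which is exactly the observation the sequential run would emit at that point, so prefix-compatibility of the sequential traces forces $\eval{\cexpr{be}}{\rho_1} = \eval{\cexpr{be}}{\rho_2}$; hence both runs enter the same wrapped command $\acbranch{\mathpc}{\annotated{\cexpr{c_{\neg b'}}}}$ with flag $\BoolTrue$, and one more application of the noninterference lemma gives public-equivalent states, now possibly w.r.t.\ a coarsened dynamic array labeling, since under misspeculation arrays may differ on secret cells, cf.\ \autoref{ex:fvslh-arrays-differ}. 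I would then invoke the \FsFlexvSLH{} counterpart of \autoref{lem:gilles-lemma} — the unwinding lemma for ideal misspeculated executions, whose hypotheses are well-labeledness and public-equivalence w.r.t.\ the dynamic labelings, and whose proof uses the noninterference lemma and \autoref{lem:fs-ideal-preserves-wl} — with the suffix $\Dirs'$, concluding that the remaining observations coincide; combining the prefix, force, and suffix phases gives $\Obss_1 = \Obss_2$.

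I expect the main obstacle to be the bookkeeping that keeps the two levels of labelings consistent: well-labeledness is precisely the relation certifying that the static annotations consulted in the ideal premises soundly over-approximate the more precise dynamic labelings in the state, and it has to be threaded through the $\step$-prefix and across the $\force$ step so that the unwinding lemma's hypotheses are met. Establishing genuine lockstep — not merely trace equivalence — between the sequential and ideal semantics through \textsc{Ideal\_Seq\_Skip}, \textsc{Ideal\_Branch} and the $\pcofacom{\cdot}{\cdot}$ machinery is the delicate part that makes the prefix argument and the identification of the force observation go through. The array case of the unwinding lemma is also harder than in \FlexiSLH{}, since arrays are no longer public-equivalent under misspeculation and the dynamic array labeling must absorb the attacker's out-of-bounds secret writes.
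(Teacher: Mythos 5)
Your proposal matches the paper's proof: it obtains well-labeledness of $\annotated{\cexpr{c}}$ from \autoref{lem:static-tracking-well-labeled}, threads it through both runs via \autoref{lem:fs-ideal-preserves-wl}, and then replays the argument of \autoref{lem:faslh-ideal-rs} (step-only case, then the $[\step;\ldots;\step]\cdot[\force]\cdot\Dirs'$ decomposition) using well-labeledness-based counterparts of \autoref{lem:faslh-ideal-noninterference} and \autoref{lem:gilles-lemma}, which is exactly the paper's route. The extra details you note (lockstep through the branch annotations and the weakened array equivalence under misspeculation) are consistent with the paper's development and do not change the approach.
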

\begin{proof}[Proof sketch]
  We apply \autoref{lem:static-tracking-well-labeled} to obtain well-labeledness of $\annotated{\cexpr{c}}$.
  The rest of the proof is analogous to \autoref{lem:faslh-ideal-rs}, since we
  can prove counterparts to \autoref{lem:faslh-ideal-noninterference} and
  \autoref{lem:gilles-lemma} using well-labeledness instead of typing, and since
  well-labeledness is preserved by the ideal semantics
  (\autoref{lem:fs-ideal-preserves-wl}).
\end{proof}

We compose Lemmas \ref{lem:bcc-fs} and \ref{lem:fs-fsvslh-ideal-rs} to prove the relative security of \FsFlexvSLH.
While \autoref{lem:fs-fsvslh-ideal-rs} requires the annotated command to be the result of the IFC analysis, this analysis always succeeds, so we obtain relative security \emph{for all programs}.

\newtheorem*{thm:fsfvslh-rs}{\autoref{thm:fvslh-rs}}
\begin{thm:fsfvslh-rs}[Relative security of \FsFlexvSLH for all programs]
\begin{align*}
 &\cvar{b} \notin \UsedVars{\cexpr{c}} \wedge
  \lookup{\ScalarState_1}{\cvar{b}} = 0 \wedge
  \lookup{\ScalarState_2}{\cvar{b}} = 0
 &\Rightarrow
\\
 &(\forall \cvar{a}. | \cvar{a} |_{\ArrayState_1} > 0) \wedge
  (\forall \cvar{a}. | \cvar{a} |_{\ArrayState_2} > 0)
 &\Rightarrow
\\
 &
  \ScalarState_1 \sim_{\PubVars} \ScalarState_2 \wedge
  \ArrayState_1 \sim_{\PubArrs} \ArrayState_2
 &
  \Rightarrow
\\
 &
 \seqstate{\cexpr{c}}{\ScalarState_1}{\ArrayState_1}
 \approx
 \seqstate{\cexpr{c}}{\ScalarState_2}{\ArrayState_2}
 &\Rightarrow
\\
 &\flowtrack{\cexpr{c}}{\PubVars}{\PubArrs}{\BoolTrue} = (\annotated{\cexpr{c}}, \PubArrs', \PubVars') &\Rightarrow \\
 &\specstate{\translFsFlexvSLH{\annotated{\cexpr{c}}}}{\ScalarState_1}{\ArrayState_1}{\BoolFalse}
  \approx_s
  \specstate{\translFsFlexvSLH{\annotated{\cexpr{c}}}}{\ScalarState_2}{\ArrayState_2}{\BoolFalse}
\end{align*}

\end{thm:fsfvslh-rs}

\section{Related Work}
\label{sec:related-work}

SLH was originally proposed and implemented in
LLVM~\cite{Carruth18} as a defense against Spectre v1 attacks.
\citet{PatrignaniG21} noticed that SLH is not strong enough to achieve one of
their relative security notions and proposed Strong SLH, which uses the
misspeculation flag to also mask all branch conditions and all addresses of
memory accesses.
\citet{ZhangBCSY23} later noticed that the inputs of all variable-time
instructions also have to be masked for security and implemented everything as
the Ultimate SLH program transformation in the x86 backend of LLVM.
They experimentally evaluated Ultimate SLH on the SPEC2017 benchmarks and
reported overheads of around 150\%, which is large, but still smaller than
adding fences.

Taking inspiration in prior static analysis work~\cite{CheangRSS19,
  GuarnieriKMRS20}, both \citet{PatrignaniG21} and \citet{ZhangBCSY23} propose
to use relative security to assess the security of their transformations, by
requiring that the transformed program does not leak speculatively more than
{\em the transformed program itself} leaks sequentially.
%\ch{Should double check
%  that this is the case for \citet{PatrignaniG21}, but it's difficult given the
%  mess they do.}
%
While this is a sensible way to define relative security in the absence of a
program transformation~\cite{CheangRSS19, GuarnieriKMRS20, DongolGPW24,
CauligiDMBS22},
%\ch{as discussed earlier, hopefully}\jb{We do mention this and make a forward reference in 3.2, so should be ok?}
%
when applied only to the transformed program, this deems secure a transformation
that introduces sequential leaks, instead of removing speculative ones.
Instead, we use a security definition that is more suitable for transformations
and compilers, which requires that any transformed program running with
speculation does not leak more than what the {\em source} program leaks sequentially.
Finally, our definition only looks at 4 executions (as opposed to 8
executions~\cite{PatrignaniG21}) and our proofs are fully mechanized in Rocq.
%
%\ifanon\else
%A similar relative security definition was independently discovered by
%Jonathan Baumann, who verified fence-based Spectre countermeasures in the Rocq
%prover using hypersimulations~\cite{Baumann23}.
%\fi

In a separate line of research, \SelvSLH was proposed as an efficient way to
protect cryptographic code against Spectre v1 in the Jasmin
language~\cite{AlmeidaBBBGLOPS17}.
\citet{ShivakumarBBCCGOSSY23} proved in a simplified setting that an automatic
\SelvSLH transformation (discussed in \autoref{sec:FvSLH}) achieves speculative
CT security.
Yet for the best efficiency, Jasmin programmers may have to reorganize their code
and manually insert a minimal number SLH protections, with a static analysis
just checking that the resulting code is secure~\cite{ShivakumarBGLOPST23}.
In more recent work, \citet{OlmosBBGL25} showed how speculative CT security can
be preserved by compilation in a simplified model of the Jasmin compiler, which
they formalized in Rocq.
These preservation proofs seem easily extensible to the stronger relative
security definition we use in this paper, yet providing security guarantees to
non-CCT code does not seem a goal for the Jasmin language, which is specifically
targeted at cryptographic implementations.

The simple and abstract speculative semantics we use in \autoref{sec:defs} is
taken from the paper of \citet{ShivakumarBBCCGOSSY23}, who credit prior work by
\citet{CauligiDGTSRB20} and \citet{BartheCGKLOPRS21} for the idea of a
``forwards'' semantics that takes attacker directions and does no rollbacks.
This style of speculative semantics seems more suitable for higher-level
languages, for which the concept of a ``speculation window'' that triggers
rollbacks would require exposing too many low-level details about the
compilation and target architecture, if at all possible to accommodate.
\citet{BartheCGKLOPRS21} also prove that in their setting, any speculative CT
attack against a semantics with rollbacks also exist in their ``forwards''
semantics with directions.

\section{Conclusion and Future Work}
\label{sec:conclusion}\label{sec:future-work}

In this paper we introduced FSLH, a flexible variant of SLH that brings together
the benefits of both Selective SLH and Ultimate SLH.
We provide formal proofs in the Rocq prover that FSLH satisfies a strong
relative security property, ensuring that the hardened program can only
leak speculatively as much as the original source program leaks sequentially.

% As a next step we want to move from the simple IFC type system of \autoref{sec:type-system}
% to a flow-sensitive IFC analysis that accepts arbitrary source programs.
% %
% We already implemented this in Rocq and property-based tested relative security
% for it using QuickChick\ifanon---these experiments are part of the auxiliary
%   materials of this submission\fi.
% %
% Formally proving this extension secure is, however, more challenging and left
% for future work.

On the practical side we would like to implement FSLH in LLVM and experimentally
evaluate the reduction in overhead with respect to Ultimate SLH.
This raises significant engineering challenges though, since it requires adding
a flow-sensitive IFC analysis to LLVM, keeping track of which program inputs are
secret and which ones not throughout the whole compiler chain, and
making sure that the defenses we add are not removed by subsequent compiler passes.
Finally, while Rocq proofs for the whole of LLVM would be way beyond what's
reasonable in terms of effort, we believe that property-based testing of
relative security could be a pragmatic compromise for validating end-to-end
security in practice.

Another interesting direction would be to test the existing implementations of
SLH in LLVM, which are much more complex than the simple iSLH and vSLH variants
of this paper, and for which security has not been investigated formally.

% Acknowledgments
\ifanon\else
% \begin{acks}
{\small
\paragraph{Acknowledgments}
We are grateful to Sven Argo, Santiago Arranz Olmos, Gilles Barthe, Lucie Lahaye, and
Zhiyuan Zhang for the insightful discussions.
We also thank the CSF'25 reviewers for their helpful feedback.
This work was in part supported by the DFG
% Deutsche Forschungsgemeinschaft (DFG\iffull, German Research Foundation\fi)
as part of the Excellence Strategy of the German Federal and State Governments
-- EXC 2092 CASA -- 390781972.
\ch{TODO: add more acknowledgements here, if needed}
}
% \end{acks}
\fi

%\ifanon
%\clearpage
%\fi

\ifappendix
% \onecolumn
\appendices

\section{Sequential Semantics for \SourceLang}
\label{sec:appendix-sequential-semantics}

 \autoref{fig:seq-semantics} shows the standard sequential semantics
for \SourceLang as presented
in \autoref{sec:defs-syntax-seq-semantics}. This figure can be derived
from \autoref{fig:spec-semantics} by erasing the highlighted
components.

\begin{figure}[!h]
  \centering \[
\inferrule*[left=Seq\_Asgn] {v = \eval{\cexpr{ae}}{\rho}}
  {\SeqEval{\casgn{\cvar{X}}{\cexpr{ae}}}{\cexpr{\rho}}{\cexpr{\mu}}{\cskip}
  {\subst{\cvar{X}}{v}{\rho}}
  {\cexpr{\mu}}{\bullet}} \]\[
\inferrule*[left=Seq\_Seq\_Step]
  {\SeqEval{\cexpr{c_1}}{\rho}{\mu}{\cexpr{c_1'}}{\rho'}{\cexpr{\mu'}}{o}}
  {\SeqEval{\cseq{\cexpr{c_1}}{\cexpr{c_2}}}{\rho}{\mu}{\cseq{\cexpr{c_1'}}{\cexpr{c_2}}}{\rho'}{\mu'}{o}}\]\[
\inferrule*[left=Seq\_Seq\_Skip]
  {\quad}
  {\SeqEval{\cseq{\cskip}{\cexpr{c}}}{\rho}{\mu}{\cexpr{c}}{\rho}{\mu}{\bullet}} \] \[ \infer[Seq\_If]
  {b' = \eval{\cexpr{be}}{\cexpr{\rho}}}
  {\SeqEval{\cifs{\cexpr{be}}{\cexpr{c_\BoolTrue}}{\cexpr{c_\BoolFalse}}}{\cexpr{\rho}}{\cexpr{\mu}}{\cexpr{c_{b'}}}{\cexpr{\rho}}{\mu}{\branch{b'}}}
  \]
  \[
  \infer[Seq\_While]
  {\cexpr{c_{while}} = \cwhiles{\cexpr{be}}{\cexpr{c}}}
  {\SeqEval{\cexpr{c_{while}}}{\rho}{\mu}{\cifs{\cexpr{be}}{\cseq{\cexpr{c}}{\cexpr{c_{while}}}}{\cskip}}{\rho}{\mu}{\bullet}}
  \]
  \[
  \inferrule*[left=Seq\_Read]
  {i = \eval{\cexpr{ie}}{\rho} \\
   v = \eval{\cvar{a}[i]}{\mu} \\
   i < | \cvar{a} |_{\mu}}
  {\SeqEval{\caread{\cvar{X}}{\cvar{a}}{\cexpr{ie}}}{\rho}{\mu}{\cskip}
   {\subst{\cvar{X}}{v}{\rho}}
   {\mu}{\OARead{\cvar{a}}{i}}}
  \]\[
  \inferrule*[left=Seq\_Write]
  {i = \eval{\cexpr{ie}}{\rho} \\
   v = \eval{\cexpr{ae}}{\rho} \\
   i < | \cvar{a} |_{\mu}}
  {\SeqEval{\cawrite{\cvar{a}}{\cexpr{ie}}{\cexpr{ae}}}{\rho}{\mu}{\cskip}{\rho}
   {\subst{\cvar{a}[i]}{v}{\mu}}
   {\OAWrite{\cvar{a}}{i}}}
  \]
  \caption{Sequential semantics}
  \label{fig:seq-semantics}
\end{figure}

\section{Full Theorem Statements for \FlexvSLH}
\label{sec:appendix-fvslh-theorems}

As we point out in \autoref{sec:FvSLH}, the statements and high-level
proofs for \FlexvSLH are very similar to their corresponding
address-based variants. This starts with the relation
between \FlexibleSLH and the other protection schemes.

\begin{theorem}[Connection between \FlexvSLH and \SelvSLH]
\label{thm:fvslh-svslh}
Given a set of public variables $\PubVars$,
for any \CCT program $c$
\[
\transls{\cexpr{c}}{\FlexvSLH} = \transls{\cexpr{c}}{\SelvSLH}
\]
\end{theorem}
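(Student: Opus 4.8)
The plan is to mirror the argument for \autoref{thm:faslh-aslh}. I would proceed by structural induction on $\cexpr{c}$, using the observation that \FlexvSLH and \SelvSLH are two instantiations of the \emph{same} value-SLH master recipe, which --- as noted in \autoref{sec:FvSLH} --- differs from the index-SLH recipe of \autoref{fig:aslh-template} only in the array-read clause (\autoref{fig:vslh-template}). Hence for \cskip, assignments, sequences, conditionals, loops, and array writes the two translations are assembled from an identical syntactic skeleton, and it suffices to check that the parameterizing helpers --- the branch-condition mask $\llbracket\cdot\rrbracket_{\mathbb{B}}$, the index masks $\llbracket\cdot\rrbracket_{\mathit{rd}}$ and $\llbracket\cdot\rrbracket_{\mathit{wr}}$, and the value check $\ValueCheck{\cvar{X}}{\cexpr{i}}$ --- coincide on the (sub)expressions that actually occur in a CCT program, with the induction hypothesis discharging the recursive subcommands of conditionals, loops, and sequences.

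The first step is to extract from $\wtct{\cexpr{c}}$ the fact that every branch condition and every array index occurring syntactically in $\cexpr{c}$ is public, \IE $\labelbool{\PubVars}{\cexpr{be}} = \LabelPublic$ and $\labelarith{\PubVars}{\cexpr{ie}} = \LabelPublic$; this is a routine inversion on the CCT typing derivation, which also returns CCT-well-typedness of the subcommands of conditionals and loops so that the induction hypothesis applies. Granting this, the helper comparisons are immediate by inspecting \autoref{fig:svslh-instance} against \autoref{fig:fvslh-instance}: the branch-condition mask is the identity on public conditions in both variants; the \FlexvSLH value check $\lookup{\PubVars}{\cvar{X}} \wedge \ArithLabel{\cexpr{i}}$ collapses to the \SelvSLH check $\lookup{\PubVars}{\cvar{X}}$ since $\ArithLabel{\cexpr{i}} = \LabelPublic = \BoolTrue$; and in $\llbracket\cexpr{i}\rrbracket_{\mathit{rd}}$ and $\llbracket\cexpr{i}\rrbracket_{\mathit{wr}}$ the \FlexvSLH guard $\lnot\labelarith{\PubVars}{\cexpr{i}}$ is false, so both variants leave the index as $\cexpr{i}$, matching \SelvSLH. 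Substituting these equalities into the common recipe closes every case of the induction.

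I do not anticipate a real obstacle: as for \autoref{thm:faslh-aslh}, the statement is essentially ``by inspection'' of the two instantiation tables once the CCT discipline has pinned every relevant label to $\LabelPublic$. The only points demanding a little care are bookkeeping ones --- comparing the array-read clause of the value-SLH recipe in \emph{both} the branch where the value check holds (which masks the loaded value and ignores the index) and the branch where it fails (which falls back on $\llbracket\cdot\rrbracket_{\mathit{rd}}$), and relying on the CCT type system being a strengthening of the IFC type system, so that the fixed labeling $\PubVars$ driving both translations is literally the same. In Rocq I expect this to amount to a short induction on $\cexpr{c}$ together with the auxiliary ``all conditions and indices are public'' lemma.
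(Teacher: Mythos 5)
Your plan is correct and matches the paper's argument: the paper proves this result simply ``by inspection'' of \autoref{fig:svslh-instance} and \autoref{fig:fvslh-instance}, which is exactly your observation that, once the CCT discipline forces every branch condition and array index to be public, the \FlexvSLH helpers (value check, read/write index masks, branch mask) collapse to the \SelvSLH ones within the common \vSLH recipe. Your structural induction and the auxiliary ``all conditions and indices are public'' lemma are just the fully spelled-out version of that inspection, so no gap.
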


\begin{proof}
Immediate from \autoref{fig:svslh-instance}
and \autoref{fig:fvslh-instance}.
\end{proof}

\begin{theorem}[Connection between \FlexvSLH and \USLH]
\label{thm:fvslh-uslh}
If all variables of a program (both scalars and arrays) are considered
secret, \IE $(\lambda \_. \LabelSecret)$, then for any program $c$
\[
\translAux{\cexpr{c}}{\FlexvSLH}{(\lambda \_. \LabelSecret)}
= \translAux{\cexpr{c}}{\USLH}{}
\]
\end{theorem}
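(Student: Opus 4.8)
The plan is to follow the same route as for \autoref{thm:faslh-uslh}.
Both $\translAux{\cexpr{c}}{\FlexvSLH}{(\lambda \_. \LabelSecret)}$ and
$\translAux{\cexpr{c}}{\USLH}{}$ are instances of the value-SLH master recipe of
\autoref{fig:vslh-template}, which shares the recursion of
\autoref{fig:aslh-template} on every command constructor except array reads, so
the two translations can differ only through the four parameterizing helpers
$\llbracket\cdot\rrbracket_{\mathbb{B}}$, $\llbracket\cdot\rrbracket_{\mathit{rd}}$,
$\llbracket\cdot\rrbracket_{\mathit{wr}}$ and $\ValueCheck{\cdot}{\cdot}$.
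First I would reduce the goal, by a routine structural induction on $\cexpr{c}$, to
the claim that these four helpers coincide once both labelings are instantiated to
$(\lambda \_. \LabelSecret)$: the cases $\cskip$ and $\casgn{\cvar{X}}{\cexpr{e}}$ are
immediate, the cases $\cseq{\cexpr{c_1}}{\cexpr{c_2}}$,
$\cifs{\cexpr{be}}{\cexpr{c_1}}{\cexpr{c_2}}$ and $\cwhiles{\cexpr{be}}{\cexpr{c}}$
follow from the induction hypotheses on the sub-commands together with agreement of
$\llbracket\cdot\rrbracket_{\mathbb{B}}$, and the two array-access cases follow from
agreement of $\ValueCheck{\cdot}{\cdot}$, $\llbracket\cdot\rrbracket_{\mathit{rd}}$
and $\llbracket\cdot\rrbracket_{\mathit{wr}}$.

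The substance of the argument is then to compare the helpers of
\autoref{fig:fvslh-instance} with those of \UltimateSLH, i.e.
$\llbracket\cexpr{be}\rrbracket_{\mathbb{B}} = \cvar{b==0 \&\& } \cexpr{be}$,
$\ValueCheck{\cdot}{\cdot} = \BoolFalse$, and
$\llbracket\cexpr{i}\rrbracket_{\mathit{rd}} = \llbracket\cexpr{i}\rrbracket_{\mathit{wr}}
= \ccond{\cvar{b==1}}{\cvar{0}}{\cexpr{i}}$.
The single fact that settles all four comparisons is that, under the
constant-secret labeling, every expression label consulted by \FlexvSLH equals
$\LabelSecret$, that is $\labelbool{(\lambda \_. \LabelSecret)}{\cexpr{be}} = \LabelSecret$
and $\labelarith{(\lambda \_. \LabelSecret)}{\cexpr{e}} = \LabelSecret$.
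Granting this: the \FlexvSLH branch helper, which yields $\cvar{b==0 \&\& } \cexpr{be}$
when $\lnot\labelbool{\PubVars}{\cexpr{be}}$ and $\cexpr{be}$ otherwise, always
takes its first alternative, matching \UltimateSLH; the value check
$\ValueCheck{\cvar{X}}{\cexpr{i}} = \lookup{\PubVars}{\cvar{X}} \wedge \ArithLabel{\cexpr{i}}$
evaluates to $\LabelSecret \wedge \LabelSecret = \BoolFalse$ (recall $\LabelSecret = \BoolFalse$),
so \FlexvSLH never masks the loaded value and falls back on index masking exactly as
\UltimateSLH does; and both index helpers, which mask precisely when
$\lnot\labelarith{\PubVars}{\cexpr{i}}$, always take the masked alternative
$\ccond{\cvar{b==1}}{\cvar{0}}{\cexpr{i}}$. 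Hence all four helpers agree and the
induction goes through, giving
$\translAux{\cexpr{c}}{\FlexvSLH}{(\lambda \_. \LabelSecret)} = \translAux{\cexpr{c}}{\USLH}{}$.

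The only step I expect to require any care --- and the nominal obstacle --- is
discharging the label-computation fact above uniformly, in particular for
sub-expressions with no free variables (a literal branch condition, say), so that
the conditional guards of \FlexvSLH resolve the same way as the unconditional masking
of \UltimateSLH on every syntactic form; this is a matter of unfolding the definitions
of $\labelbool{\cdot}{\cdot}$ and $\labelarith{\cdot}{\cdot}$ and is routine. Apart
from that, this is --- just like \autoref{thm:faslh-uslh} --- essentially a proof by
inspection of \autoref{fig:fvslh-instance} against the \UltimateSLH instance, entirely
analogous to the \FlexvSLH/\SelvSLH connection of \autoref{thm:fvslh-svslh}.
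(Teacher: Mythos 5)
Your argument is essentially the paper's own: the paper's entire proof is that the equality is ``immediate'' from the two instantiation figures, noting that \FlexvSLH{} degenerates to pure index masking (i.e., to \FlexiSLH) because $\ValueCheck{\_}{\_}$ evaluates to $\BoolFalse$ under the labeling $(\lambda \_.\, \LabelSecret)$ --- exactly the observation at the heart of your helper-by-helper comparison. The structural induction you spell out, together with the fact that every expression consulted by the transformation is labeled secret under the all-secret labeling (the same fact the paper's ``by inspection'' silently relies on, including for your flagged constant-expression case), is just the explicit form of that inspection, so the two proofs coincide.
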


\begin{proof}
Immediate from \autoref{fig:fvslh-instance}
and \autoref{fig:uslh-instance}, noting that \FlexvSLH is a strict
generalization of \FlexiSLH, to which it reduces when $\ValueCheck{\_}{\_}
= \BoolFalse$.
\end{proof}

The assumption of public-equivalence of array states that holds for \iSLH
no longer applies.
This change requires an adapted version
of \autoref{lem:faslh-ideal-noninterference}:

\begin{lemma}[Noninterference of $\idealsteparrow{}{}$ with equal observations]
  \label{lem:fvslh-ideal-noninterference}
  \begin{align*}
    &\wtifc{\mathpc}{\cexpr{c}}\wedge \ScalarState_1 \sim_{\PubVars} \ScalarState_2
    % &\Rightarrow\\
    % &
    \wedge
    (b = \BoolFalse \Rightarrow \ArrayState_1 \sim_{\PubArrs} \ArrayState_2) &\Rightarrow\\
    &\IdealEval{\cexpr{c}}{\ScalarState_1}{\ArrayState_1}{b}{\cexpr{c_1}}{\ScalarState_1'}{\ArrayState_1'}{b_1}{o}{d}&\Rightarrow\\
    &\IdealEval{\cexpr{c}}{\ScalarState_2}{\ArrayState_2}{b}{\cexpr{c_2}}{\ScalarState_2'}{\ArrayState_2'}{b_2}{o}{d}&\Rightarrow\\
    &\cexpr{c_1} = \cexpr{c_2} \wedge b_1 = b_2 \wedge \ScalarState_1' \sim_{\PubVars} \ScalarState_2' \wedge (b_1 = \BoolFalse \Rightarrow \ArrayState_1' \sim_{\PubArrs} \ArrayState_2') &
  \end{align*}
\end{lemma}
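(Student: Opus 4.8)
The plan is to mirror the proof of \autoref{lem:faslh-ideal-noninterference}: proceed by induction on the derivation of the first ideal step, performing inversion on the second step at each node. Since the command $\cexpr{c}$ and the directive $d$ are shared by the two runs, the inversion is essentially forced — the directive fixes whether a ``force'' rule or a ``step'' rule fires, and the shape of $\cexpr{c}$ fixes the rest. The non-observation-producing rules (\textsc{Ideal\_Asgn}, \textsc{Ideal\_While}, \textsc{Ideal\_Seq\_Skip}) make both runs perform the identical transition; the only nontrivial point is assignment, where \textsc{WT\_Asgn} gives $\mathpc \sqcup \labelarith{\PubVars}{\cexpr{ae}} \sqsubseteq \PubVars(\cvar{X})$, so if $\PubVars(\cvar{X}) = \LabelPublic$ then $\labelarith{\PubVars}{\cexpr{ae}} = \LabelPublic$ and the two evaluations of $\cexpr{ae}$ agree by $\ScalarState_1 \sim_{\PubVars} \ScalarState_2$, while if $\PubVars(\cvar{X}) = \LabelSecret$ the update cannot break public-equivalence; arrays and the flag are untouched, so the (guarded) array clause of the conclusion is exactly the hypothesis. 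The sequence-step case is closed by the induction hypothesis applied to the head command — whose well-typedness comes from inverting \textsc{WT\_Seq} — noting that the sub-step carries the same $o$ and $d$.

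For the two conditional rules, the observation is $\branch{b'_k}$ with $b'_k = (\ell \vee \neg b) \wedge \eval{\cexpr{be}}{\ScalarState_k}$; equality of observations yields $b'_1 = b'_2$, so both runs enter the same branch and reduce to the same command, the flag is updated identically ($b$ in \textsc{Ideal\_If}, $\BoolTrue$ in \textsc{Ideal\_If\_Force}), and memories are unchanged. Hence all four conjuncts hold: in the \textsc{Ideal\_If\_Force} case the array clause is vacuous since $b_1 = \BoolTrue$, and in the \textsc{Ideal\_If} case it coincides with the hypothesis since $b_1 = b$.

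The substance is in the array-access rules, and the key point is that the emitted observation $\OARead{\cvar{a}}{i}$ (resp. $\OAWrite{\cvar{a}}{i}$) exposes the possibly-masked index, so equal observations give $i_1 = i_2 =: i$, while the shared directive distinguishes the in-bounds ($\step$) rule from the out-of-bounds ($\DLoad$/$\DStore$) rule. For \textsc{Ideal\_Read}: if $\PubVars(\cvar{X}) = \LabelSecret$ the update of $\cvar{X}$ is harmless; if $\PubVars(\cvar{X}) = \LabelPublic$, then \textsc{WT\_ARead} forces $\ell_{\cexpr{i}} = \PubArrs(\cvar{a}) = \LabelPublic$, so when $b = \BoolTrue$ the value masks to $v_1 = v_2 = 0$, and when $b = \BoolFalse$ we have $\ArrayState_1 \sim_{\PubArrs} \ArrayState_2$ with $\cvar{a}$ public and the common in-bounds index $i$, hence $v_1 = \eval{\cvar{a}[i]}{\ArrayState_1} = \eval{\cvar{a}[i]}{\ArrayState_2} = v_2$; either way $\ScalarState_1' \sim_{\PubVars} \ScalarState_2'$, arrays are unchanged, and $b_1 = b_2 = b$, so the array clause reduces to the hypothesis. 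For \textsc{Ideal\_Read\_Force} the flag is already $\BoolTrue$, so the array clause of the goal is vacuous and the value is either $0$ (when $\cvar{X}$ is public) or read into a secret variable, so scalar public-equivalence is preserved. For \textsc{Ideal\_Write} the flag and scalar states are unchanged, so only the array clause matters, and only when $b = \BoolFalse$: if $\PubArrs(\cvar{a}) = \LabelSecret$ the write cannot break $\sim_{\PubArrs}$, and if $\PubArrs(\cvar{a}) = \LabelPublic$, \textsc{WT\_AWrite} forces $\ell_{\cexpr{i}} = \labelarith{\PubVars}{\cexpr{ae}} = \LabelPublic$, so the stored values agree, the index is unmasked and equal, and the updated memories remain public-equivalent. \textsc{Ideal\_Write\_Force} again has $b = \BoolTrue$, making the array clause vacuous, and everything else is immediate.

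The main obstacle is the bookkeeping in the array-read case: one must juggle the misspeculation flag $b$, the level $\PubVars(\cvar{X})$ of the target, the index level $\ell_{\cexpr{i}}$, and the array level $\PubArrs(\cvar{a})$, and verify that in every combination either the value/index masking of the ideal semantics, or the constraints enforced by the IFC type system, or (when $b = \BoolTrue$) the vacuity of the array clause, suffices. Crucially — and unlike the index-SLH lemma — one must \emph{not} assume array public-equivalence under misspeculation, which is precisely why (as \autoref{ex:fvslh-arrays-differ} shows) both the hypothesis and the conclusion must be weakened to the guarded form $b = \BoolFalse \Rightarrow \ArrayState_1 \sim_{\PubArrs} \ArrayState_2$. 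A secondary care point is that secret arrays may have different sizes in $\ArrayState_1$ and $\ArrayState_2$, but this never causes trouble: the shared directive and the equal observations jointly determine which rule fires and at which index.
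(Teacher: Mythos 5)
Your proposal is correct and matches the paper's own argument, which is exactly an induction on the first ideal step with inversion on the second (the paper only gives this as a one-line sketch). Your detailed case analysis — using the emitted observation to equate indices/branch outcomes, the IFC typing constraints to force public labels where needed, and the vacuity of the guarded array clause when $b_1 = \BoolTrue$ — is the intended elaboration of that sketch.
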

\begin{proof}[Proof sketch]
  By induction on the first evaluation judgment and inversion on the second one.
\end{proof}

This lemma is used to prove the adaptation
of \autoref{lem:gilles-lemma} to the \vSLH setting, whose statement
also removes the assumption on the public-equivalence of arrays. The same
lemma is used to prove the main theorem:

\newtheorem*{thm:fvslh-rs}{\autoref{thm:fvslh-rs}}
\begin{thm:fvslh-rs}[\FlexvSLH enforces relative security]
\begin{align*}
 &\cvar{b} \notin \UsedVars{\cexpr{c}} \wedge
  \lookup{\ScalarState_1}{\cvar{b}} = 0 \wedge
  \lookup{\ScalarState_2}{\cvar{b}} = 0
 &\Rightarrow
\\
 %% &\colU{
 &(\forall \cvar{a}. | \cvar{a} |_{\ArrayState_1} > 0) \wedge
  (\forall \cvar{a}. | \cvar{a} |_{\ArrayState_2} > 0)
  %% }
 %% &\colU{
 &\Rightarrow
  %% }
\\
 &\colSel{
  \wtifc{\LabelPublic}{\cexpr{c}} \wedge
  \ScalarState_1 \sim_{\PubVars} \ScalarState_2 \wedge
  \ArrayState_1 \sim_{\PubArrs} \ArrayState_2
  }
 &\colSel{
  \Rightarrow
  }
\\
 &\colU{
 \seqstate{\cexpr{c}}{\ScalarState_1}{\ArrayState_1}
 \approx
 \seqstate{\cexpr{c}}{\ScalarState_2}{\ArrayState_2}
 }
 &\colU{\Rightarrow}
\\
 &\specstate{\transls{\cexpr{c}}{\FlexvSLH}}{\ScalarState_1}{\ArrayState_1}{\BoolFalse}
  \approx_s
  \specstate{\transls{\cexpr{c}}{\FlexvSLH}}{\ScalarState_2}{\ArrayState_2}{\BoolFalse}
\end{align*}
\end{thm:fvslh-rs}

\begin{proof}
Same structure as the proof of \autoref{thm:faslh-rs}.
%% The statement of \autoref{lem:gilles-lemma} is slightly adapted in this setting by
%% removing the requirement for public arrays to be equivalent.
\end{proof}

\newtheorem*{thm:svslh-sct}{\autoref{thm:svslh-sct}}
\begin{thm:svslh-sct}[\SelvSLH enforces SCT security \cite{ShivakumarBBCCGOSSY23}]
\begin{align*}
  &\cvar{b} \notin \UsedVars{\cexpr{c}} \wedge
   \lookup{\ScalarState_1}{\cvar{b}} = 0 \wedge
   \lookup{\ScalarState_2}{\cvar{b}} = 0
  &\Rightarrow
 \\
  &(\forall \cvar{a}. | \cvar{a} |_{\ArrayState_1} > 0) \wedge
   (\forall \cvar{a}. | \cvar{a} |_{\ArrayState_2} > 0)
  &\Rightarrow
 \\
  &
   \wtct{\cexpr{c}} \wedge
   \ScalarState_1 \sim_{\PubVars} \ScalarState_2 \wedge
   \ArrayState_1 \sim_{\PubArrs} \ArrayState_2
  &
   \Rightarrow
 \\
  &\specstate{\transls{\cexpr{c}}{\SelvSLH}}{\ScalarState_1}{\ArrayState_1}{\BoolFalse}
   \approx_s
   \specstate{\transls{\cexpr{c}}{\SelvSLH}}{\ScalarState_2}{\ArrayState_2}{\BoolFalse}
 \end{align*}
\end{thm:svslh-sct}

\begin{proof}
Simple corollary of \autoref{thm:fvslh-svslh}
and \autoref{thm:fvslh-rs}.
\end{proof}

\setcounter{theorem}{0}
\renewcommand{\thetheorem}{\thesection.\arabic{theorem}}

Because \USLH is a special case of \FlexvSLH, we can also
use \autoref{thm:fvslh-uslh} and \autoref{thm:fvslh-rs} to obtain an
alternative proof of \autoref{thm:uslh-rs}.

\section{Full definitions for \FsFlexvSLH}
\label{sec:appendix-fsflexvslh-full}

\subsection{Translation Function}
\label{sec:appendix-fsflexvslh-transl}

Figure \autoref{fig:trans-fsflexvslh} shows the full \FsFlexvSLH translation function.
\begin{figure*}
  \begin{minipage}{0.7\textwidth}
  \begin{align*}
    \translFsFlexvSLH{\cskip} &\doteq \cskip \\
    \translFsFlexvSLH{\casgn{\cvar{x}}{\cvar{ae}}} &\doteq \casgn{\cvar{x}}{\cvar{ae}}\\
    \translFsFlexvSLH{\acseq{\annotated{\cexpr{c_1}}}{\annotated{\cexpr{c_2}}}{\PubVars}{\PubArrs}} &\doteq \cseq{\translFsFlexvSLH{\annotated{\cexpr{c_1}}}}{\translFsFlexvSLH{\annotated{\cexpr{c_2}}}}\\
    \translFsFlexvSLH{\acifs{\cexpr{be}}{\annotated{\cexpr{c_\BoolTrue}}}{\annotated{\cexpr{c_\BoolFalse}}}{\ell}} &\doteq \texttt{if }\cexpr{\llbracket be \rrbracket_\mathbb{B}^\ell}\\
                                                                                                                   &\phantom{\doteq}\hspace{0.5em} \texttt{then }\cseq{\casgn{\cvar{b}}{\ccond{\cexpr{\llbracket be \rrbracket_\mathbb{B}^\ell}}{\cvar{b}}{1}}}{\translFsFlexvSLH{\cexpr{c_\BoolTrue}}}\\
                                                                                                                   &\phantom{\doteq}\hspace{0.5em}\texttt{else }\cseq{\casgn{\cvar{b}}{\ccond{\cexpr{\llbracket be \rrbracket_\mathbb{B}^\ell}}{1}{\cvar{b}}}}{\translFsFlexvSLH{\cexpr{c_\BoolFalse}}}\\
 %&\phantom{\doteq}\hspace{0.5em} \text{where } \cexpr{\llbracket be \rrbracket_\mathbb{B}^\ell} = \left\{
    %\begin{array}{ll}
      %\cexpr{be} & \text{if } \ell = \BoolTrue \\ \cvar{b} = 0 \land \cexpr{be} & \text{otherwise}
  %\end{array}
  %\right. \\
\translFsFlexvSLH{\acwhiles{\cexpr{be}}{\annotated{\cexpr{c}}}{\ell}{\PubVars}{\PubArrs}} &\doteq \texttt{while }\cexpr{\llbracket be \rrbracket_\mathbb{B}^\ell}\\
                                                                                          &\phantom{\doteq}\hspace{1.5em}\cseq{\casgn{\cvar{b}}{\ccond{\cexpr{\llbracket be \rrbracket_\mathbb{B}^\ell}}{\cvar{b}}{1}}}{\translFsFlexvSLH{\annotated{\cexpr{c}}}}\\
                                                                                              &\phantom{\doteq}\hspace{0.5em}\casgn{\cvar{b}}{\ccond{\cexpr{\llbracket be \rrbracket_\mathbb{B}^\ell}}{1}{\cvar{b}}}\\
    \translFsFlexvSLH{\acaread{\cvar{X}}{\cvar{a}}{\cexpr{i}}{\ell_\cvar{X}}{\ell_i}} &\doteq 
  \left\{
  \begin{matrix*}[l]
  {\cseq{\caread{\cvar{X}}{\cvar{a}}{\cexpr{i}}}{\casgn{\cvar{X}}{\ccond{\cvar{b==1}}{\cvar{0}}{\cvar{X}}}}} &
  {\mathit{if} ~ \ell_\cvar{X} \land \ell_i}
  \\
  {\caread{\cvar{X}}{\cvar{a}}{\llbracket \cexpr{i} \rrbracket_{\mathit{rd}}^{\ell_\cvar{X}, \ell_i}}} &
  {\mathit{otherwise}}
  \end{matrix*}
  \right.\\
    \translFsFlexvSLH{\acawrite{\cvar{a}}{\cexpr{i}}{\cexpr{e}}{\ell_i}} &\doteq \cawrite{\cvar{a}}{\llbracket \cexpr{i} \rrbracket_{\mathit{wr}}^{\ell_i}}{\cexpr{e}}
  \end{align*}
\end{minipage}
\begin{minipage}{0.3\textwidth}
  \begin{align*}
    \llbracket \cexpr{be} \rrbracket_\mathbb{B}^\BoolTrue &\doteq \cexpr{be}\\
    \llbracket \cexpr{be} \rrbracket_\mathbb{B}^\BoolFalse &\doteq \cvar{b} = 0 \land \cexpr{be}\\
    \\
    \llbracket \cexpr{i} \rrbracket_{\mathit{rd}}^{\BoolFalse, \BoolTrue} &\doteq \cexpr{i}\\
    \llbracket \cexpr{i} \rrbracket_{\mathit{rd}}^{\cdot,\cdot} &\doteq \ccond{\cvar{b} = 1}{0}{\cexpr{i}} \quad\text{otherwise}\\
    \\
    \llbracket \cexpr{i} \rrbracket_{\mathit{wr}}^\BoolTrue &\doteq \cexpr{i}\\
    \llbracket \cexpr{i} \rrbracket_{\mathit{wr}}^\BoolFalse &\doteq \ccond{\cvar{b} = 1}{0}{\cexpr{i}}
  \end{align*}
\end{minipage}
  \caption{Translation function for \FsFlexvSLH}
  \label{fig:trans-fsflexvslh}
\end{figure*}

\subsection{Ideal Semantics}
\label{sec:appendix-fsflexvslh-ideal-semantics}
\begin{figure*}
  \centering
  \[
    \inferrule*[left=Ideal\_If]
    {\highlight{\cancel{\labelbool{\PubVars}{\cexpr{be}} = \ell}} \\
   b' = (\ell \vee \neg b) \mathrel{\wedge} \eval{\cexpr{be}}{\ScalarState}
  }
  {
    \idealstep{\fsidealstate{\highlight{\acifs{\cexpr{be}}{\annotated{\cexpr{c_\BoolTrue}}}{\annotated{\cexpr{c_\BoolFalse}}}{\ell}}}{\rho}{\mu}{b}{\highlight{\mathpc}}{\highlight\PubVars}{\highlight{\PubArrs}}}{\branch{b'}}{\step}{\fsidealstate{\highlight{\acbranch{\mathpc}{\annotated{\cexpr{c_{b'}}}}}}{\rho}{\mu}{b}{\highlight{\mathpc \sqcup \ell}}{\highlight{\PubVars}}{\highlight{\PubArrs}}}
}
  \]
  \[
  \inferrule*[left=Ideal\_If\_Force]
  {\highlight{\cancel{\BoolLabel{\cexpr{be}} = \ell}} \\
   b' = (\ell \vee \neg b) \mathrel{\wedge} \eval{\cexpr{be}}{\ScalarState}
  }
  {
    \idealstep{\fsidealstate{\highlight{\acifs{\cexpr{be}}{\annotated{\cexpr{c_\BoolTrue}}}{\annotated{\cexpr{c_\BoolFalse}}}{\ell}}}{\rho}{\mu}{b}{\highlight{\mathpc}}{\highlight{\PubVars}}{\highlight{\PubArrs}}}{\branch{b'}}{\force}{\fsidealstate{\highlight{\acbranch{\mathpc}{\annotated{\cexpr{c_{\neg b'}}}}}}{\rho}{\mu}{\BoolTrue}{\highlight{\mathpc \sqcup \ell}}{\highlight{\PubVars}}{\highlight{\PubArrs}}}
    }
  \]
  \[
  \inferrule*[left=Ideal\_Read]
  {\highlight{\cancel{\labelarith{\PubVars}{\cexpr{ie}} = \ell_{\cexpr{i}}} }\\
   i = \left\{
   {\begin{matrix*}[l]
    0 & \mathit{if} ~ \lnot\ell_{\cexpr{i}} \mathrel{\wedge} b \\
    \eval{\cexpr{ie}}{\rho} & \mathit{otherwise}
    \end{matrix*}
   }
   \right.
   \\
   v =
   \left\{
   {\begin{matrix*}[l]
       0 & \mathit{if} ~ \highlight{\ell_\cvar{X}} \mathrel{\wedge} \ell_{\cexpr{i}} \mathrel{\wedge} b \\
    \colAll{\eval{\cvar{a}[i]}{\mu}} & \mathit{otherwise}
    \end{matrix*}
   }
   \right.
    \\
   %% \highlight{v = \lookup{\PubVars}{\cvar{X}} \wedge \lnot\ell_{\cexpr{i}} \wedge b ~ ? ~ 0 ~ : ~ \eval{\cvar{a}[i]}{\mu}} \\
   i < | \cvar{a} |_{\mu}
 }{
   \idealstep{\fsidealstate{\highlight{\acaread{\cvar{X}}{\cvar{a}}{\cexpr{ie}}{\ell_\cvar{X}}{\ell_\cexpr{i}}}}{\rho}{\mu}{b}{\highlight{\mathpc}}{\highlight{\PubVars}}{\highlight{\PubArrs}}}{\OARead{\cvar{a}}{i}}{\step}{\fsidealstate{\cskip}{\subst{\cvar{X}}{v}{\rho}}{\mu}{b}{\highlight{\mathpc}}{\highlight{\subst{\cvar{X}}{\ell_\cvar{X}}{\PubVars}}}{\highlight{\PubArrs}}}
 }
  \]
  \[
  \inferrule*[left=Ideal\_Read\_Force]
  {\highlight{\cancel{\labelarith{\PubVars}{\cexpr{ie}}}} \\
   i = \eval{\cexpr{ie}}{\rho} \\
   %% \highlight{v = \lnot\lookup{\PubVars}{\cvar{X}} ~ ? ~ 0 ~ : ~ \eval{\cvar{b}[j]}{\mu}} \\\\
   v =
   \left\{
   {\begin{matrix*}[l]
       0 & \mathit{if} ~ \highlight{\ell_\cvar{X}} \\
    \colAll{\eval{\cvar{b}[j]}{\mu}} & \mathit{otherwise}
    \end{matrix*}
   }
   \right.
    \\
   i \ge | \cvar{a} |_{\mu} \\
   j < | \cvar{b} |_{\mu} 
 }{
   \idealstep{\fsidealstate{\highlight{\acaread{\cvar{X}}{\cvar{a}}{\cexpr{ie}}{\ell_\cvar{X}}{\BoolTrue}}}{\rho}{\mu}{\BoolTrue}{\highlight{\mathpc}}{\highlight{\PubVars}}{\highlight{\PubArrs}}}{\OARead{\cvar{a}}{i}}{\DLoad{\cvar{b}}{j}}{\fsidealstate{\cskip}{\subst{\cvar{X}}{v}{\rho}}{\mu}{\BoolTrue}{\highlight{\mathpc}}{\highlight{\subst{\cvar{X}}{\ell_\cvar{X}}{\PubVars}}}{\highlight{\PubArrs}}}
 }
  \]
  \[
  \inferrule*[left=Ideal\_Write]
  {
   i = \left\{
   {\begin{matrix*}[l]
    0 & \mathit{if} ~ \lnot\ell_{\cexpr{i}} \mathrel{\wedge} b \\
    \eval{\cexpr{ie}}{\rho} & \mathit{otherwise}
    \end{matrix*}
   }
   \right. \\
   %% i = \highlight{\ell_{\cexpr{i}} \wedge b} ~ ? ~ 0 ~ : ~ \eval{\cexpr{ie}}{\rho} \\\\
   \highlight{\cancel{\labelarith{\PubVars}{\cexpr{ie}} = \ell_{\cexpr{i}}}} \\
   v = \eval{\cexpr{ae}}{\rho} \\ % TODO: v -> n for consistency/ease?
   i < | \cvar{a} |_{\mu}
  }
  {
    \idealstep{\fsidealstate{\highlight{\acawrite{\cvar{a}}{\cexpr{ie}}{\cexpr{ae}}{\ell_\cexpr{ie}}}}{\rho}{\mu}{b}{\highlight{\mathpc}}{\highlight{\PubVars}}{\highlight{\PubArrs}}}{\OAWrite{\cvar{a}}{i}}{\step}{\fsidealstate{\cskip}{\rho}{\subst{\cvar{a}[i]}{v}{\mu}}{b}{\highlight{\mathpc}}{\highlight{\PubVars}}{\highlight{\subst{\cvar{a}}{\PubVars(\cvar{a}) \sqcup \mathpc \sqcup \ell_\cexpr{ie} \sqcup \labelarith{\PubVars}{\cexpr{ae}}}{\PubArrs}}}}
}
   \]
   \[
  \inferrule*[left=Ideal\_Write\_Force]
  {\highlight{\cancel{\labelarith{\PubVars}{\cexpr{ie}}}} \\
  %\highlight{\cancel{\labelarith{\PubVars}{\cexpr{ae}} \vee \PubArrs = \lambda \_. \LabelSecret}} \\\\
   %% \highlight{\cancel{\labelarith{\PubVars}{\cexpr{ae}} = \ldots}} \\\\
   i = \eval{\cexpr{ie}}{\rho} \\
   v = \eval{\cexpr{ae}}{\rho} \\  % TODO: v -> n for consistency/ease?
   i \ge | \cvar{a} |_{\mu} \\
   j < | \cvar{b} |_{\mu}
  }
  {
    \idealstep{\fsidealstate{\highlight{\acawrite{\cvar{a}}{\cexpr{ie}}{\cexpr{ae}}{\BoolTrue}}}{\rho}{\mu}{\BoolTrue}{\highlight{\mathpc}}{\highlight{\PubVars}}{\highlight{\PubArrs}}}{\OAWrite{\cvar{a}}{i}}{\DStore{\cvar{b}}{j}}{\fsidealstate{\cskip}{\rho}{\subst{\cvar{b}[j]}{v}{\mu}}{\BoolTrue}{\highlight{\mathpc}}{\highlight{\PubVars}}{\highlight{\subst{\cvar{a}}{\PubArrs(\cvar{a}) \sqcup \mathpc \sqcup \labelarith{\PubVars}{\cexpr{ae}}}{\PubArrs}}}}
}
  \]
  \[
  \inferrule*[left=Ideal\_Seq\_Skip]
  {\highlight{\mathit{terminal} ~ \annotated{\cexpr{c_1}}}}
  {\idealstep{\fsidealstate{\highlight{\acseq{\annotated{\cexpr{c_1}}}{{\annotated{\cexpr{c_2}}}}{\PubVars'}{\PubArrs'}}}{\rho}{\mu}{b}{\highlight{\mathpc}}{\highlight{\PubVars}}{\highlight{\PubArrs}}}{\bullet}{\bullet}{\fsidealstate{\annotated{\cexpr{c_2}}}{\rho}{\mu}{b}{\highlight{\pcofacom{\annotated{\cexpr{c_1}}}{\mathpc}}}{\highlight{\PubVars}}{\highlight{\PubArrs}}}}
  \]
  \[
    \highlight{
    \inferrule*[left=Ideal\_Branch]
    {
      \idealstep{\fsidealstate{{\annotated{\cexpr{c}}}}{\rho}{\mu}{b}{\mathpc}{\PubVars}{\PubArrs}}{o}{d}{\fsidealstate{{\annotated{\cexpr{c'}}}}{\rho'}{\mu'}{b'}{\mathpc'}{\PubVars'}{\PubArrs'}}
    }
    {
      \idealstep{\fsidealstate{\acbranch{\ell}{\annotated{\cexpr{c}}}}{\rho}{\mu}{b}{\mathpc}{\PubVars}{\PubArrs}}{o}{d}{\fsidealstate{\acbranch{\ell}{\annotated{\cexpr{c'}}}}{\rho'}{\mu'}{b'}{\mathpc'}{\PubVars'}{\PubArrs'}}
    }
  }
  \]
  \caption{Ideal semantics for \FsFlexvSLH}
  \label{fig:ideal-semantics-flowsensitivevslh-full}
  \jb{someone please proofread these for me?}
  \jb{this actually still isn't all rules}
\end{figure*}
\autoref{fig:ideal-semantics-flowsensitivevslh-full} shows the full ideal semantics for \FsFlexvSLH.
Highlighted components still indicate changes relative to the ideal semantics for \FlexvSLH (\autoref{fig:ideal-semantics-fvslh}).
% ....
% \FloatBarrier
% \twocolumn
\else
\fi % appendices

% BIBLIOGRAPHY ALWAYS AT THE END!
\ifieee
\bibliographystyle{abbrvnaturl}
%\footnotesize
\else %acm
\ifcamera
\bibliographystyle{ACM-Reference-Format}
\citestyle{acmauthoryear}   %% For author/year citations
\else
\bibliographystyle{abbrvnaturl}
\fi

\fi %ieee/acm

\bibliography{spec}
% BIBLIOGRAPHY ALWAYS AT THE END!

\end{document}